\documentclass[11pt]{article}
\usepackage[margin=1in]{geometry}
\usepackage[utf8]{inputenc}
\usepackage{fullpage} 
\usepackage{amsmath,amsthm,amsfonts,amssymb}
\usepackage{thm-restate}
\usepackage{graphicx}
\usepackage{colortbl}
\usepackage[pagebackref]{hyperref}
\hypersetup{
    bookmarksnumbered=true, 
    unicode=false, 
    pdfstartview={FitH}, 
    pdftitle={Title}, 
    pdfauthor={Authors}, 
    pdfsubject={}, 
    pdfcreator={}, 
    pdfproducer={}, 
    pdfkeywords={}, 
    pdfnewwindow=true, 
    colorlinks=true, 
    linkcolor=blue, 
    citecolor=blue, 
    filecolor=blue, 
    urlcolor=blue 
}
\newcommand{\eprint}[1]{\textup{\texttt{\href{http://arxiv.org/abs/#1}{#1}}}}

\renewcommand{\backref}[1]{}

\renewcommand{\backrefalt}[4]{%
\ifcase #1 %
\or
[p.\ #2]%
\else
[pp.\ #2]%
\fi}


\let\oldproofname=\proofname
\renewcommand{\proofname}
    {\upshape\bfseries{\oldproofname}}



\newtheorem{theorem}{Theorem}
\newtheorem{definition}[theorem]{Definition}
\newtheorem{corollary}[theorem]{Corollary}
\newtheorem{lemma}[theorem]{Lemma}
\newtheorem{conjecture}[theorem]{Conjecture}

  {\innerthmnum}
  {\endinnerthmnum}


\newcommand{\eq}[1]{\hyperref[eq:#1]{(\ref*{eq:#1})}}
\renewcommand{\sec}[1]{\hyperref[sec:#1]
    {Section~\ref*{sec:#1}}}
\newcommand{\thm}[1]{\hyperref[thm:#1]
    {Theorem~\ref*{thm:#1}}}
\newcommand{\lem}[1]{\hyperref[lem:#1]{Lemma~\ref*{lem:#1}}}
\newcommand{\prop}[1]{\hyperref[prop:#1]
    {Proposition~\ref*{prop:#1}}}
\newcommand{\cor}[1]{\hyperref[cor:#1]
    {Corollary~\ref*{cor:#1}}}
\newcommand{\fig}[1]{\hyperref[fig:#1]{Figure~\ref*{fig:#1}}}
\newcommand{\tab}[1]{\hyperref[tab:#1]{Table~\ref*{tab:#1}}}
\newcommand{\alg}[1]{\hyperref[alg:#1]
    {Algorithm~\ref*{alg:#1}}}
\newcommand{\app}[1]{\hyperref[app:#1]
    {Appendix~\ref*{app:#1}}}
\newcommand{\conj}[1]{\hyperref[conj:#1]
    {Conjecture~\ref*{conj:#1}}}

\newcommand{\B}{\{0,1\}}
\newcommand{\OR}{\textsc{OR}}
\newcommand{\Simon}{\textsc{Simon}}

\newcommand{\CC}{\textsc{CC}}

\DeclareMathOperator{\D}{D}
\DeclareMathOperator{\R}{R}
\DeclareMathOperator{\Q}{Q}
\DeclareMathOperator{\C}{C}
\DeclareMathOperator{\RC}{RC}
\DeclareMathOperator{\QC}{QC}
\DeclareMathOperator{\bs}{bs}
\DeclareMathOperator{\Dom}{Dom}

\DeclareMathOperator{\Hi}{H}
\DeclareMathOperator{\Bal}{Bal}
\DeclareMathOperator{\polylog}{polylog}
\DeclareMathOperator*{\E}{\mathbb{E}}
\DeclareMathOperator{\Mar}{Mar}


\title{Sculpting Quantum Speedups}
\author{
    Scott Aaronson\\
    \small\texttt{MIT}\\
    \small\texttt{aaronson@csail.mit.edu}
    \and
    Shalev Ben-David\\
    \small\texttt{MIT}\\
    \small\texttt{shalev@mit.edu}
}
\date{}

\begin{document}

\pagenumbering{gobble}

\maketitle

\begin{abstract}
Given a problem which is intractable for both
quantum and classical algorithms, can we find a sub-problem
for which quantum algorithms provide an exponential
advantage?
We refer to this problem as the ``sculpting problem.''
In this work, we give a full characterization of
sculptable functions in the query complexity setting.
We show that a total function $f$
can be restricted to a
promise $P$ such that $\Q(f|_P)=O(\polylog N)$
and $\R(f|_P)=N^{\Omega(1)}$,
if and only if $f$ has a large number of
inputs with large certificate complexity.
The proof uses some interesting techniques:
for one direction, we introduce new relationships
between randomized and quantum query complexity
in various settings, and for the other direction,
we use a recent result from communication complexity
due to Klartag and Regev.
We also characterize sculpting for
other query complexity measures, such as
$\R(f)$ vs.\ $\R_0(f)$ and $\R_0(f)$ vs.\ $\D(f)$.

Along the way, we prove some new relationships
for quantum query complexity: for example, a nearly
quadratic relationship between $\Q(f)$ and $\D(f)$
whenever the promise of $f$ is small.
This contrasts with the recent super-quadratic
query complexity separations,
showing that the maximum gap between
classical and quantum query complexities
is indeed quadratic in various settings -- just not
for total functions!

Lastly, we investigate sculpting in the Turing machine
model. We show that if there is any
$\mathsf{BPP}$-bi-immune
language in $\mathsf{BQP}$, then \emph{every}
language outside
$\mathsf{BPP}$ can be restricted to a promise which
places it in $\mathsf{PromiseBQP}$ but not in
$\mathsf{PromiseBPP}$.
Under a weaker assumption,
that some problem in $\mathsf{BQP}$
is hard on average for $\mathsf{P/poly}$,
we show that every
\emph{paddable} language outside $\mathsf{BPP}$
is sculptable in this way.
\end{abstract}

\newpage

\pagenumbering{arabic}

\section{Introduction}
When are quantum algorithms useful?
In general, quantum algorithms are believed to provide
exponential speedups for certain structured problems,
such as factoring \cite{Sho97}, but
not for unstructured problems like $\mathsf{NP}$-complete
problems.

In this work, we ask the question in a new way.
Given a problem for which quantum algorithms are not
useful, can we nevertheless find a sub-problem on which
they provide an exponential advantage over classical
algorithms?
We call this the ``sculpting'' question:
our goal is to sculpt the original intractable problem
into a sub-problem that's still classically intractable,
but for which there exists a fast quantum algorithm.
The sculpting question arises, for example, in adiabatic
quantum computation: while it is not believed that
adiabatic quantum computing can solve \textsc{NP}-complete
problems in polynomial time, a widely discussed question
is whether there is a sub-problem of \textsc{SAT} on which
adiabatic computing provides an exponential advantage
over classical algorithms.

We study the sculpting question primarily in the query
complexity model.
The utility of the model comes from its relative
tractability: for example, in query complexity,
Shor's period finding algorithm
provides a provable exponential
speedup over any classical algorithm
\cite{Cle04}.

In query complexity, we're given a
(possibly partial) function $f:\B^N\to\B$
and an oracle access to a string $x\in\B^N$.
The goal is to evaluate $f(x)$ using
as few oracle calls to the entries of $x$ as possible.
The minimum number of queries required by an algorithm
for computing $f(x)$ (over the worst-case choice of $x$)
is the query complexity of $f$. If the algorithm
in question is deterministic, we denote this by
$\D(f)$; if it is (bounded error) randomized,
we denote this by
$\R(f)$; and if it is (bounded error) quantum,
we denote it by $\Q(f)$.

In this query complexity setting,
the sculpting question can be phrased as follows:
given a total function $f:\B^N\to\B$
for which $\R(f)$ and $\Q(f)$ are both large
(say, $N^{\Omega(1)}$), is there
a promise $P\subseteq\B^N$ such that
$f|_P$, the restriction of $f$ to $P$,
has $\Q(f|_P)=O(\polylog N)$ and $\R(f|_P)=N^{\Omega(1)}$?

For example, if $f$ is the $\OR$ function,
such sculpting is not possible, as follows from
\cite{Aar06}. As another example, if $f$ is defined
to be $1$ when Simon's condition is satisfied and $0$
otherwise, then sculpting is possible: the promise
will simply restrict to inputs that either satisfy
Simon's condition or are far from satisfying it;
this promise suffices for an exponential quantum speedup
\cite{BFNR08}.

We fully characterize the functions $f$
for which such a promise exists. In particular,
we show that sufficiently ``rich'' functions, such as
\textsc{Parity} or \textsc{Majority}, are sculptable.
Our sculpting construction
uses communication complexity in a novel way. In
the other direction, to prove non-sculptability, we
prove new query complexity relationships.
As a corollary, we get nearly quadratic relationships between
classical and quantum query complexities for
a wider class of functions than previously known.

\subsection*{Results}

\subsubsection*{$\Hi$-indices}

We introduce a new query complexity measure,
$\Hi(\C_f)$, defined as the maximum number $h$
for which there are $2^h$ inputs to $f$
with certificate complexity at least $h$.
We call this the H-index of certificate
complexity (motivated by the citation H-index sometimes
used to measure research productivity \cite{H05}).
This quantity
measures the number of inputs there are to a function
$f$ that have large certificate complexity.
We prove various properties of $\Hi(\C_f)$;
most notably, we show that
for total functions, it is nearly quadratically
related to $\Hi(\bs_f)$, the H-index of block
sensitivity. This is analogous to the quadratic
relationship between $\C$ and $\bs$.

\subsubsection*{Sculpting in Query Complexity}

Our main result is the following theorem, which
neatly characterizes sculptability in the query complexity
model in terms of the H-index of certificate
complexity.

\begin{theorem}\label{thm:main0}
Let $f:\B^N\to\B$ be a total function.
Then there is a promise $P\subseteq\B^N$ such that
$\R(f|_P)=N^{\Omega(1)}$ and $\Q(f|_P)=N^{o(1)}$,
if and only if $\Hi(\C_f)=N^{\Omega(1)}$.
Furthermore, in this case we also have
$\Q(f|_P)=O(\log^2 N)$.
\end{theorem}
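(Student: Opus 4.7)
The plan is to prove both directions of the equivalence.

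For the ``if'' direction, suppose $\Hi(\C_f) \geq h$ with $h = N^{\Omega(1)}$, so there exist $2^h$ inputs of $f$ each with certificate complexity at least $h$. The idea is to use these inputs to embed a hard communication problem---specifically, the Vector in Subspace Problem $\VSP$ on dimension roughly $h$, for which Klartag and Regev have shown a randomized communication lower bound of $h^{\Omega(1)}$ while the quantum communication complexity is only $O(\log h)$. The large certificates give us enough coordinate ``room'' in the query string $x$ for Alice and Bob to each encode their portion of a $\VSP$ instance, so that $f(x)$ equals the $\VSP$ answer; the promise $P$ is defined to consist of exactly those $x$ arising from valid $\VSP$ inputs. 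Converting the quantum communication protocol into a quantum query algorithm (simulating each round of $O(\log N)$-bit communication with $O(\log N)$ queries) then yields $\Q(f|_P) = O(\log^2 N)$, while the Klartag--Regev lower bound gives $\R(f|_P) = N^{\Omega(1)}$ since any randomized query algorithm for $f|_P$ immediately yields a randomized communication protocol for $\VSP$.

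For the ``only if'' direction, suppose $\Hi(\C_f) = N^{o(1)}$ and let $P \subseteq \B^N$ be an arbitrary promise. I would show that $\R(f|_P)$ is polynomially bounded in terms of $\Q(f|_P)$ by splitting $P$ into inputs of small certificate complexity (at most $\Hi(\C_f)$) versus large. By definition of the $\Hi$-index, there are at most $2^{\Hi(\C_f)} = N^{o(1)}$ inputs in the large-certificate part; this qualifies as a ``small promise'' in the sense of the paper's other results, so the nearly quadratic $\Q$ vs.\ $\D$ bound for small promises applies there. On the small-certificate part, short certificates let a randomized algorithm verify candidate outputs with only polynomial blow-up over the quantum algorithm. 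Combining the two cases shows that $\R(f|_P) \leq \polylog(N) \cdot \Q(f|_P)^{O(1)}$, ruling out the polynomial randomized lower bound that sculpting requires.

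The main obstacle I anticipate is the encoding step in the ``if'' direction. The assumption $\Hi(\C_f) \geq h$ is purely combinatorial: it guarantees many inputs with large certificates but gives no geometric structure. Extracting from this a clean reduction from $\VSP$ requires showing that one can identify two large coordinate blocks playing the roles of Alice's and Bob's inputs, with the certificates witnessing enough independence between them that the promise faithfully encodes a $\VSP$ instance. A secondary challenge is hitting the stated $O(\log^2 N)$ quantum upper bound rather than a generic polylog: this requires that the round-by-round simulation be tight, with each $O(\log N)$-bit communication round implemented by an $O(\log N)$-query Grover-type subroutine.
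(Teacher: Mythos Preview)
Your overall shape is right—VSP via Klartag--Regev for sculptability, and a structural argument for non-sculptability—but both directions have genuine gaps relative to what is actually needed.

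In the ``if'' direction, you correctly flag the encoding as the obstacle, but your proposed fix (find two coordinate blocks for Alice and Bob with ``independence'' witnessed by certificates) will not go through: $f$ is fixed, and there is no reason its value on any direct bit-encoding of a $\VSP$ instance should equal the $\VSP$ answer. The paper's mechanism is quite different. First, one must pass from $\Hi(\C_f)$ to $\Hi(\RC_f)$ via a nontrivial lemma ($\Hi(\C_f)=O(\Hi(\bs_f)^2\log N)$); this is where a large share of the work lies and your plan omits it entirely. Second, Sauer--Shelah applied to the high-$\RC_f$ inputs yields a \emph{single} shattered coordinate set $B$; Bob's $\VSP$ input is written in $B$, while Alice's input merely selects which marginal $g$ of $\VSP$ is being computed. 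Third---and this is the idea you are missing---the promise $P$ is not the set of encodings. For each relevant $x$, if $f(x)=g^\phi(x)$ one puts $x$ in $P$; otherwise one puts in $P$ the support of a distribution $\mu_x'$ over inputs $y$ with $f(y)\ne f(x)$ that agree with $x$ on $B$ and are randomly indistinguishable from $x$. Such $\mu_x'$ exists precisely because $\RC_f(x)$ (not merely $\C_f(x)$) is large. This forces $f(y)=g(y|_B)$ for all $y\in P$, so the quantum algorithm simply computes $g$ on the $B$-bits; the $O(\log^2 N)$ comes from an amplitude-preparation encoding of Bob's vector, not Grover. The randomized lower bound is then a hybrid: any fast randomized algorithm for $f|_P$ either solves $g^\phi$ against its hard distribution or distinguishes some $x$ from $\mu_x'$, both expensive. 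Your direct query-to-communication reduction does not apply, because $P$ is not a product encoding of the two parties' inputs.

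In the ``only if'' direction, splitting $P$ into small- and large-certificate parts does not combine cleanly (the algorithm does not know which part the input is in), and ``short certificates let a randomized algorithm verify candidate outputs'' is too vague---you have no source of candidates. The paper instead runs a deterministic preprocessing that repeatedly queries small consistent $0$-certificates; this reduces to a subset $S$ on which $f|_S$ is highly \emph{unbalanced} (at most $2^{\Hi(\C_f^2)}$ inputs on one side), and then applies the relation $\R(g)=O(\Q(g)^2\Bal(g))$ to $g=f|_{S\cap P}$.
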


This theorem follows as an immediate corollary
of the following more general characterization theorem.

\begin{restatable}{theorem}{main}\label{thm:main}
For all total functions $f:\B^N\to\B$ and
all promises $P\subseteq\B^N$, we have
\[\R(f|_P)=O(\Q(f|_P)^2\Hi(\C_f)^2).\]
Conversely, for all total functions
$f:\B^N\to\B$, there is
a promise $P\subseteq\B^N$ such that
\[\R(f|_P)=\Omega\left(\frac{\Hi(\C_f)^{1/6}}
                {\log^{13/6} N}\right)\qquad
                \mbox{and}\qquad
\Q(f|_P)=O(\log^2\Hi(\C_f)).\]
\end{restatable}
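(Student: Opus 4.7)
The plan is to prove the two directions of the theorem separately, both exploiting an H-index splitting of the promise $P = P_{\text{lo}} \cup P_{\text{hi}}$, where $P_{\text{lo}} = \{x \in P : \C(f, x) \leq h\}$ and $P_{\text{hi}} = P \setminus P_{\text{lo}}$ contains at most $2^h$ elements (with $h = \Hi(\C_f)$). For the upper bound $\R(f|_P) = O(q^2 h^2)$, where $q = \Q(f|_P)$, I would first establish a new partial-function inequality of the form $\R(g) = O(\Q(g)^2 \cdot C(g))$, where $C(g)$ bounds the certificate complexity over $\Dom(g)$: the randomized algorithm amplifies the quantum algorithm $A$ to produce a candidate answer $b$ and then runs a randomized certificate search for a $C(g)$-sized certificate of value $b$. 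Applied to $f|_{P_{\text{lo}}}$, this yields $\R(f|_{P_{\text{lo}}}) = O(q^2 h)$. For $f|_{P_{\text{hi}}}$, since $|P_{\text{hi}}| \leq 2^h$, I would amplify $A$ to error $2^{-O(h)}$ at a cost of $O(qh)$ queries so that the amplified algorithm succeeds simultaneously on every input of $P_{\text{hi}}$, and then use this amplified algorithm to bound $\R(f|_{P_{\text{hi}}}) = O(q^2 h^2)$ (the extra factor of $h$ coming from the union bound over the $2^h$ candidates). The two subroutines are glued by running both and selecting the answer confirmed by a randomly-sampled small certificate.

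For the lower bound, I would invoke the recent Klartag--Regev communication-complexity separation for the $\textsc{Vector-in-Subspace}$ problem $F$, which has $\RC(F) = n^{\Omega(1)}$ while $\QC(F) = O(\log n)$, and embed an encoding of $F$ into $f$. The $2^h$ high-certificate inputs of $f$ each carry a minimal certificate of size at least $h$, and a pigeonhole / clustering step lets me pass to a sub-collection of size $2^{\Omega(h)}$ whose certificates are supported on a common set of $\Theta(h)$ coordinates; this shared support can then encode the bits of $F$. The promise $P$ is the set of inputs corresponding to valid $\textsc{Vector-in-Subspace}$ instances, and the standard communication-to-query reduction (partitioning the oracle bits between Alice and Bob) transfers the Klartag--Regev randomized lower bound to $\R(f|_P) = \Omega(h^{1/6}/\log^{13/6} N)$, while directly simulating the quantum protocol by queries to the oracle yields $\Q(f|_P) = O(\log^2 h)$.

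The main obstacle, in my view, is the pigeonhole / clustering step in the lower bound: the $2^h$ high-certificate inputs come with arbitrary certificates, and extracting a $2^{\Omega(h)}$ sub-collection with enough shared structure to host a $\textsc{Vector-in-Subspace}$ instance is delicate; it is precisely this loss that determines the exponent $1/6$ (rather than $1/2$) in the final randomized bound, and tracking it through the communication-to-query reduction will require some care. A secondary obstacle is the gluing step in the upper bound: because the randomized algorithm cannot tell \emph{a priori} whether $x \in P_{\text{lo}}$ or $x \in P_{\text{hi}}$, the verifier must robustly reject the wrong candidate answer across both regimes without blowing up the total error, which forces careful tuning of the amplification parameters in each subroutine.
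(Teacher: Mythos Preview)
Both directions of your plan have genuine gaps.

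\textbf{Upper bound.} Your key step is the inequality $\R(g)=O(\Q(g)^2\cdot C(g))$, proved by having ``the randomized algorithm amplify the quantum algorithm $A$ to produce a candidate answer $b$.'' A randomized algorithm cannot run or amplify a quantum algorithm; this is not a valid classical procedure. The same issue recurs for $P_{\mathrm{hi}}$: amplifying $A$ to error $2^{-O(h)}$ produces a better \emph{quantum} algorithm, not a randomized one, so the union bound over $2^h$ inputs yields no bound on $\R$. The paper never runs the quantum algorithm. Instead it first proves, for arbitrary partial $g$, that $\R(g)=O(\Q(g)^2\Bal(g))$: the only way $\Q(g)$ enters is through Aaronson's relation $\RC=O(\QC^2)\le O(\Q^2)$, which bounds the \emph{classical} cost of distinguishing a single string $a$ from one side of $g$; a majority-vote elimination over the smaller side then halves the live candidate set at each round. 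The H-index enters via a separate deterministic preprocessing: repeatedly querying consistent certificates of size $\le\sqrt{\Hi(\C_f^2)}$ drives the remaining instance into one with $\Bal\le\Hi(\C_f^2)+1$, after which the $\R=O(\Q^2\Bal)$ bound applies.

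\textbf{Lower bound.} Two ideas are missing. First, your ``pigeonhole/clustering'' to find $2^{\Omega(h)}$ inputs whose certificates share a common $\Theta(h)$-sized support does not go through: there are $\binom{N}{h}\gg 2^h$ possible supports, and in any case common certificate support gives no randomized hardness. What the paper actually needs is many inputs with large \emph{randomized} certificate complexity, because the promise $P$ is built from hard distributions $\mu_x$ (inputs of the opposite $f$-value that agree with $x$ on each coordinate with probability $\ge 1-1/\RC_f(x)$); without large $\RC$, there is no way to force a classical algorithm to pay. Passing from $\Hi(\C_f)$ to $\Hi(\RC_f)$ is a substantial step (the paper's Theorems on $\Hi(\C_f)\le O(\Bal(f)\log N)$ and $\Hi(\C_f)\le O(\Hi(\bs_f)^2\log N)$), and it is \emph{this} quadratic loss, composed with the Klartag--Regev $n^{1/3}$, that produces the exponent $1/6$. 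Second, the paper does not look at certificate supports at all; it applies Sauer--Shelah to the $2^{\Hi(\RC_f)}$ high-$\RC$ inputs to obtain a \emph{shattered} index set $B$ of size $\Omega(\Hi(\RC_f)/\log N)$, so that any partial function on $\{0,1\}^{|B|}$ can be embedded. The Vector-in-Subspace marginals are then planted on $B$, and the communication lower bound transfers via an ``extended queries'' reduction rather than a direct Alice/Bob split of the oracle.
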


We also prove an analogous theorem
for $\D$ vs.\ $\R_0$,
showing that the same $\Hi(\C_f)=N^{\Omega(1)}$
condition also characterizes sculpting $\D(f)$
vs.\ $\R_0(f)$. On the other hand, we show
that sculpting $\R_0(f)$ vs.\ $\R(f)$ is \emph{always}
possible: for every total function $f$
with $\R_0(f)=N^{\Omega(1)}$, there is a promise $P$
such that $\R_0(f|_P)=N^{\Omega(1)}$ and
$\R(f|_P)=O(1)$.


\subsubsection*{Query Complexity on Small Promises}

On the way to proving \thm{main}, we
prove the following theorem, providing
a quadratic relationship between $\Q(f)$ and $\D(f)$
when the domain of $f$ is small.
This provides a ironic twist to the query complexity
story: for a long time, it was believed that $\D(f)$
and $\Q(f)$ are quadratically related when the domain
of $f$ is very large (in particular, for total functions).
This conjecture was recently disproven by
\cite{ABB+15} (who showed a $\D(f)\sim\Q(f)^4$ separation)
and by \cite{ABK15}
(who showed an $f$ such that $\R(f)\sim\Q(f)^{2.5}$).
Instead, we now show
that the quadratic relationship holds when the domain of
$f$ is very \emph{small}.

\begin{restatable}{theorem}{small}\label{thm:small}
Let $f:\B^N\to\B$ be a partial function, and let
$\Dom(f)$ denote the domain of $f$. Then
\[\Q(f)=
\Omega\left(\frac{\sqrt{\D(f)}}{\log|\Dom(f)|}\right).\]
\end{restatable}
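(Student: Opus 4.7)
The plan is to prove the equivalent contrapositive $\D(f) = O(\Q(f)^2 \log^2 M)$, where $M := |\Dom(f)|$ and $Q := \Q(f)$, via the polynomial method combined with a polynomial-to-decision-tree conversion tailored to small domains.

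First, I would amplify the bounded-error quantum algorithm for $f$ by $O(\log M)$-fold majority-vote repetition to obtain an algorithm of cost $Q' = O(Q \log M)$ whose error on every $x \in \Dom(f)$ is at most $1/(3M)$. By the Beals--Buhrman--Cleve--Mosca--de~Wolf polynomial method, this amplified algorithm yields a real multilinear polynomial $p$ of degree $d \le 2Q' = O(Q \log M)$ with $|p(x)-f(x)| \le 1/(3M)$ for every $x \in \Dom(f)$. In particular $\sum_{x \in \Dom(f)}|p(x)-f(x)| < 1$, so $p$ recovers $f$ exactly on $\Dom(f)$ via thresholding at $1/2$, and moreover any union bound over $\Dom(f)$ is ``free.''

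Next, I would convert $p$ into a deterministic decision tree of depth $O(d^2) = O(Q^2 \log^2 M)$. The plan is a recursive descent maintaining a surviving subset $T \subseteq \Dom(f)$ consistent with the answers so far, together with a progress measure such as $\Phi(T) := \max_{x,y \in T}|p(x)-p(y)|$. Whenever $\Phi(T) < 1/2$ the $1/(3M)$-accuracy of $p$ forces $f$ to be constant on $T$ and the tree terminates; otherwise the aim is to exploit the degree-$d$ polynomial structure of $p$ to locate a coordinate $i \in [N]$ whose discrete partial derivative $p|_{x_i=1} - p|_{x_i=0}$ has magnitude $\Omega(1/d)$ somewhere on $T$, so that querying $i$ shrinks $\Phi$ by a multiplicative factor on both branches. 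A careful accounting of the two-sided progress would then give $O(d^2)$ queries overall.

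The main obstacle is the polynomial-to-decision-tree step. The small-domain hypothesis must be used crucially, since for general partial functions no such conversion holds: the promise OR on $\{0^N,e_1,\ldots,e_N\}$ has an exact polynomial of degree $1$ and yet $\D = N$. The hypothesis enters only through the $1/(3M)$-error amplification, which forces $p$ not merely to interpolate $f$ on $\Dom(f)$ but to be a globally well-behaved real polynomial, so that a Markov--Bernstein style bound on its discrete partial derivatives can deliver an informative coordinate at every step. Locating a coordinate whose query makes progress on \emph{both} branches, rather than only the branch the algorithm would ideally take, and extracting exactly the quadratic rather than a higher power of $d$, is where the main technical effort lies.
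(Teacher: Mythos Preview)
Your approach diverges from the paper's, and its central step is a genuine gap rather than a routine detail.

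The paper does not touch the polynomial method. It reuses the halving scheme from its proof that $\R(f)=O(\Q(f)^2\Bal(f))$, but derandomized: maintain the set $Z$ of $0$-inputs still consistent with the queries made so far (assuming $|f^{-1}(0)|\le|f^{-1}(1)|$), let $a$ be the coordinatewise majority of $Z$, and find $b\in\{0,1\}$ with $\RC(f_{a,b})=O(\Q(f)^2)$ as in Lemma~\ref{lem:RvsQlemma}. Now invoke Lemma~\ref{lem:CvsRC}, which gives $\C_{f_{a,b}}(a)\le\RC_{f_{a,b}}(a)\cdot(1+\log|f^{-1}(b)|)=O(\Q(f)^2\log M)$; for a one-versus-many problem the certificate complexity of the distinguished point equals its deterministic query complexity, so querying those $O(\Q(f)^2\log M)$ certificate bits either certifies $f(x)=1-b$ outright or exposes a disagreement with $a$, which halves $Z$. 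After at most $\log M$ rounds this yields $\D(f)=O(\Q(f)^2\log^2 M)$ with no polynomial-method detour.

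In your plan, everything up to obtaining a bounded degree-$d$ polynomial $p$ with $d=O(Q\log M)$ is standard; the theorem is then entirely contained in the unproven conversion of $p$ into a depth-$O(d^2)$ decision tree for $f|_{\Dom(f)}$. Your $\Phi(T)$/Markov--Bernstein sketch does not supply this. A Markov-type inequality produces a point (generally not Boolean, let alone in $T$) at which some $\partial_i p$ is not too small; it says nothing about whether querying $x_i$ shrinks $\max_{x,y\in T}|p(x)-p(y)|$ multiplicatively on \emph{both} branches $T\cap\{x_i=0\}$ and $T\cap\{x_i=1\}$. If $T$ contains $0$-type and $1$-type points on each side of every coordinate, $\Phi$ need not drop after any single query, and nothing in your outline rules this configuration out or shows it cannot persist for many rounds. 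You yourself flag this as ``where the main technical effort lies,'' but no mechanism is offered; as written the proposal reduces the theorem to an assertion at least as hard as the theorem itself.
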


\subsubsection*{Query Complexity for Unbalanced Functions}

We show two relationships similar to \thm{small}
that hold for functions whose domain is large,
but which are unbalanced: they contain very few
$0$-inputs compared to $1$-inputs, or vice versa.

\begin{restatable}{theorem}{bal}\label{thm:bal}
Let $f:\B^N\to\B$ be a partial function.
Define the measure $\Bal(f)\in[0,N]$ as $\Bal(f):=1+\min\{\log|f^{-1}(0)|,\log|f^{-1}(1)|\}$
(or $0$ if $f$ is constant). Then
\[\R(f)=O(\Q(f)^2\Bal(f))\]
\[\D(f)=O(\R_0(f)\Bal(f)).\]
A similar polynomial relationship between $\R_0(f)$
and $\R(f)$ does not hold in general.
\end{restatable}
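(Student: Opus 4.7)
The plan is to handle the three claims separately. For the first two, we may WLOG assume $|f^{-1}(0)| \leq |f^{-1}(1)|$ and set $M := |f^{-1}(0)|$, so $\Bal(f) = 1+\log M$.

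\textbf{The bound $\D(f)=O(\R_0(f)\,\Bal(f))$.} The strategy is to derandomize the Las Vegas algorithm via Markov's inequality and a union bound restricted to the small side $f^{-1}(0)$. View the $\R_0$-witness as a distribution over always-correct deterministic decision trees of expected depth $\leq\R_0(f)$; by Markov, each halts within $2\R_0(f)$ queries on a fixed input with probability $\geq 1/2$. Sample $k=\lceil\log M\rceil+O(1)$ trees independently: for each fixed $y\in f^{-1}(0)$, the chance that all $k$ exceed $2\R_0(f)$ queries on $y$ is $\leq 2^{-k}$, so a union bound over only the $M$ zero-inputs (crucially, not over the whole domain) yields fixed trees $T_1^*,\dots,T_k^*$ such that for every $y\in f^{-1}(0)$ some $T_i^*$ halts within $2\R_0(f)$ queries. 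The deterministic algorithm runs all $k$ trees in parallel, truncated at $2\R_0(f)$ queries each, returns the answer of any tree that halts, and defaults to $1$ otherwise. Correctness: 0-inputs are caught by the union bound (and any halting tree is correct by the zero-error property); on 1-inputs, either some $T_i^*$ halts (returning $1$ correctly) or the default $1$ is safe. Total queries: $O(k\,\R_0(f))=O(\R_0(f)\,\Bal(f))$.

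\textbf{The bound $\R(f)=O(\Q(f)^2\,\Bal(f))$.} The plan mirrors the previous one, lifted to the BQP-to-BPP level. First amplify the $\Q(f)$-query quantum algorithm to error $\leq 1/(10M^2)$ on every input at cost $O(\Q(f)\log M)$ queries; the resulting acceptance polynomial $p$ has degree $O(\Q(f)\log M)$, with $p\leq 1/(10M^2)$ on $f^{-1}(0)$ and $p\geq 1-1/(10M^2)$ on $f^{-1}(1)$. The main obstacle is the second step: extract a randomized algorithm whose cost is only \emph{quadratic} in $\Q(f)$, rather than the cubic loss one incurs from generic quantum-to-classical conversions via $\D\leq\mathrm{deg}^{O(1)}$. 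The plan is to exploit the smallness of $f^{-1}(0)$ in conjunction with \thm{small}, which already delivers a tight quadratic relationship on small domains. Specifically, apply \thm{small} to well-chosen restrictions of $f$ of domain size $O(M)$ to obtain a classical $O(\Q(f)^2\log^2 M)$-query subroutine for distinguishing 0-inputs from a small set of ``representative'' 1-inputs, and then leverage the amplified polynomial $p$---which is essentially $1$ on all of $f^{-1}(1)$---as the safety certificate that a default ``output $1$'' is correct whenever the small-domain classical subroutine fails to confirm a 0-input. The hard part of this step is showing that no extra $\log M$ factor is incurred in combining the small-domain reduction with the outer union bound over $f^{-1}(0)$.

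\textbf{No polynomial relation between $\R_0$ and $\R$.} I'll give an explicit counterexample. Define $f:\B^N\to\B$ by $f(0^N)=0$, $f(x)=1$ for every $x$ of Hamming weight $\geq N/3$, and $f$ undefined elsewhere. A randomized algorithm querying a constant number of random positions and outputting $1$ iff any queried bit is $1$ is correct with probability $1$ on $0^N$ and with probability $\geq 1-(2/3)^{O(1)}$ on any heavy input, so $\R(f)=O(1)$. On the other hand, on input $0^N$ a zero-error algorithm must certify $f=0$, which requires ruling out weight $\geq N/3$ and hence querying strictly more than $2N/3$ positions; so $\R_0(f)=\Omega(N)$. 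Since $\Bal(f)=1+\log 1=1$, no polynomial $P$ can satisfy $\R_0(f)\leq P(\R(f),\Bal(f))$, establishing the claim.
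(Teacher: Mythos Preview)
Your arguments for the second and third claims are fine. The non-adaptive derandomization for $\D(f)=O(\R_0(f)\Bal(f))$ (sample $k\approx\log M$ trees, union bound over $f^{-1}(0)$, default to $1$) is a clean variant of the paper's adaptive halving argument, and your explicit gap-Hamming example for $\R_0$ vs.\ $\R$ works.

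The gap is in the first claim, $\R(f)=O(\Q(f)^2\Bal(f))$. What you have written is not a proof but a plan with an explicitly acknowledged hole (``The hard part of this step is showing that no extra $\log M$ factor is incurred''). Two concrete problems. First, invoking \thm{small} is circular in this paper: \thm{small} is itself proved by following the proof of \thm{RvsQ}, which is precisely the statement you are trying to establish. Second, even treating \thm{small} as a black box, it yields $O(\Q(f)^2\log^2|\Dom|)$ on a size-$M$ restriction, i.e.\ $O(\Q(f)^2\Bal(f)^2)$, and your initial amplification to error $1/(10M^2)$ already costs $O(\Q(f)\log M)$, so squaring also gives the extra $\log M$. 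Nothing in your outline indicates how to shave this factor, and it is genuinely where the work lies.

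The paper's approach is different in kind. The key lemma is that for \emph{any} string $a\in\{0,1\}^N$ (not necessarily in the domain), a randomized algorithm can distinguish $a$ from at least one of $f^{-1}(0)$ or $f^{-1}(1)$ in $O(\Q(f)^2)$ queries; this follows from running the quantum algorithm on $a$ and then applying Aaronson's $\RC=O(\QC^2)$ to the resulting one-vs-many problem. The algorithm maintains a feasible set $Z\subseteq f^{-1}(0)$, takes $a$ to be the coordinatewise majority of $Z$, and runs the $O(\Q(f)^2)$-query distinguisher; whenever it finds a coordinate where the input disagrees with $a$, $Z$ halves. This gives $O(\Q(f)^2\Bal(f)\log\Bal(f))$ immediately; the final $\log$ is removed by a global amplification budget (capping the total number of amplification rounds across all iterations at $O(\Bal(f))$ and arguing via Chernoff). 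The majority-vote trick and the one-vs-many reduction are the ideas missing from your sketch.
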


\subsubsection*{New Relationship for Total Functions}

We prove the following new query complexity relationship
for total functions, generalizing the known relationship
$\D(f)=O(\Q(f)^2\C(f))$.

\begin{restatable}{theorem}{tot}\label{thm:tot}
Let $f:\B^N\to\B$ be a total function. Then
\[\D(f)=O(\Q(f)^2\Hi(\sqrt{\C_f})^2).\]
\end{restatable}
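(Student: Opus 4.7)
Set $h := \Hi(\sqrt{\C_f})$ and $B := \{x \in \B^N : \C(f,x) > h^2\}$; by the definition of the H-index, $|B| < 2^{h+1}$, and every $x \notin B$ has $\C(f,x) \leq h^2$. I would construct a deterministic decision tree of depth $O(\Q(f)^2 h^2)$ for $f$ by combining two subroutines, one tailored to the ``low-certificate'' piece $\B^N \setminus B$ and one to the ``small-domain'' piece $B$.

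On $\B^N \setminus B$, Nisan's relation $\D(g) \leq \C(g) \cdot \bs(g)$ applied to $g := f|_{\B^N \setminus B}$, combined with $\C(g) \leq h^2$ and $\bs(g) \leq \bs(f) = O(\Q(f)^2)$ (Beals--Buhrman--Cleve--Mosca--de Wolf), yields a decision tree $T_1$ of depth $O(\Q(f)^2 h^2)$ correctly computing $f$ there. On $B$, Theorem~\ref{thm:small} applied to $g' := f|_B$ (whose domain has size at most $2^{h+1}$) yields a decision tree $T_2$ of depth $O(\Q(g')^2 \log^2 |B|) = O(\Q(f)^2 h^2)$ correctly computing $f$ there.

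To combine $T_1$ and $T_2$ into a single deterministic tree of the same asymptotic depth, I would run $T_1$ first, and at each leaf $\ell$ with output $v_\ell$ attach a continuation subtree obtained by invoking Theorem~\ref{thm:small} on the $\leq 2^{h+1}$ $B$-inputs consistent with $\ell$, pinned to a representative non-$B$ consistent input with value $v_\ell$. The main obstacle is ensuring correctness on \emph{every} $x \in \B^N$: the continuation subtree is guaranteed correct on its designated domain, but may give arbitrary output on other consistent non-$B$ inputs whose true value is also $v_\ell$. The likely remedy is to iterate the small-domain reduction a constant number of times---halving the set of still-ambiguous $B$-candidates at each stage until all remaining non-$B$ consistent inputs are forced to be confirmed---which preserves the $O(\Q(f)^2 h^2)$ depth bound. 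Executing this stitching step cleanly, without incurring extra log factors, is the technical heart of the proof.
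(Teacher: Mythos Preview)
Your two-phase decomposition (handle low-certificate inputs via the certificate/block-sensitivity bound, then handle the at most $2^{h}$ high-certificate inputs via \thm{small}) is exactly the paper's, but both the construction of $T_1$ and the stitching step have real gaps that the paper avoids by a small change of viewpoint.

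First, the relation $\D(g)\le\C(g)\cdot\bs(g)$ is a theorem about \emph{total} functions; the block-sensitivity step in its proof does not transfer to an arbitrary partial $g$, because flipping a queried block of a $1$-input $y\in\Dom(g)$ to match a $0$-certificate may land outside $\Dom(g)$, so one cannot conclude $\bs_g(y)\ge k$ after $k$ rounds. Invoking it as a black box on $g=f|_{\B^N\setminus B}$ is not justified.

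Second, the paper sidesteps your stitching problem entirely by \emph{not} treating Phase~1 as a black-box tree for $g$. Instead it runs the Beals et al.\ certificate-picking procedure on the total function $f$, but only ever selects consistent $0$-certificates (and then $1$-certificates) of size at most $h^2$. Because these are $f$-certificates, the block-sensitivity argument goes through with respect to $\bs(f)$, capping each pass at $\bs(f)$ rounds and hence $\bs(f)\cdot h^2=O(\Q(f)^2 h^2)$ queries. The payoff is that if both passes get stuck without finding a certificate, then \emph{no} $f$-certificate of size $\le h^2$ is consistent with the revealed bits, so every still-consistent input has $\C_f>h^2$ and therefore lies in $B$. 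Thus at each leaf of Phase~1 the algorithm either already knows $f(x)$ or knows with certainty that $x\in B$; there is no leaf where $B$-inputs and non-$B$ inputs are both live. The handoff to \thm{small} is then immediate---apply it to the at most $2^{h}$ consistent elements of $B$---and your iterated-halving remedy is unnecessary. In short, opening up the Beals et al.\ proof (rather than quoting it) simultaneously fixes the partial-function issue and eliminates the stitching problem.
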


Here $\Hi(\sqrt{\C_f})$ denotes the H-index
of the square root of certificate complexity; this
is the maximum number $h$ such that there are at least
$2^h$ inputs to $f$ for which $\sqrt{\C(f)}$
is at least $h$.
We note that $\Hi(\sqrt{\C_f})^2\leq\C(f)$
for all total functions, so this is an improvement
over the relationship $\D(f)=O(\Q(f)^2\C(f))$.
Moreover, when $f=\OR$, we have $\Hi(\sqrt{\C_f})^2=1$
and $\C(f)=N$, so this improvement is strict.

We remark that this result could let us improve the
relationship $\D(f)=O(\Q(f)^6)$ if we could show
$\Hi(\sqrt{\C_f})^2=o(\Q(f)^4)$.
\thm{tot} therefore provides a new approach for this
long-standing open problem.


\subsubsection*{Sculpting in the Turing Machine Model}

In \sec{TM}, we examine sculpting in the
Turing machine model. We say that a language $L$
is \emph{sculptable} if there is a promise set $P$
such that the promise problem of deciding if an input
from $P$ is in $L$ is in $\mathsf{P{}romiseBQP}$
but not in $\mathsf{P{}romiseBPP}$.
We prove two sculptability theorems, both of them
providing evidence that most or all languages
outside of $\mathsf{BPP}$ are sculptable.

\begin{restatable}{theorem}{pad}\label{thm:pad}
Assume $\mathsf{P{}romiseBQP}$ is hard on average
for $\mathsf{P/poly}$. 
Then every paddable language outside of $\mathsf{BPP}$
is sculptable.
\end{restatable}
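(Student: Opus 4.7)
The plan is to combine paddability of $L$ with a hard-on-average $\mathsf{PromiseBQP}$ problem $A$ to build a promise that encodes $L(x)$ in a classically inextractable way. Fix such an $A$ together with its hard distribution $\mathcal{D}$; by replacing $A$ with $A'(z,b) := A(z) \oplus b$ and sampling $b$ uniformly, I may assume without loss of generality that $A$ is balanced on $\mathcal{D}$. Using the padding function $\mathrm{pad}$ for $L$, I define the promise
\[ P := \{\mathrm{pad}(x,z) : z \in \mathrm{supp}(\mathcal{D}_{m(N)}),\; A(z) = L(x)\}, \]
where $|x|$ and $m(N)$ are chosen to be polynomially related in the total input length $N$.

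For the $\mathsf{PromiseBQP}$ upper bound, the quantum algorithm parses its input as $\mathrm{pad}(x,z)$, runs the $\mathsf{BQP}$ algorithm for $A$ on $z$, and outputs $A(z)$; by paddability of $L$ together with the promise, $L(\mathrm{pad}(x,z)) = L(x) = A(z)$, so the answer is correct.

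For the $\mathsf{PromiseBPP}$ lower bound, I would suppose for contradiction that some $\mathcal{C}$ decides $(L,P)$ in $\mathsf{BPP}$, amplified so its error on promise inputs is at most a small $\epsilon$. For any $x$, set
\[ r(x) := \Pr_{z,\mathcal{C}}[\mathcal{C}(\mathrm{pad}(x,z)) = L(x)] \quad\text{and}\quad s(x) := \Pr_{z,\mathcal{C}}[\mathcal{C}(\mathrm{pad}(x,z)) = A(z)]. \]
The key observation is the complementarity identity $r(x) + s(x) \geq 3/2 - \epsilon$: when $z$ is in the promise (probability $1/2$, by balance of $A$), both quantities coincide with $\Pr[\mathcal{C}=L(x)] \geq 1-\epsilon$, while when $z$ is out of the promise, $L(x) \neq A(z)$ forces $\Pr[\mathcal{C}=L(x)] + \Pr[\mathcal{C}=A(z)] = 1$. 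Hence $\max(r(x),s(x)) \geq 3/4 - \epsilon/2$ for every $x$.

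I then split into two cases. If $s(x) \geq 3/4 - \epsilon/2$ for some $x$ at infinitely many input lengths, then hardwiring this $x$ as non-uniform advice gives a $\mathsf{P/poly}$ predictor for $A$ with constant advantage on $z \sim \mathcal{D}$, contradicting the assumed average-case hardness of $A$. Otherwise $r(x) \geq 3/4 - \epsilon/2$ for all sufficiently long $x$, and the $\mathsf{BPP}$ algorithm ``on input $x$, sample $z \sim \mathcal{D}$ and return $\mathcal{C}(\mathrm{pad}(x,z))$'' decides $L$ with advantage bounded away from $1/2$, placing $L \in \mathsf{BPP}$ and contradicting the hypothesis. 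The main obstacle here is that $\mathcal{C}$ may behave adversarially on out-of-promise inputs; the complementarity identity is what finesses this, guaranteeing that averaged over random $z$ either the ``$L$-reduction'' or the ``$A$-reduction'' must succeed regardless of the out-of-promise behavior.
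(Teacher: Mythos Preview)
Your proposal is correct and takes essentially the same approach as the paper. Your ``balancing'' trick $A'(z,b)=A(z)\oplus b$ is exactly the paper's appended bit $a$ satisfying $L(x)\equiv H(y)+a\pmod 2$, so the promise constructions coincide; the only difference is in the lower-bound analysis, where the paper runs the assumed $\mathsf{BPP}$ algorithm on both $f(x,y,0)$ and $f(x,y,1)$ and branches on whether the outputs agree, whereas you sample one at random and invoke the complementarity identity $r(x)+s(x)\ge 3/2-\epsilon$ --- the two arguments encode the same dichotomy.
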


\begin{restatable}{theorem}{immune}\label{thm:immune}
Assume there exists a $\mathsf{BPP}$-bi-immune
language in $\mathsf{BQP}$. Then every language
outside of $\mathsf{BPP}$ is sculptable.
\end{restatable}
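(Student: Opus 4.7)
The plan is to use the natural promise $P = \{\langle x, y \rangle : L(x) = B(y)\}$, where $B \in \mathsf{BQP}$ is the hypothesized $\mathsf{BPP}$-bi-immune language. The containment $L|_P \in \mathsf{PromiseBQP}$ is immediate: on any input $\langle x, y\rangle \in P$, the quantum algorithm ignores $x$, runs the $\mathsf{BQP}$ algorithm for $B$ on $y$, and outputs the result, which by the promise equals $L(x)$.

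For the lower bound $L|_P \notin \mathsf{PromiseBPP}$ I would argue by contradiction: suppose $A$ is a $\mathsf{BPP}$ algorithm solving the promise problem, and derive $L \in \mathsf{BPP}$. For each fixed $x$, the slice $y \mapsto A(x,y)$ is itself a $\mathsf{BPP}$ algorithm (with $x$ hardcoded), so $S_x := \{y : A(x,y)=1\}$ is a $\mathsf{BPP}$ language. If $x \in L$, then for every $y \in B$ the pair $\langle x,y\rangle$ lies in $P$, forcing $A$ to accept; hence $B \subseteq S_x$, i.e.\ $\overline{S_x} \subseteq \overline{B}$. Since $\overline{S_x}$ is a $\mathsf{BPP}$ language and $\overline{B}$ is $\mathsf{BPP}$-immune, $\overline{S_x}$ must be finite. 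Symmetrically, if $x \notin L$, then $S_x$ is a finite $\mathsf{BPP}$ subset of $B$. Hence for every $x$, $A(x, y) = L(x)$ for all but finitely many $y$.

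The main obstacle is upgrading this pointwise ``eventually constant'' statement into a uniform $\mathsf{BPP}$ algorithm for $L$. Bi-immunity alone gives only that the exception set is finite for each $x$, not that its elements' lengths are polynomially bounded in $|x|$; thus a naive ``evaluate $A(x, y)$ at a random $y$ of length $\mathrm{poly}(|x|)$'' strategy need not succeed. I expect the fix to come from packaging the per-$x$ exception sets into a single uniform $\mathsf{BPP}$ object: if there were no polynomial length bound $q$ with the property that $A(x,y)=L(x)$ whenever $|y|\ge q(|x|)$, one could enumerate ``error'' pairs $(x,y)$ where $A$ is constrained by the promise but disagrees with $B(y)$, and splice them (using a canonical choice of $x$ for each $y$) into a single infinite $\mathsf{BPP}$ subset of $\overline{B}$ or $B$, contradicting bi-immunity. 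Once such a uniform $q$ is established, the algorithm $A'(x) := A(x, y^*(|x|))$ for any canonical $y^*(|x|)$ of length $q(|x|)$ (or a majority vote over random such $y$'s) is a $\mathsf{BPP}$ procedure computing $L(x)$, contradicting $L \notin \mathsf{BPP}$ and completing the proof.
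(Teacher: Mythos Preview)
Your construction has a basic type error. A promise for $L$ is a subset $P \subseteq \{0,1\}^*$, and $L|_P$ asks, given $z \in P$, to decide $L(z)$. In your setup $z = \langle x, y \rangle$, so the task is to compute $L(\langle x, y \rangle)$, not $L(x)$; for an arbitrary language these bear no relation to one another. Your $\mathsf{BQP}$ algorithm outputs $B(y) = L(x)$, which need not equal $L(\langle x, y \rangle)$, so you have not established $L|_P \in \mathsf{PromiseBQP}$ (and the lower-bound half has the same mismatch). What you have written is essentially the scheme behind the \emph{paddable} case (Theorem~\ref{thm:pad}), where a padding function $f$ guarantees $L(f(x,y)) = L(x)$; without paddability the pairing trick is simply unavailable. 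Separately, you yourself flag the uniformity step as a gap and offer only a hoped-for fix; that concern is real, but secondary to the error above.

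The paper's argument avoids both problems by proceeding differently. It does not exhibit a single promise; instead it assumes $L$ is non-sculptable and derives a contradiction. The key promise is $S = \{x : L(x) = H(x)\}$, taken directly on inputs to $L$ with no auxiliary string. Both $L|_S$ and $L|_{\overline{S}}$ lie in $\mathsf{PromiseBQP}$ (compute $H(x)$, flipping on $\overline{S}$), so non-sculptability forces both into $\mathsf{PromiseBPP}$ via algorithms $\mathcal{A}_S$ and $\mathcal{A}_{\overline{S}}$. On any $x$, run both: if the outputs agree, the common value is $H(x)$; if they disagree, $\mathcal{A}_S(x)$ equals $L(x)$. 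The set $Q$ of $x$ on which they agree is itself in $\mathsf{BPP}$. If $Q$ is finite then $L \in \mathsf{BPP}$; if $Q$ is infinite then $H|_Q \in \mathsf{PromiseBPP}$ with $Q \in \mathsf{BPP}$, contradicting bi-immunity via Lemma~\ref{lem:biimmune}. No uniformity issue arises because nothing is quantified over an auxiliary $y$.
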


For the definitions of paddability and bi-immunity,
see \sec{TM}. These theorems assume very little about
$\mathsf{BQP}$ and $\mathsf{BPP}$, and analogous statements
hold for other pairs of complexity classes.


\section{Preliminaries}
\label{sec:prelim}

For a (possibly partial) function $f:\B^N\to\B$,
we use $\D(f)$, $\R_0(f)$, $\R(f)$, and $\Q(f)$
to denote the deterministic query complexity,
zero-error randomized query complexity,
bounded-error randomized query complexity,
and bounded-error quantum query complexity of $f$,
respectively. For the definitions of these measures,
see \cite{BdW02}.

A partial assignment is a string $p$ in $\{0,1,*\}^N$
that represents partial knowledge of a string in $\B^N$.
For $x\in\B^N$, we say that $p$ is a partial assignment
of $x$ if $x$ extends $p$; that is, if $x$ and $p$ agree
on all the non-$*$ entries of $p$. A partial assignment
$p$ for $x$ is called a \emph{certificate} for $x$ if all
strings that extend $p$ have that same value under $f$
as $x$; that is, if for all $y\in\B^N$ that extend
$p$, we have $f(y)=f(x)$.
The certificate
complexity of $f$ on input $x$, denoted by $\C_f(x)$,
is the minimum number of bits in any certificate of $x$
with respect to $f$. The certificate complexity of $f$,
denoted by $\C(f)$, is defined as the maximum of
$\C_f(x)$ over all strings $x$ in the domain of $f$.

The certificate complexity $\C_f(x)$ can be thought of
as the deterministic query complexity of $f$ given
the promise that the input is either $x$ or else
an input $y$ such that $f(x)\neq f(y)$.
Motivated by this observation, Aaronson \cite{Aar06} defined the
\emph{randomized certificate complexity} of $x$, denoted
by $\RC_f(x)$, to be the 
(bounded-error) randomized query complexity
of $f$ on this promise. He defined the quantum
certificate complexity $\QC_f(x)$ analogously.
As with $\C$, we use $\RC(f)$ to denote the maximum
of $\RC_f(x)$ over all $x$ in the domain of $f$,
and define $\QC(f)$ similarly.

For any string $x\in\B^N$ and set of bits $B$,
denote by $x^B$ the string $x$ with the bits in $B$
flipped. For any $f:\B^N\to\B$, if $f(x)\neq f(x^B)$,
we say that $B$ is a \emph{sensitive block} for $x$
with respect to $f$. The \emph{block sensitivity}
of $x$, denoted by $\bs_f(x)$, is the maximum number
of disjoint sensitive blocks for $x$.
Note that the block sensitivity is the packing number
of the collection of sensitive blocks of $x$.
It can be seen that $\C_f(x)$ can be interpreted
as the hitting number of that collection
(the minimum number of bits required to hit all the blocks).
Moreover, $\RC_f(x)$ is simply the fractional packing number
(which equals the fractional hitting number).
In other words, $\RC_f(x)$ can be interpreted as the
fractional block sensitivity (or fractional certificate
complexity). These observations are implicit in
\cite{Aar06}, and were made explicit in \cite{KT13}.

Actually, the fractional block sensitivity
differs by a constant factor
from Aaronson's original definition of $\RC$.
In this work we will use $\RC$ to denote the fractional
block sensitivity.
Another property of $\RC$ that we will need is
that $1/\RC_f(x)$ is equal
to the minimum infinity-norm distance between $x$
and the convex hull of the set of inputs $y$
such that $f(y)\neq f(x)$. That is, if $f(x)=0$
and $S=f^{-1}(1)$, we have
\[\frac{1}{\RC_f(x)}=\min_{\mu\in\Delta_{S}}\max_i
    \Pr_{y\sim \mu}[x_i\neq y_i],\]
where $\Delta_S$ is the set of all probability distributions
over $S$ (equivalently, the convex hull of $S$).
In particular, this minimum is attained, so there is
a probability distribution $\mu$ over $f^{-1}(1)$
such that for all $i=1,2,\ldots,n$, if we sample
$y\sim\mu$ we get $\Pr[y_i\neq x_i]\leq 1/\RC_f(x)$.

Clearly, for all $f:\B^N\to\B$ we have
\[\left\{\QC(f),\bs(f)\right\}\leq\RC(f)\leq
    \left\{\C(f),\R(f)\right\}\leq \R_0(f)\leq \D(f),\]
with $\Q(f)$ lying between $\QC(f)$ and $\R(f)$.
Aaronson \cite{Aar06} showed that
$\RC_f(x)=\Theta(\QC_f(x)^2)$ for all $f$ and $x$,
so in particular, $\RC(f)=\Theta(\QC(f)^2)$.
In addition, 
when $f$ is total, we can relate all these measures to
each other: we have
\[\D(f)=O(\bs(f)^3)=O(\RC(f)^3)=O(\QC(f)^6)=O(\Q(f)^6).\]

%
%

\subsection*{Balance and H Indices}

We will use $\Dom(f)$ to denote the domain of a partial
function $f$. We define $\Bal(f)$ to be $0$
if $f$ is constant, and otherwise,
to be the minimum of $1+\log|f^{-1}(0)|$ and
$1+\log|f^{-1}(1)|$ (we use $\log$ to denote logarithm
base $2$). Note that since 
$|f^{-1}(0)|+|f^{-1}(1)|=|\Dom(f)|\leq 2^N$,
we have $\Bal(f)\leq N$. Thus $\Bal(f)\in[0,N]$.

We will use a new set of query complexity
measures called H-indices (the name is motivated
by the H-index measure of citations, a common
metric for research output). For
a given function $g:\B^N\to[0,\infty)$,
we will define the H-index of $g$,
denoted by $\Hi(g)$, as the maximum
number $h$ such that there are at least $2^h$ inputs
with $g(x)\geq h$. Alternatively, the H-index
of $g$ can be defined as the minimum number $h$
such that there are at most $2^h$ inputs with $g(x)>h$.
It is not obvious that these definitions
are equivalent (or even that the minimum and
maximum are attained); we prove this in \app{H}.

Note that $\Hi(g)\in[0,N]$, and
$\Hi(g)\leq\max_x g(x)$. Also, if $g(x)\geq g^\prime(x)$
for all $x\in\B^N$, we have
$\Hi(g)\geq\Hi(g^\prime)$.

We'll primarily
be interested in measures like $\Hi(\C_f)$,
$\Hi(\RC_f)$, and $\Hi(\bs_f)$. We have
$\Hi(\C_f)\leq \C(f)$, $\Hi(\RC_f)\leq\RC(f)$,
and $\Hi(\bs_f)\leq\bs(f)$.
We also have
\[\Hi(\bs_f)\leq\Hi(\RC_f)\leq\Hi(C_f).\]
The H-index of certificate complexity can be much
smaller than the certificate complexity itself.
For example, the $\OR$ function has only one certificate
of size greater than $1$, so
$\Hi(\C_{\OR})=1$, even though $\C(\OR)=n$.

In \app{H}
we show that if $\alpha:[0,\infty)\to[0,\infty)$
is an increasing function, then
\[\Hi(\alpha\circ g)\leq\max\{\Hi(g),\alpha(\Hi(g))\}.\]
In particular, this will imply $\Hi(\C_f^2)\leq\Hi(\C_f)^2$.

\subsection*{Shattering and the Sauer-Shelah Lemma}

For a set of indices $A\subseteq\{1,2,\ldots,N\}$,
let $S|_A\subseteq \B^{|A|}$ be the set of restrictions
of each string in $S$ to the indices in $A$.
We say $A$ is shattered by $S$ if $S|_A=\B^{|A|}$.
In other words, $A$ is shattered by $S$
if $S$ has all possible behaviors on $A$.
The Sauer-Shelah lemma \cite{Sau72,She72}
is a classic result that upper-bounds the size of $S$
in terms of the size of $A$. We will use
the following corollary of it.

\begin{restatable}{lemma}{sauer}
\label{lem:sauer}
Let $S\subseteq\B^N$ be a collection of strings.
Then there is a shattered set of indices
of size at least
\[\frac{\log |S|}{\log (N+1)}.\]
\end{restatable}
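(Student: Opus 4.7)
The plan is to deduce this from the classical Sauer-Shelah lemma, which bounds the size of a set family in terms of its VC dimension. Specifically, I would invoke (or re-prove) the following standard statement: if $S \subseteq \B^N$ shatters no set of indices of size $d+1$, then $|S| \leq \sum_{i=0}^{d} \binom{N}{i}$. This is the version of Sauer-Shelah one typically proves by a shifting/compression argument or by induction on $N+d$; since the paper cites \cite{Sau72, She72}, I would just take the bound as given.

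The next step is to replace the binomial sum by a cleaner quantity. The estimate I want is
\[\sum_{i=0}^{d}\binom{N}{i} \leq (N+1)^d.\]
The cleanest combinatorial justification is an encoding argument: every subset of $\{1,\ldots,N\}$ of size at most $d$ can be written as a sequence $(a_1,\ldots,a_d) \in \{0,1,\ldots,N\}^d$, where $0$ is a padding symbol used when the subset has fewer than $d$ elements, and distinct subsets give distinct sorted encodings. Hence the total count is at most $(N+1)^d$.

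Combining these two facts, let $h$ be the size of the largest shattered set. Then $|S| \leq (N+1)^h$, which rearranges to $h \geq \log|S|/\log(N+1)$, exactly the conclusion of the lemma.

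The only mildly delicate point is that the lemma as stated asserts the existence of a shattered set of size \emph{at least} $\log|S|/\log(N+1)$, which need not be an integer, whereas shattered sets have integer sizes; but since $h$ is an integer and $h \geq \log|S|/\log(N+1)$ follows from $|S| \leq (N+1)^h$, this is automatic. I do not expect any real obstacle: the argument is a one-line deduction from Sauer-Shelah plus the elementary bound $\sum_{i\leq d}\binom{N}{i}\leq (N+1)^d$.
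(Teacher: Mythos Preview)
Your proposal is correct and follows the same overall shape as the paper's proof: both invoke the Sauer--Shelah bound $|S|\le\sum_{i=0}^{d}\binom{N}{i}$ where $d$ is the size of the largest shattered set, and then upper-bound the partial binomial sum.

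The difference is in the estimate used for that sum. The paper uses the entropy bound $\sum_{i\le d}\binom{N}{i}\le 2^{\mathcal H(d/N)N}$ and massages it into $\log|S|\le d\log N$ for $d\ge 3$; it then has to treat $d\in\{0,1,2\}$ by hand, with the case $d=1$, $|S|=N+1$ being the one that forces the $N+1$ in the denominator. Your encoding bound $\sum_{i\le d}\binom{N}{i}\le (N+1)^d$ gives $|S|\le (N+1)^d$ in one shot and hence $d\ge \log|S|/\log(N+1)$ with no case analysis. This is more elementary and slightly cleaner; the paper's route has the minor advantage that along the way it actually produces the stronger bound $d\ge \log|S|/\log N$ whenever $d\ge 3$, which is what justifies the paper's later use of $\log|S|/(2\log N)$ in place of $\log|S|/\log(N+1)$.
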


\lem{sauer} follows straightforwardly from the
Sauer-Shelah lemma, as we prove in \app{sauer}.
We will often use the weaker bound
$\frac{\log |S|}{2\log N}$ instead, which holds for
$N\geq 2$. This will sometimes lead to simpler formulas.

\section{Non-Sculptability Theorems}
\label{sec:nosculpt}

In this section, we prove the non-sculptability direction
of \thm{main}. The proof has two parts: in
\sec{bal}, we prove a relationship between randomized
and quantum query complexities for
``unbalanced'' functions,
and in \sec{H}, we use this to prove a sculpting
lower bound in terms of the H-index of
certificate complexity.

\subsection{Query Complexity for Unbalanced Functions}
\label{sec:bal}

We wish to show a nearly-quadratic relationship
between randomized and quantum query complexities
for functions $f$ for which $\Bal(f)$ is small.
Note that this is a generalization of the relation
$\RC_f(x)=O(\Q_f(x)^2)$ from \cite{Aar06}.
That is, \cite{Aar06} showed that
for the task of distinguishing one input from
a (possibly large) set of alternatives, randomized
and quantum algorithms are quadratically related.
We want a similar relationship for the task
of distinguishing a \emph{small set} of inputs
from a (possibly large) set of alternatives.

We start with the following lemma.

\begin{lemma}\label{lem:RvsQlemma}
Let $f:\B^N\to\B$ be a partial function.
For $a\notin f^{-1}(0)$, let $f_{a,0}$ be the problem
of distinguishing $a$ from $f^{-1}(0)$.
That is, $f_{a,0}$ is the function
$f_{a,0}:\{a\}\cup f^{-1}(0)\to\{0,1\}$
with $f(x)=1$ iff $x=a$.
For $a\notin f^{-1}(1)$, define $f_{a,1}$ analogously.
Then for all $a\in\{0,1\}^N$, we have either
$\R(f_{a,0})=O(\Q(f)^2)$ or $\R(f_{a,1})=O(\Q(f)^2)$.

Note that this holds even when $a$ is not
in the promise of $f$. The constant in the big-$O$
notation is a universal constant independent of $a$, $f$,
and $N$.
\end{lemma}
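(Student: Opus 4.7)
The plan is a three-step reduction: (i) convert the given quantum algorithm for $f$ into a quantum algorithm that solves one of the two subproblems $f_{a,0}$ or $f_{a,1}$; (ii) invoke Aaronson's quadratic relation $\RC_g(x)=\Theta(\QC_g(x)^2)$ from \cite{Aar06}; and (iii) exploit the fact that for the ``one-vs-many'' promise problem $f_{a,b}$, the randomized certificate complexity of $a$ literally \emph{equals} the randomized query complexity of the whole function.

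For step (i), I fix a bounded-error quantum algorithm $A$ for $f$ using $\Q(f)$ queries. Although $a$ need not lie in the domain of $f$, we can still simulate $A$ on input $a$; let $p_a = \Pr[A(a)=1]$. If $p_a \ge 1/2$ then necessarily $a \notin f^{-1}(0)$ (otherwise $A$ would err on $a$ itself), so $f_{a,0}$ is well-defined. Here $A$ accepts $a$ with probability at least $1/2$ while accepting every $y \in f^{-1}(0)$ with probability at most $1/3$; this constant gap of $1/6$ can be boosted to the textbook $2/3$ vs.\ $1/3$ via $O(1)$ parallel repetitions and a suitable threshold, yielding a bounded-error quantum algorithm for $f_{a,0}$ with $O(\Q(f))$ queries. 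If instead $p_a < 1/2$, then by the same reasoning $a \notin f^{-1}(1)$ and a symmetric argument gives a bounded-error quantum algorithm for $f_{a,1}$ with $O(\Q(f))$ queries. In either case we obtain $\Q(f_{a,b}) = O(\Q(f))$ for the appropriate $b \in \{0,1\}$.

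For steps (ii) and (iii), notice that by the definitions recalled in the preliminaries, $\QC_{f_{a,b}}(a) = \Q(f_{a,b})$ and $\RC_{f_{a,b}}(a) = \R(f_{a,b})$, since the domain of $f_{a,b}$ is already exactly $\{a\} \cup \{y : f_{a,b}(y) \neq f_{a,b}(a)\}$. Plugging into Aaronson's quadratic relation at $x = a$ then gives
\[\R(f_{a,b}) \;=\; \RC_{f_{a,b}}(a) \;=\; O\!\left(\QC_{f_{a,b}}(a)^2\right) \;=\; O(\Q(f_{a,b})^2) \;=\; O(\Q(f)^2),\]
with a hidden constant inherited from Aaronson's theorem and from the constant-depth amplification, hence universal in $a$, $f$, and $N$. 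The only delicate point to track is that the initial gap of $1/6$ is somewhat smaller than the textbook $1/3$; since it is nonetheless an absolute constant, a constant number of parallel runs still suffices, so no new ideas are needed and the main work of the proof has been packaged into the two prior results we are invoking.
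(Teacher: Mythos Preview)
Your proof is correct and follows essentially the same three-step argument as the paper: run the quantum algorithm for $f$ on $a$ to get a constant-gap distinguisher for one of $f_{a,0}$ or $f_{a,1}$, then invoke Aaronson's $\RC=\Theta(\QC^2)$ relation, and finally use that for a one-vs-many problem $\RC$ coincides with $\R$. The only cosmetic differences are your explicit remark that $f_{a,b}$ is well-defined in the relevant case and your spelling out of the constant-factor amplification, both of which the paper leaves implicit.
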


\begin{proof}
Let $Q$ be the quantum algorithm that achieves $\Q(f)$
quantum query complexity in determining the value of
$f$ on a given input. When run on any $a\in f^{-1}(0)$,
$Q$ will output $0$ with probability at least $2/3$,
and when run on $a\in f^{-1}(1)$, it will output
$1$ with probability at least $2/3$.

Consider running $Q$ on an input $a\notin \Dom(f)$.
Then $Q$ will output $0$ with some probability $p$
and output $1$ with probability $1-p$.
If $p\geq 1/2$, then $Q$ distinguishes $a$ from
$f^{-1}(1)$ with constant probability.
If $p\leq 1/2$, then $Q$ distinguishes $a$ from
$f^{-1}(0)$ with constant probability.
Thus for all $a\in\{0,1\}^N$, we have either
$\Q(f_{a,0})=O(\Q(f))$ or $\Q(f_{a,1})=O(\Q(f))$.
From \cite{Aar06},
we have $\RC(g)=O(\QC(g)^2)=O(\Q(g)^2)$
for all functions $g$,
so we conclude that either $\RC(f_{a,0})=O(\Q(f)^2)$
or $\RC(f_{a,1})=O(\Q(f)^2)$.

Finally, note that for a problem of
distinguishing one input from the rest,
randomized query complexity equals
randomized certificate complexity.
Thus we get that
for all $a\in\{0,1\}^N$, either
$\R(f_{a,0})=O(\Q(f)^2)$ and
or $\R(f_{a,1})=O(\Q(f)^2)$.
\end{proof}

We're now ready to prove the desired relationship
between $\R$ and $\Q$.

\begin{theorem}\label{thm:RvsQ}
    Let $f:\B^N\to\{0,1\}$ be a partial function. Then
    \[\R(f)=O(\Q(f)^2\Bal(f)).\]
\end{theorem}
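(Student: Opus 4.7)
The plan is to prove the slightly stronger statement $\R(f)=O(\RC(f)\cdot \Bal(f))$; combining with $\RC(f)=O(\Q(f)^2)$ from \sec{prelim} then yields the theorem. Assume without loss of generality that $|f^{-1}(0)|\leq|f^{-1}(1)|$ and set $m=|f^{-1}(0)|$, so $\Bal(f)=1+\log m$. By the fractional-block-sensitivity characterization of $\RC_f$ recalled in \sec{prelim}, each $a\in f^{-1}(0)$ comes equipped with a distribution $\mu_a$ on $[N]$ under which every $y\in f^{-1}(1)$ disagrees with $a$ with probability at least $\Omega(1/\Q(f)^2)$. Sampling $\Theta(\Q(f)^2)$ independent indices from $\mu_a$ and querying $x$ on them yields a randomized distinguisher $A_a$ of cost $O(\Q(f)^2)$ that accepts $a$ and rejects every $y\in f^{-1}(1)$ with constant error.

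The nontrivial work is combining the $m$ distinguishers $\{A_a\}$ into a single randomized algorithm for $f$ with only a $\log m$ overhead, rather than the naive factor of $m$. I would do this via multiplicative weights: initialize $w_a=1$ for each $a\in f^{-1}(0)$; for $T=O(\log m)$ rounds, sample $a$ with probability proportional to $w_a$, run $A_a$ amplified to error $1/\mathrm{poly}(T)$, maintain a running record of the bits of $x$ queried so far, and halve $w_{a'}$ for every $a'\in f^{-1}(0)$ that disagrees with $x$ on any queried bit. Output $0$ as soon as some round's $A_a$ accepts, and output $1$ if no round accepts. The total query cost is $T\cdot O(\Q(f)^2\log T)=O(\Q(f)^2\Bal(f))$, with the amplification polylog absorbed by sharper parameter choices.

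Correctness is easy when $x\in f^{-1}(1)$: by the defining property of $\mu_a$, each round's $A_a$ rejects $x$ with constant probability, halving $w_a$, and a standard MWU potential argument on $\log\sum_a w_a$ gives geometric decay to $O(1)$ within $O(\log m)$ rounds, with amplification controlling the union bound over false accepts. The main obstacle is the case $x=a^*\in f^{-1}(0)$, where $w_{a^*}$ is never halved and I must show that some round's $A_a$ accepts $x$ within $T=O(\log m)$ rounds---either because $a^*$ itself is eventually sampled (as $w_{a^*}/\sum_a w_a$ grows under the halvings of inconsistent $a'$), or because some $A_a$ incidentally accepts $x$ on its sampled indices (which still produces the correct output $0$). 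Tracking the asymmetric potential $\log(\sum_a w_a)-\log w_{a^*}$ and obtaining an $\Omega(1)$ drop per round is the delicate part, and may require a slightly regularized sampling distribution or a variant of MWU tailored to the one-versus-many promise structure of the subproblems produced by \lem{RvsQlemma}.
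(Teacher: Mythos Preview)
Your multiplicative-weights plan has a genuine gap, and it lies exactly where you suspect. The claimed geometric decay of the potential does not hold: sampling $a$ proportionally to $w_a$ and (upon rejection) halving the weights of inconsistent candidates yields, in the worst case, an expected drop in $\log\sum_a w_a$ of only $\Theta(1/m)$ per round, not $\Omega(1)$. Concretely, nothing prevents an instance in which, for every $a\neq a^*$, the tester $A_a$ run on $x=a^*$ finds a disagreement on a bit where all the other $a'\in f^{-1}(0)$ happen to agree with $a^*$; then each round rules out at most the single sampled candidate, and you need $\Theta(m)$ rounds before $a^*$ is sampled with constant probability. Amplification and ``regularization'' of the sampling distribution do not address this, because the testers $A_a$ are tied to the geometry of $\mu_a$ relative to $f^{-1}(1)$ and carry no guarantee whatsoever about separating $a^*$ from the \emph{other} $0$-inputs.

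The paper resolves the $0$-input case by replacing sampling with a \emph{coordinatewise majority vote}: take $a$ to be the bitwise majority of the current consistent set $Z\subseteq f^{-1}(0)$. Then any queried bit on which $x$ disagrees with $a$ immediately eliminates at least half of $Z$, so $\log|Z|$ drops by $1$ per round deterministically, and $O(\log m)=O(\Bal(f))$ rounds suffice. The price is that this majority string $a$ need not lie in $\Dom(f)$, so one cannot invoke $\RC_f(a)$ to build the tester. That is precisely the role of \lem{RvsQlemma}: by running the quantum algorithm for $f$ on the out-of-domain point $a$, one extracts a cheap randomized distinguisher of $a$ from $f^{-1}(b)$ for some $b\in\{0,1\}$, of cost $O(\Q(f)^2)$. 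This reliance on the quantum algorithm at arbitrary strings is exactly why the bound is stated with $\Q(f)^2$ rather than $\RC(f)$; your stronger target $\R(f)=O(\RC(f)\,\Bal(f))$ does not follow from this line of argument, since randomized certificates at domain points give no handle on the majority string.
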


\begin{proof}
Without loss of generality, assume
$|f^{-1}(0)|\leq|f^{-1}(1)|$.
We use \lem{RvsQlemma} to construct a
randomized algorithm
for determining $f(x)$ given oracle access to $x$,
assuming that $f^{-1}(0)$ is small.
The idea is to keep track of the subset
$Z\subseteq f^{-1}(0)$
of strings that the input $x$ might feasibly be
(consistent with the queries seen so far).
We then construct a string $a$ from a majority
vote of the elements of $Z$; that is,
for each index $i\in [n]$, $a_i$ will be the majority
of $y_i$ over all $y\in Z$
(with ties broken arbitrarily).

This string $a$ need not be in $\Dom(f)$.
The important property of it is that
if we query an index $i$ of the input $x$
and discover that $x_i\neq a_i$,
we can eliminate at least half of the strings from $Z$,
since they are no longer feasible
possibilities for $x$.

We then get the following randomized
algorithm for evaluating $f(x)$:

\begin{itemize}
    \item Initialize $Z=f^{-1}(0)$.
    \item While $Z\neq\emptyset$:
    \begin{enumerate}
        \item Calculate $a$ from the entry-wise
                majority vote of $Z$.
        \item Determine $b\in\{0,1\}$ such that
                $\R(f_{a,b})=O(\Q(f)^2)$
                (this exists by \lem{RvsQlemma}).
        \item Run the randomized
                algorithm evaluating $f_{a,b}$
                on $x$ with some amplification\\
                (to be specified later).
        \item If its output is $1$ (i.e.\
                the algorithm thinks $x=a$
                rather than $x\in f^{-1}(b)$),\\
                output $1-b$ and halt.
        \item If its output is $0$,
                a bit $i$ was queried
                to reveal $x_i\neq a_i$, so update $Z$\\
                (removing at least half its elements).
    \end{enumerate}
    \item
    If $Z=\emptyset$, output $1$.
\end{itemize}

We note a few things about this algorithm.
First, in step 3, notice that $x$ need
not be in the domain of $f_{a,b}$.
However, we may still run the randomized
algorithm that evaluates $f_{a,b}$,
and use the fact that if $x$
does happen to be in the domain (in particular,
if $x\in f^{-1}(b)$), then the algorithm will
work correctly. This is exactly what we use
in step 4: if the algorithm that distinguishes
$a$ from $f^{-1}(b)$ says that $x$ is equal to $a$,
it need not mean that $x$ is in fact equal to $a$,
but it does mean that $x\notin f^{-1}(b)$.

Secondly, step 5 assumes that the randomized algorithm
for evaluating $f_{a,b}$
will only conclude that an input
$x$ is not equal to $a$
if it finds a disagreement with $a$.
This is a safe assumption,
as argued in Lemma 5 of \cite{Aar06}.

Finally, we determine the number of queries this
algorithm uses.
The outer loop happens at most
$\lfloor\log|f^{-1}(0)|\rfloor+1\leq\Bal(f)$ times.
Step 3 in the loop is the only one which
queries the input string.
Since the loop repeats at most $\Bal(f)$ times,
we can safely amplify the algorithm in step 3
$O(\log\Bal(f))$ times.
This gives a query complexity of
$O(Q(f)^2\log\Bal(f))$
for step 3, so the overall number of queries is
$O(Q(f)^2\Bal(f)\log\Bal(f))$.

We can get rid of the log factor by being more careful
with the amplification. Note that if we ever find a
disagreement with $a$ when running the algorithm,
we may immediately stop
amplifying and proceed to step 5.
We keep a count $c_0$ of how
many times we had to amplify in step 3
for functions of the
form $f_{a,0}$, and a count
$c_1$ for functions of the form $f_{a,1}$.

If $c_0$ ever reaches $2\Bal(f)$,
we output $1$ and halt.
Similarly, if $c_1$ ever reaches $2\Bal(f)$,
we output $0$ and halt.
This ensures the total amplification
is $O(\Bal(f))$,
so the total query complexity of the
algorithm is $O(Q(f)^2\Bal(f))$.

Note that if $f(x)=0$
and the output of the algorithm was $1$,
it means that we ran the algorithm evaluating $f_{a,0}$
(for varying values of $a$)
$2\Bal(f)$ times, and at most $\Bal(f)$
of those times the algorithm said that $x\in f^{-1}(0)$.
For each individual run,
the probability is at least $2/3$
that the algorithm would say that $x\in f^{-1}(0)$.
An application of the
Chernoff bound shows that the probability
of this happening is exponentially small.
Similarly, the probability of the algorithm giving $0$
when in actuality $f(x)=1$ is also exponentially small.

We conclude that $R(f)=O(Q(f)^2\Bal(f))$, as desired.
\end{proof}

\subsection{Application to Non-Sculptability}
\label{sec:H}

\thm{RvsQ} immediately gives the following
non-sculptability result, which says that
unbalanced functions cannot be sculpted.

\begin{corollary}\label{cor:bal_lower}
Let $f:\B^N\to\B$ be a total function.
For any promise $P\subseteq\B^N$, we have
\[\R(f|_P)=O(\Q(f|_P)^2\Bal(f)).\]
\end{corollary}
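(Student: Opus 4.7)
The plan is to derive this almost directly from Theorem~\ref{thm:RvsQ}, which already gives the desired bound $\R(g) = O(\Q(g)^2 \Bal(g))$ for an \emph{arbitrary} partial function $g$. The only thing to check is that replacing $\Bal(f|_P)$ by $\Bal(f)$ on the right-hand side is legitimate; that is, restricting to a promise cannot increase the balance measure when $f$ is total.

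To see this, observe that for a total $f$ and any promise $P \subseteq \B^N$, we have $(f|_P)^{-1}(b) = f^{-1}(b) \cap P \subseteq f^{-1}(b)$ for each $b \in \{0,1\}$. Hence $|(f|_P)^{-1}(b)| \le |f^{-1}(b)|$, and therefore
\[
\Bal(f|_P) = 1 + \min\bigl\{\log|(f|_P)^{-1}(0)|,\ \log|(f|_P)^{-1}(1)|\bigr\} \le \Bal(f),
\]
with the convention that $f|_P$ being constant only makes $\Bal(f|_P)$ smaller (namely $0$).

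Now I apply Theorem~\ref{thm:RvsQ} to the partial function $g := f|_P$, obtaining
\[
\R(f|_P) = O\bigl(\Q(f|_P)^2 \Bal(f|_P)\bigr) = O\bigl(\Q(f|_P)^2 \Bal(f)\bigr),
\]
which is exactly the claimed corollary. There is no real obstacle here: the content of the argument is entirely in Theorem~\ref{thm:RvsQ}, and this corollary is just the observation that $\Bal(\cdot)$ is monotone under restriction of a total function, so the ``unbalancedness'' of $f$ itself upper-bounds that of any sub-problem carved out by a promise.
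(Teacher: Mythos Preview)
Your proposal is correct and matches the paper's own proof essentially verbatim: apply Theorem~\ref{thm:RvsQ} to $f|_P$ and use the monotonicity $\Bal(f|_P)\le\Bal(f)$. The only difference is that you spell out why $\Bal$ is monotone under restriction, which the paper simply asserts.
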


\begin{proof}
Note that $\Bal(f|_P)\leq\Bal(f)$
for any $f$ and $P$. Then, by \thm{RvsQ},
we have
\[\R(f|_P)=O(\Q(f|_P)^2\Bal(f|_P))=O(\Q(f|_P)^2\Bal(f)).\]
\end{proof}

We extend this result by showing
that any function with a small number of large
certificates also cannot be sculpted.
This gives us a non-sculptability result in terms
of the H-index of certificate complexity.

\begin{theorem}\label{thm:noSculpting}
    Let $f:\B^N\to\{0,1\}$ be a total function.
    For any promise $P\subseteq\B^N$, we have
    \[\R(f|_P)=O(\Q(f|_P)^2\Hi(\C_f^2)).\]
\end{theorem}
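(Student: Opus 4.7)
The plan is to mimic the algorithm behind the proof of \thm{RvsQ} (and its corollary \cor{bal_lower}), but to change what the candidate set $Z$ is initialized to so that its size is controlled by $\Hi(\C_f^2)$ rather than by $\Bal(f)$. Set $h = \Hi(\C_f^2)$ and $c = \sqrt{h}$, and define the ``heavy'' inputs $L = \{x \in \B^N : \C_f(x) > c\}$; by the definition of the H-index, $|L| \le 2^h$. Rather than initializing $Z$ to the smaller $f|_P$-value class (whose size we cannot bound in general), I would initialize $Z = L \cap P$, which satisfies $|Z| \le 2^h$ by construction and so caps the number of outer iterations at $O(h)$.

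In each outer iteration, the algorithm (i) computes the entry-wise majority string $a$ of the current $Z$, (ii) invokes \lem{RvsQlemma} on $f|_P$ to find $b \in \B$ with $\R((f|_P)_{a,b}) = O(\Q(f|_P)^2)$, (iii) runs that subroutine on $x$ with amplification, and (iv) either halts with output $1-b$ (when the subroutine declares $x = a$) or, using the coordinate $i$ with $x_i \neq a_i$ that the subroutine is guaranteed to reveal, discards from $Z$ every element agreeing with $a$ on coordinate $i$, thereby halving $|Z|$. The key invariant ``$x \in L \cap P \Rightarrow x \in Z$'' is preserved, because we only drop elements that actually disagree with $x$ on a queried bit. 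Since $|Z| \le 2^h$ initially and halves each round, this phase terminates after $O(h)$ rounds using $O(\Q(f|_P)^2 \cdot h)$ queries, with the amplification bookkeeping (per-direction amplification counters each capped at $O(h)$) and the Chernoff tail analysis lifted essentially verbatim from the proof of \thm{RvsQ}.

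If the phase exhausts $Z$ without halting, the invariant above forces $x \in P \setminus L$, i.e.\ $\C_f(x) \le c$. This small-certificate regime is the main obstacle I anticipate: we must still output $f|_P(x)$ using at most $O(\Q(f|_P)^2 \cdot h)$ further queries, given only that $\C_f(x) \le c$ and access to the quantum algorithm for $f|_P$. My plan is to run a second pass of the same majority-vote/\lem{RvsQlemma} procedure, this time tracking a set $Z' \subseteq P \setminus L$ of candidates consistent with the bits queried so far, and to argue that $O(h)$ additional iterations suffice because every remaining candidate admits a certificate of length at most $c$, so the subroutines of \lem{RvsQlemma} force rapid agreement (using, e.g., the $\RC = \Theta(\QC^2)$ correspondence from \cite{Aar06} to bound per-round shrinkage). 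Combining the two phases then yields the claimed bound $\R(f|_P) = O(\Q(f|_P)^2 \Hi(\C_f^2))$.
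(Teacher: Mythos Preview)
Your first phase is reasonable, but the second phase has a genuine gap. When $Z$ is exhausted you correctly conclude $x\in P\setminus L$, i.e.\ $\C_f(x)\le c=\sqrt{h}$. However, your proposal to then run the same majority-halving routine on $Z'\subseteq P\setminus L$ does not work: the set $P\setminus L$ can have size $2^{\Omega(N)}$, and each iteration of the majority-vote procedure only halves $Z'$, so you would need $\Omega(N)$ rounds, not $O(h)$. The fact that every element of $Z'$ has a certificate of size at most $c$ does not accelerate the halving---the subroutine from \lem{RvsQlemma} reveals only one disagreeing bit per call, and the $\RC=\Theta(\QC^2)$ relation gives you nothing here because the task of distinguishing $a$ from $(f|_P)^{-1}(b)$ is not the task of certifying $x$. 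So the query bound in phase two is $O(\Q(f|_P)^2\cdot N)$, not $O(\Q(f|_P)^2\cdot h)$, and the argument collapses.

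The paper avoids this problem by \emph{not} separating heavy and light inputs at all. Instead it runs a purely deterministic preprocessing phase of at most $h$ queries: repeatedly pick a $0$-certificate of size at most $\sqrt{h}$ consistent with what has been seen, and query all of it, for up to $\sqrt{h}$ rounds. The key structural fact is that every $0$-certificate intersects every $1$-certificate; hence after $\sqrt{h}$ rounds either all small $0$-certificates have been ruled out (so the remaining consistent $0$-inputs all lie in $L$, of which there are at most $2^h$), or at least $\sqrt{h}$ bits of every $1$-certificate have been revealed without finding one (so the remaining consistent $1$-inputs all lie in $L$). Either way the surviving set $S$ satisfies $\Bal(f|_S)\le h+1$, and a single application of \thm{RvsQ} to $f|_{S\cap P}$ finishes the proof. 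The point is that this phase reduces to an \emph{unbalanced} problem rather than to a ``light-certificate'' problem, and unbalancedness is exactly what \thm{RvsQ} exploits.
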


\begin{proof}
We design a deterministic algorithm
that reduces the set of possibilities for the input
to an unbalanced set.
Specifically, the algorithm will reduce the
possibilities for the input to a set $S\subseteq\B^N$
such that $\Bal(f|_S)\leq\Hi(\C_f^2)+1$.
We then use \thm{RvsQ}
to get the desired non-sculptability result.

Note that every $1$-certificate of $f$ must conflict
with every $0$-certificate of $f$ in at least one bit.
Therefore, by querying all non-$*$
entries of a $0$-certificate,
we reveal at least one entry of each $1$-certificate.

We design a deterministic algorithm
for computing $f$ on an input
from $P$. The algorithm proceeds as follows:
it repeatedly picks
a $0$-certificate $p$ for $f$ of size at most
$\sqrt{\Hi(\C_f^2)}$
that is consistent with all the entries of the input that
were revealed so far. It then queries all the non-$*$
entries of $p$. This is repeated
$\sqrt{\Hi(\C_f^2)}$ times, or until
there are no $0$-certificates of size at most
$\sqrt{\Hi(\C_f^2)}$
(whichever happens first). Finally, the algorithm
returns the set $S$ of strings that are consistent with
the revealed entries of the input.

This algorithm uses at most $\Hi(\C_f^2)$ queries.
We check its correctness by examining the set $S$.
Clearly, the input is in $S$. Furthermore, if any certificate
of $f$ was revealed, then $f$ is constant on $S$,
so $S$ contains either no $0$-inputs or no $1$-inputs.

There are at most $2^{\Hi(\C_f^2)}$ inputs with
certificate complexity larger than $\sqrt{\Hi(\C_f^2)}$.

If the algorithm terminated because there were no
consistent $0$-certificates, then the only
$0$-inputs in $S$ have certificates of size larger than
$\sqrt{\Hi(\C_f^2)}$. There are at most
$2^{\Hi(\C_f^2)}$ of them,
so $S$ has at most $2^{\Hi(\C_f^2)}$ $0$-inputs to $f$.
Conversely, if the algorithm went through
$\sqrt{\Hi(\C_f^2)}$ iterations
of querying consistent $0$-certificates, then
it must have revealed $\sqrt{\Hi(\C_f^2)}$
entries of each $1$-certificate
to $f$. If no $1$-certificate was discovered, it means
the revealed entries contradicted all $1$-certificates
of size at most $\sqrt{\Hi(\C_f^2)}$.
Thus the only $1$-inputs in $S$
have certificate size greater than $\sqrt{\Hi(\C_f^2)}$,
from which
it follows that there are less than $2^{\Hi(\C_f^2)}$
of them.

We conclude that $S$ contains either at most
$2^{\Hi(\C_f^2)}$
$0$-inputs to $f$ or at most $2^{\Hi(\C_f^2)}$
$1$-inputs to $f$. This gives
$\Bal(f|_S)\leq \Hi(\C_f^2)+1$.

We design a randomized algorithm for $f|_P$ as
follows. First, we run the above deterministic
algorithm to reduce the problem of computing
$f|_P$ to the problem of computing $f|_{S\cap P}$.
This costs $\Hi(\C_f^2)$ queries.
By \thm{RvsQ}, there is a randomized algorithm
that uses
\[O(\Q(f|_{S\cap P})^2\Bal(f|_{S\cap P}))
    =O(\Q(f|_P)^2\Bal(f|_S))
    =O(\Q(f|_P)^2\Hi(\C_f^2))\]
queries to compute $f|_{S\cap P}$. Running
this algorithm allows us to compute $f|_P$.
The total number of queries used was
\[O(\Q(f|_P)^2\Hi(\C_f^2)+\Hi(\C_f^2))
=O(\Q(f|_P)^2\Hi(\C_f^2)).\]
\end{proof}

Note that \thm{noSculpting} completes the first
part of the proof of \thm{main},
since $\Hi(\C_f^2)\leq\Hi(\C_f)^2$.
It is natural to wonder whether \thm{noSculpting}
is always at least as strong as \cor{bal_lower}.
In \thm{Hbal}, we will show that it is,
up to a quadratic factor and a $\log N$ factor.

\section{Sculpting from Communication Complexity}
\label{sec:sculpt}

In this section, we show that if a function $f$
has many inputs with large randomized certificate
complexity then it \emph{can} be sculpted:
there is a promise $P$ so that $f|_P$ exhibits
a large quantum speedup. This means that if $\Hi(\RC_f)$
is large, the function $f$ can be sculpted.
In \sec{Hprops}, we will relate $\Hi(\RC_f)$
to $\Hi(\C_f)$, thereby completing the proof of \thm{main}.

Our sculptability proof will rely
on the solution to a problem we call the ``extended queries
problem,'' which might be of independent interest.
The solution to this problem will in turn
use results from communication complexity.

\subsection{The Extended Queries Problem}

We usually let an algorithm for computing
a (possibly partial) function
$f:\B^N\to\B$  query the bits of the input $x$.
But what happens if we let the algorithm
make other types of queries? For example,
if $x$ is a Boolean string, we can let the algorithm
query the parity of $x$. How does this extra power
affect the query complexity of $f$?
In particular, is there some special set of additional
queries such that if a randomized algorithm is allowed
to make the special queries, it can
simulate \emph{any} quantum algorithm?
If so, how many special queries suffice for this property to
hold?

To formalize this question, we need a few definitions.

\begin{definition}
An \emph{extension function with extension $G$}
is an injective total function
$\phi:\B^N\to\B^G$ (in particular, we need
$G\geq N$).
\end{definition}

An extension function
specifies, for each input $x\in\B^N$, the types of
queries an algorithm is allowed to make on $x$.
In other words, we will let algorithms query from
$\phi(x)$ instead of from $x$.
Note that the extension function may provide easy access
to information about $x$ that is hard to obtain
otherwise (such as its parity).

\begin{definition}
Let $f:\B^N\to\B$ be a partial function, and let
$\phi$ be an extension function.
The \emph{extended version
of $f$ with respect to $\phi$} is the partial function
$f^{\phi}:\phi(\Dom(f))\to\B$ defined by
$f^{\phi}(x)=f(\phi^{-1}(x))$.
\end{definition}

Note that
$f^{\phi}$ is a partial function from $\B^G$ to $\B$.
We can consider $\D(f^{\phi})$, $\R(f^{\phi})$,
$\Q(f^{\phi})$, and so on.
To pose the extended queries problem, we will need a notion
of the complexity of a set of functions, defined as the
maximum complexity of any function in that set.

\begin{definition}
For any set of functions $S$, we define
$\D(S):=\max_{f\in S} \D(f)$.
We define $\R(S)$, $\Q(S)$, etc.\ similarly.
Further, we
define $\D^G(S)$, the \emph{extended query complexity
of $S$ with extension $G$}, to be the minimum, over all
extension functions $\phi$ with extension $G$,
of $\max_{f\in S}\D(f^{\phi})$.
We define $\R^G(S)$, $\Q^G(S)$, etc.\ similarly.
\end{definition}

In other words, for any set of functions, the extended query
complexity of the set with $G$ extension is the number
of queries required to compute all functions in the
set given the best possible extension. We observe
that if $G\geq |S|$, the extended query complexity
$\D^G(S)$ is $1$, since the extension $\phi(x)$
for a given input $x$ could
simply specify the values of all the functions in $S$
on $x$. We also observe that for all $G\geq N$, we have
$\D^G(S)\leq \D(S)$, since the identity function
is always a valid extension function.
Moreover, the extended query complexity of a set
is decreasing in $G$. We now ask the following question.

\textbf{The Extended Queries Problem.}
Is there a set of functions
$S$ for which $\Q(S)$ is small but $\R^G(S)$ is large,
even when the extension $G$ is exponentially large
in the input size $N$?
We can also ask this question for other complexity
measures, such as $\R(S)$ vs. $\R_0^G(S)$ or
$\R_0(S)$ vs.\ $\D^G(S)$.

It turns out that a
negative solution to the extended queries problem
implies a sculptability result in terms of $\Hi(\RC_f)$,
as the following theorem shows.

\begin{restatable}{theorem}{sculptingExtension}
\label{thm:sculpting_extension}
    Let $f:\B^N\to\B$ be a total function.
    Let $A=\frac{\Hi(\RC_f)}{4\log N}$,
    and let $S$ be any set of partial functions
    from $\B^A$ to $\B$. Then there is a promise
    $P\subseteq\B^N$ such that
    \[\Q(f|_P)=O(\Q(S)),\qquad
        \R(f|_P)=\Omega(\R^N(S)).\]
    Analogous statements hold for other pairs of
    complexity measures, such as $\D$ and $\R_0$ or
    $\R_0$ and $\R$.
\end{restatable}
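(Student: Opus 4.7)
The plan is to embed the extended-queries structure of $S$ into $f|_P$ using Sauer-Shelah on the abundant hard inputs of $f$ (those with $\RC_f$ at least $\Hi(\RC_f)$).

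I would begin by letting $T := \{x \in \B^N : \RC_f(x) \geq \Hi(\RC_f)\}$; by the definition of the H-index, $|T| \geq 2^{\Hi(\RC_f)}$. Applying \lem{sauer}, I would extract an index set $I \subseteq [N]$ of size at least $\Hi(\RC_f)/\log(N+1) \geq 2A$ that is shattered by $T$. The role of $I$ is as an ``interface'' between the functions in $S$ and the hard inputs of $f$: I would partition $I$ into a ``data'' portion of size $A$ (encoding inputs $y \in \B^A$) and a ``tag'' portion (encoding which function $g \in S$ is being queried). For each pair $(g, y)$, the shattering furnishes a representative $x_{g,y} \in T$ whose restriction to $I$ spells out the pair; I would take $P$ to consist of those representatives for which additionally $f(x_{g,y}) = g(y)$.

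For the quantum upper bound, the algorithm reads the tag bits (which lie inside $I$ and are therefore directly accessible as bits of $x$) to identify $g$, then runs the optimal $\Q(g)$-query quantum algorithm for $g$, translating each query on a bit of $y$ into a query on the corresponding data-bit coordinate of $x \in I$. This yields $\Q(f|_P) = O(\Q(S))$. For the randomized lower bound, a $T_0$-query randomized algorithm $R$ for $f|_P$ simulates, via the correspondence $y \mapsto x_{g,y}$, a $T_0$-query algorithm for every $g \in S$ under a common extension $\phi: \B^A \to \B^N$ built out of the representatives, with the tag bits multiplexing between the $g$'s. Hence $T_0 \geq \Omega(\R^N(S))$.

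The hardest step is the construction of $P$: shattering controls only the restriction of $x_{g,y}$ to $I$, not the value $f(x_{g,y})$, so it is not obvious that a representative with $f(x_{g,y}) = g(y)$ exists for every pair $(g, y)$. I expect this is precisely where the $\RC_f$ assumption (rather than $\C_f$) is essential --- via its characterization as fractional block sensitivity, a minimax/duality argument should show that each pattern on $I$ is compatible with both $f$-values among the inputs of $T$. The factor of $4$ in $A = \Hi(\RC_f)/(4 \log N)$ is likely the slack we need to apply Sauer-Shelah while preserving enough size so that both values remain realizable on every pattern in the resulting interface.
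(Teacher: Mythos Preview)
Your proposal has two genuine gaps, both in how the promise $P$ is built.

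\textbf{The tag idea does not work.} The theorem is stated for an \emph{arbitrary} set $S$ of partial functions on $\B^A$; in particular $|S|$ can be doubly exponential in $A$, so $g$ cannot be encoded in the $\approx A$ tag bits of $I$. Even when $|S|\le 2^A$, reading the tag costs $\log|S|$ queries, which need not be $O(\Q(S))$. The paper sidesteps this entirely: it first fixes a single extension $\phi:\B^A\to\B^N$ (sending each pattern on the shattered set to one representative in $T$), and then picks the \emph{one} function $g\in S$ maximizing $\R(g^\phi)$. Since $\R^N(S)\le\max_{g'}\R({g'}^{\phi})=\R(g^\phi)$, embedding this single $g$ already suffices, and no tag is needed.

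\textbf{``Both $f$-values realizable on each pattern'' does not yield the lower bound.} Your instinct that high $\RC_f$ produces, for each pattern on $I$, an input with the opposite $f$-value is correct: the distribution $\mu_x$ witnessing $\RC_f(x)\ge 2A$ places at least half its mass on inputs agreeing with $x$ on $I$. But these inputs need not lie in $T$, and more importantly, swapping the representative $\phi(y)$ for some single $x'_y$ with $f(x'_y)=g(y)$ changes the extension to a new $\phi'$ that may make $g$ much \emph{easier} (the substitutes can leak side-information), so $\R(g^{\phi'})\ge\R^N(S)$ is no longer guaranteed. The paper's key idea, which your plan is missing, is: when $f(\phi(y))\ne g(y)$, add to $P$ not a single substitute but the \emph{whole support} of the conditioned hard distribution $\mu'_{\phi(y)}$ (inputs with the correct $f$-value, agreeing with $\phi(y)$ on the shattered set, and requiring $\Omega(A)$ randomized queries to distinguish from $\phi(y)$). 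The lower bound then becomes a Yao-style dichotomy: any randomized algorithm for $f|_P$, run on the pushforward of the hard distribution for $g^\phi$, either already computes $g^\phi$ with good probability (giving $\R(f|_P)=\Omega(\R(g^\phi))=\Omega(\R^N(S))$), or it behaves differently on some $\phi(y)$ versus $\mu'_{\phi(y)}$, hence distinguishes them and uses $\Omega(A)\ge\Omega(\R^N(S))$ queries. Your proposed ``common extension with multiplexing tags'' does not supply either branch of this argument.
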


We delay the proof of \thm{sculpting_extension}
to \sec{sculpting_extension}.
First, we settle the extended queries problem for
$\R$ vs.\ $\Q$:
\thm{communication_extension}
will provide an exponential lower bound
on $G$ by reducing the extended queries problem
to a problem in communication complexity.

\subsection{Reducing Extension to Communication Complexity}

For a partial function $f:\B^{N_1}\times\B^{N_2}\to\B$,
we will denote the communication complexities of $f$
by $\D^{\CC}(f)$, $\R^{\CC}(f)$, $\Q^{\CC}(f)$, and
$\R^{\CC}_0(f)$. We will use the following
definition.

\begin{definition}
Let $f:\B^{N_1}\times\B^{N_2}\to\B$ be a partial function.
For any $x\in\Dom(f)$, we write $x=x_Ax_B$, with
$x_A\in\B^{N_1}$ and $x_B\in\B^{N_2}$.
Let $\Dom_A(f)=\{x_A:x\in\Dom(f)\}$
and $\Dom_B(f)=\{x_B:x\in\Dom(f)\}$.
For any $a\in\Dom_A(f)$, we define the
marginal of $f$ with
respect to $a$ to be the partial
function $f_a:\B^{N_2}\to\B$
defined by $f_a(b):=f(a,b)$ for all $b\in\B^{N_2}$
such that $(a,b)\in\Dom(f)$.
We define $\Mar(f)=\{f_a:a\in\Dom_A(f)\}$
to be the set of all marginal functions for $f$.
\end{definition}

We now connect communication complexity to the extended
queries problem.

\begin{theorem}\label{thm:communication_extension}
    Let $f:\B^{N_1}\times\B^{N_2}\to\B$
    be a partial function. Then
    \[\R^G(\Mar(f))=\Omega\left( \frac{\R^{\CC}(f)}
        {\log G}\right).\]
    Similarly, we also have
    $\D^G(\Mar(f))=\Omega(\D^{\CC}(f)/\log G)$,
    $\R_0^G(\Mar(f))=\Omega(\R_0^{\CC}(f)/\log G)$,
    and
    $\Q^G(\Mar(f))=\Omega(\Q^{\CC}(f)/\log G)$.
\end{theorem}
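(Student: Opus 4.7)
The plan is to prove the contrapositive-style statement by showing that a good extended-query algorithm for every marginal of $f$ yields a cheap communication protocol for $f$ itself, with each query translating into $O(\log G)$ bits of communication. This is essentially the standard ``query-to-communication'' simulation, but applied to the extended queries setting.

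Suppose $\phi: \B^{N_2} \to \B^G$ is an extension function and, for each $a \in \Dom_A(f)$, there is a bounded-error randomized algorithm $\mathcal{A}_a$ computing $f_a^{\phi}$ with at most $q := \R^G(\Mar(f))$ queries to $\phi(b)$. I want to give a randomized communication protocol for $f$ of cost $O(q \log G)$. The natural protocol is: Alice, holding $a$, runs $\mathcal{A}_a$ locally. Every time $\mathcal{A}_a$ asks for a bit of $\phi(b)$, Alice sends the $\lceil \log G \rceil$-bit index to Bob; Bob, who holds $b$ and knows $\phi$, computes that bit of $\phi(b)$ and returns it. When $\mathcal{A}_a$ halts, Alice sends the output bit. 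Shared randomness can be used for $\mathcal{A}_a$'s coin flips. Correctness follows because $\phi(b)$ is a well-defined string that Bob can compute, so the simulated oracle perfectly matches the one $\mathcal{A}_a$ expects. The total communication is $O(q \log G)$, giving $\R^{\CC}(f) = O(\R^G(\Mar(f)) \cdot \log G)$, which rearranges to the claimed bound.

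The deterministic and zero-error cases are identical, replacing $\mathcal{A}_a$ with a deterministic or zero-error algorithm. For the quantum case, I would use the standard simulation of a quantum oracle by two-way quantum communication: each query to $\phi(b)$ is implemented by Alice sending her query register (of $O(\log G)$ qubits, encoding the index and answer bit) to Bob, who applies the unitary $\ket{i}\ket{y} \mapsto \ket{i}\ket{y \oplus \phi(b)_i}$ and sends the register back. This costs $O(\log G)$ qubits per query and preserves coherence, so $q$ quantum queries translate into $O(q \log G)$ qubits of communication, giving $\Q^{\CC}(f) = O(\Q^G(\Mar(f)) \cdot \log G)$.

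The whole argument is routine; there is no real obstacle. The only item worth double-checking is the quantum oracle simulation, where one must ensure that Bob's step is a unitary transformation on the register Alice sends (which it is, since $\phi(b)$ is a fixed string from Bob's point of view). One should also note that the protocol does not require $b$ to be revealed, only that Bob can evaluate individual coordinates of $\phi(b)$ on demand, which he can do since $\phi$ is a fixed total function known to both parties. This completes all four cases in a unified way.
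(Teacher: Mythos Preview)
Your proposal is correct and follows essentially the same approach as the paper: Alice simulates the extended-query algorithm for her marginal $f_a^\phi$, paying $O(\log G)$ bits of communication per query to have Bob supply the relevant coordinate of $\phi(b)$. The paper's proof is nearly identical (it even notes the private-coin cost is at most $(\lceil\log G\rceil+1)\cdot\R^G(\Mar(f))$), and your added detail on the quantum oracle simulation is a welcome elaboration of what the paper leaves as ``analogous.''
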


\begin{proof}
We prove the theorem for $\R$. The statements for
$\D$, $\R_0$, and $\Q$ will follow analogously.
Let $\phi:\B^{N_2}\to\B^G$ be the best possible extension
function, so that
$\R^G(\Mar(f))=\max_{g\in\Mar(f)}\R(g^{\phi})$.

We now describe a randomized communication
protocol for computing $f$. Alice receives a string
$a$, and must compute $f(a,b)$, where $b$ is Bob's string.
This is equivalent to computing $f_a(b)$.
Since Alice knows $f_a$, she also knows $f_a^{\phi}$.
Let $R$ a randomized algorithm that computes
$f_a^{\phi}$ using at most $\R^G(\Mar(f))$ queries.
Alice will run this algorithm, and for each query,
she will send the index of that query to Bob
(as a number between $1$ and $G$). Bob will reply
with the corresponding bit of $\phi(y)$ (as
a bit in \{0,1\}). This allows Alice
to compute $f_a(b)=f(a,b)$.

The total communication in this protocol is at most
$(\lceil\log G\rceil+1)\R^G(\Mar(f))$.
Since this upper-bounds the randomized communication
complexity of $f$ (using private coins), the desired
result follows.
\end{proof}

\thm{communication_extension} allows us to use communication
complexity as a tool for lower-bounding the extended
query complexity of certain sets of functions. To use
it to solve the extended queries problem,
we need a function $f$ that has large randomized
communication complexity but for which $\Q(\Mar(f))$
is small. To construct such a function,
we start from a simple function that was recently shown
to separate randomized from quantum communication complexity,
called the Vector in Subspace problem.

\textbf{The Vector in Subspace Problem.}
In this problem, Bob gets a
unit vector $v\in\mathbb{R}^n$, and Alice
gets a subspace $H$ of $\mathbb{R}^n$ of dimension $n/2$.
It is promised that either $v\in H$ or $v\in H^\perp$;
the task is to determine which is the case.
We assume for simplicity that $n$ is a power of $2$.

This problem was first studied in \cite{Kre95}
and was also described in \cite{Raz99}. Klartag
and Regev \cite{KR11} showed
that this problem has randomized
communication complexity $\Omega(n^{1/3})$.
In addition, it is easy to see that the one-way
quantum communication complexity of the problem is at most
$\log n$: Bob can send
a superposition over $\log n$ bits with amplitudes
determined by $v$; Alice can then apply
the projective measurement given by $(H,H^\perp)$.

To apply this function to the extended queries problem,
we need a few modifications. First, we need a discrete
version of the problem.
\cite{KR11} showed that a lower bound of $\Omega(n^{1/3})$
for randomized communication complexity applies
to a discrete version of the problem in which
each real number is described using $O(\log n)$ bits;
that is,
Alice's subspace is given using $n/2$ vectors
of length $n$, whose entries are specified using $O(\log n)$
bits, and Bob's vector is specified using $n$
real numbers of $O(\log n)$ bits each.

$\Mar(f)$ is the set of functions where we
know Alice's subspace $H$, and are allowed to query from
Bob's input vector. However, phrased this way, it is
not clear how to use a quantum algorithm to compute
such functions using few queries. To solve this problem,
we modify the way Bob's input is specified.
Instead of specifying only the entries to the vector,
Bob's input string also lists some ``partial sums'' of
the vector entries.

The idea is for
Bob's vector to allow Alice to use the following
algorithm to construct the state with amplitudes specified
by $v$. We interpret $v$ as specifying a superposition
over strings of length $\log n$.
Alice starts by querying the probability $p$ that the first
bit of this string is $0$ when this state
is measured. Alice will now place a $\sqrt{p}$ amplitude
on querying the probability that the second bit is $0$
conditioned on the first bit being $0$, and a $1-\sqrt{p}$
amplitude on querying the probability that the second bit
is $0$ conditioned on the first bit being $1$.
Alice keeps going in this way, until she gets to the final
bit of the string of length $\log n$, at which point
she queries the phase. This allows her to construct
the state determined by the amplitudes in $v$.

Of course, for this to work, Bob's input must
provide all of these conditional probabilities.
There is one such probability to specify for the first
bit, two for the second, four for the third, and so on.
Since there are $\log n$ bits, Bob's input needs to
specify only $O(n)$ probabilities. Each can be specified
with $O(\log n)$ precision, so Bob's total input
size is $O(n\log n)$. Moreover, Alice constructs
the desired state after $O(\log n)$ queries to the
probabilities, or $O(\log^2 n)$ queries to the bits
of Bob's input.

We thus get the following theorem.

\begin{theorem}\label{thm:Qextension}
For all $A\in\mathbb{N}$, there is a set $S$ of partial functions
from $\B^A$ to $\B$ such that for all $G\geq A$,
\[\Q(S)=O(\log^2 A),\quad
\R^G(S)=\Omega\left(\frac{A^{1/3}}{\log^{1/3}A\cdot\log G}\right).\]
\end{theorem}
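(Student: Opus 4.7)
The plan is to take $S = \Mar(f)$ where $f$ is (a discrete, extended-input variant of) the Vector in Subspace problem, and then invoke \thm{communication_extension} to transfer the Klartag--Regev randomized communication lower bound into an extended-query lower bound. The parameters must be chosen so that Bob's input string fits into $A$ bits. Since the augmented encoding described in the lead-up takes $O(n \log n)$ bits for an instance of dimension $n$, I will set $n = \Theta(A/\log A)$, so that the marginals live on $\B^A$ (possibly after padding), and everything downstream is measured in terms of this $n$.

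For the randomized lower bound, I apply \thm{communication_extension} to $f$, which gives
\[
\R^G(\Mar(f)) = \Omega\!\left(\frac{\R^{\CC}(f)}{\log G}\right).
\]
The Klartag--Regev result guarantees $\R^{\CC}(f) = \Omega(n^{1/3})$ even for the discretized version in which the entries of $v$ and the basis for $H$ are rounded to $O(\log n)$ bits of precision, so $\R^G(\Mar(f)) = \Omega(n^{1/3}/\log G)$. Substituting $n = \Theta(A/\log A)$ yields the claimed $\Omega(A^{1/3}/(\log^{1/3} A \cdot \log G))$ lower bound.

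For the quantum upper bound, I fix a function $g \in \Mar(f)$, which amounts to fixing Alice's subspace $H$; the input is Bob's string, consisting of the $O(n)$ conditional amplitudes (``partial sums'') plus phases, each specified to $O(\log n)$ bits. The quantum algorithm implements the standard amplitude-loading procedure sketched in the excerpt: starting from $\ket{0}^{\otimes \log n}$, it performs $\log n$ rounds, where in round $k$ it applies a controlled rotation whose angle is read from the appropriate conditional-probability entry of Bob's input, and finally a controlled phase rotation. Each controlled rotation requires reading $O(\log n)$ bits of precision, so the entire state preparation costs $O(\log^2 n) = O(\log^2 A)$ queries. Measuring with the projector onto $H$ (which Alice knows for free, since it is encoded into the choice of $g$) then decides $g$ with bounded error. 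Taking the maximum over $g \in S$ gives $\Q(S) = O(\log^2 A)$.

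The main subtlety — rather than a real obstacle — is verifying that the partial-sum encoding really does both (i) let a classical algorithm reconstruct $v$ only by querying many bits (so that the communication lower bound still applies after the query-to-communication reduction in \thm{communication_extension}), and (ii) let a quantum algorithm prepare $\ket{v}$ with $O(\log^2 n)$ queries. The first point is automatic because \thm{communication_extension} treats $\phi$ as a fixed total function, so augmenting Bob's string with extra bits can only raise $\R^{\CC}$; the second point is exactly the amplitude-loading argument above. A final bookkeeping step pads the domain from $O(n \log n)$ bits up to $A$ bits and checks that the $\log A$ factors introduced absorb cleanly into the stated bound.
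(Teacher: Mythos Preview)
Your proposal is correct and follows essentially the same route as the paper: set $n=\Theta(A/\log A)$, take $S=\Mar(f)$ for the partial-sum-augmented discrete Vector in Subspace problem, invoke Klartag--Regev for $\R^{\CC}(f)=\Omega(n^{1/3})$ together with \thm{communication_extension} for the lower bound, and use the $O(\log^2 n)$ amplitude-loading procedure for the upper bound. One small wording issue: your justification of point (i) via ``$\phi$ is a fixed total function'' is not quite the right reason---the relevant fact is simply that the partial-sum augmentation is locally computable from Bob's original vector, so $\R^{\CC}$ is preserved (in particular not decreased), which is all you need.
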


\begin{proof}
Let $f$ be the function described above
with $n=A/\log A$, and let $S=\Mar(f)$.
Then $\Q(S)=O(\log^2 n)=O(\log^2 A)$ and
$\R^{\CC}(f)=\Omega(n^{1/3})=\Omega(A^{1/3}/\log^{1/3}A)$.
By \thm{communication_extension}, we get
$\R^G(S)=\Omega(A^{1/3}/(\log^{1/3}A\cdot\log G))$.
\end{proof}

Together with \thm{sculpting_extension},
this implies that any function with large $\Hi(\RC_f)$
can be sculpted, simply by plugging
$S$ from \thm{Qextension} into \thm{sculpting_extension}
and setting $G=N$.

\subsection{Reducing Sculpting to Extended Query Complexity}\label{sec:sculpting_extension}

We now prove \thm{sculpting_extension},
restated here for convenience.

\sculptingExtension*

\begin{proof}
There are at least $2^{\Hi(\RC_f\cdot 2\log N)}$
inputs $x$ with
$\RC_f(x)\geq\Hi(\RC_f\cdot 2\log N)/(2\log N)$.
Let the set of such inputs be $C$.
By \lem{sauer}, if $N\geq 2$, there is
a set $B$ of
\[\frac{\Hi(\RC_f\cdot 2\log N)}{2\log N}
\geq\frac{\Hi(\RC_f)}{2\log N}\]
indices in $\{1,2,\dots,N\}$ that is shattered
by the inputs in $C$. We'll restrict $B$ to have
size at most $\Hi(\RC_f)/(4\log N)$,
so $|B|=A$.
Let $\phi:\B^A\to\B^N$
be defined by mapping each string $x\in\B^A$ to
a string $z$ in $C$ such that restricting $z$
to $A$ gives $x$.
This is an injective mapping, so $\phi$ is an extension
function with extension size $N$.

Next, consider the set $S$ of partial Boolean functions
from $\B^A$ to $\B$.
Let $S^{\phi}=\{g^{\phi}:g\in S\}$.
Then $\R(S^\phi)\geq \R^N(S)$. It follows that there
is some function $g^\phi\in S^\phi$ such that
$\R(g^\phi)\geq\R^N(S)$.

We will use the function $g^\phi$ to define the desired
promise $P$. The domain of $g^\phi$ is contained in $C$.
Let $x$ be in this domain, so that
$\RC_f(x)\geq\Hi(\RC_f\cdot 2\log N)/(2\log N)\geq 2A$.
Let $\mu_x$ be a distribution over inputs $y$
such that $f(x)\neq f(y)$, with the property that
for any bit $i$,
$\Pr[y_i\neq x_i]\leq1/\RC_f(x)\leq 1/(2A)$.
Then for all $x\in C$, a randomized algorithm has a
hard time distinguishing between $x$ and $\mu_x$.
For each such $x$, let $\mu^\prime_x$ be the distribution
$\mu_x$ conditioned on the sampled input
agreeing with $x$ on the bits in $B$.
Since the probability of an input sampled from $\mu_x$
disagreeing with $x$ on $B$ is at most
$|B|\cdot 1/(2A)\leq 1/2$, the distribution $\mu_x^\prime$
is not too far from $\mu_x$. In particular,
any randomized algorithm that finds a disagreement with
$x$ on an input sampled from $\mu_x^\prime$
with probability $p$ will also find a disagreement with
$x$ on an input sampled from $\mu_x$ with probability
at least $p/2$. It follows that a randomized
algorithm must use $\Omega(A)$ queries to distinguish
$x$ from $\mu_x^\prime$.

We construct the promise $P$ as follows.
Start with $P=\emptyset$. For each
$x\in\Dom(g^\phi)$, we add $x$ to $P$
if $f(x)=g^\phi(x)$; otherwise, we add the
support of $\mu_x^\prime$ to $P$.

It remains to lower-bound $\R(f|_P)$ and
to upper-bound $\Q(f|_P)$. We start with the
upper bound. Let $y\in P$, and consider the value
of $y$ on $B$. If $x$ is an input of the domain
of $g^\phi$ that caused $y$ to be added, then
$x$ and $y$ agree on $B$. Further, the values
of $x$ on $B$ are simply
$\phi^{-1}(x)\in\B^{|B|}$, and
$g(\phi^{-1}(x))=g^\phi(x)=f(y)$.
This means $g(y|_B)=f(y)$.
We now have the quantum algorithm work only with the bits
of $y|_B$, ignoring the rest. The algorithm need only
compute $g(y|_B)$.
Since $g\in S$, we get
$\Q(f|_P)\leq\Q(g)\leq\Q(S)$, as desired.
A similar argument would upper-bound
other complexity measures, such as $\R$, $\R_0$, or $\D$.

For the lower bound, consider the hard distribution
$\mu$ on inputs to $g^\phi$ obtained
from Yao's minimax principle \cite{Yao77}.
This distribution has the property that any randomized
algorithm for $g^\phi$ that succeeds with probability at
least $2/3$ on inputs sampled from $\mu$ must
use $\R(g^\phi)$ queries.
We construct a new
distribution $\mu^\prime$ over $P$ by generating
an element $x\in\Dom(g^\phi)$ according to $\mu$,
and then outputting either $x$ or a sample from
$\mu_x^\prime$, depending on which of them was added to $P$.
We lower-bound the number of queries a randomized
algorithm requires to compute $f$ on an
input sampled from $\mu^\prime$ by a reduction
from either computing $g^\phi$
on inputs sampled from $\mu$,
or else distinguishing $x$ from $\mu^\prime_x$.

Let $R$ be a randomized algorithm for $f|_P$.
Let $x\sim\mu$. We wish to compute $g^\phi(x)$.
Although $x$ may not be in $P$,
consider running $R$ on $x$ anyway. The algorithm
will correctly output $g^\phi(x)$ with some probability $p$,
depending on both the internal randomness of $R$ and
on $\mu$. If $p\geq 3/5$, we could
amplify $R$ a constant number of times to turn it
into an algorithm for $g$ that works on inputs
sampled from the hard distribution $\mu$,
which means $R$ must
use $\Omega(\R(g^\phi))=\Omega(\R^N(S))$ queries.
So suppose that $p\leq 3/5$.

Next, given $x\sim\mu$, we let $y_x$ be either $x$
or a sample from $\mu^\prime_x$,
as $\mu^\prime$ dictates.
Then running $R$ on $y_x$ gives $f(y_x)=g^\phi(x)$
with probability at least $2/3$. On the other hand,
running $R$ on $x$ gives output $g^\phi(x)$ with
probability at most $3/5$. That is, we have
\[\Pr_{R,x\sim\mu}[R(x)=g^\phi(x)]
    =\E_{x\sim\mu}\left[\Pr_R[R(x)=g^\phi(x)]\right]
    \leq 3/5\]
\[\Pr_{R,x\sim\mu,y_x}[R(y_x)=g^\phi(x)]
    =\E_{x\sim\mu}\left[\Pr_{R,y_x}[R(y_x)=g^\phi(x)]\right]
    \geq 2/3\]
From which it follows that
\[\E_{x\sim\mu}
    \left[\Pr_{R,y_x}[R(y_x)=g^\phi(x)]-
        \Pr_R[R(x)=g^\phi(x)]\right]
    \geq 1/15.\]
This means there
must be some specific input $\hat{x}$ such that
the probability of $R$ outputting $g^\phi(\hat{x})$ when
run on $y_{\hat{x}}$ is at least $1/15$ more than the
probability of $R$ outputting $g^\phi(\hat{x})$ when run
on $\hat{x}$.
In particular, we must have $y_{\hat{x}}\neq \hat{x}$,
so $y_{\hat{x}}$ is a sample from $\mu^\prime_{\hat{x}}$.
Therefore, $R$ distinguishes $\hat{x}$ from
$\mu^\prime_{\hat{x}}$ with constant probability,
so it uses at least $\Omega(A)$ queries.

We conclude that
$\R(f|_P)=\Omega(\min\{\R^N(S),A\})$.
Since the domain of the functions in $S$ is $\B^A$,
their query complexity is at most $A$.
Thus $\R(f|_P)=\Omega(\R^N(S))$, as desired.
A similar argument lower-bounds other complexity
measures, such as $\R_0$ or $\D$.
\end{proof}

This proof uses
the fact that $\RC$ lower-bounds $\R$,
so it would not work on complexity measures that
are not lower-bounded by $\RC$ (for example, $\C^{(1)}$).
For $\Q$, it might be possible to use a similar argument
and suffer a quadratic loss, since $\Q$ is lower-bounded
by $\sqrt{\RC}$. However, since there is no hard
distribution for a quantum query complexity problem,
this might be trickier to prove (we will not need it
in this paper).

We can use the previous theorems to get a sculptability
result for $\R$ vs.\ $\Q$ in terms of the H-index of
randomized certificate complexity.

\begin{corollary}\label{cor:sculpting_upper}
Let $f:\B^N\to\B$ be a total function. Then
there is a promise $P\subseteq\B^N$ such that
\[\Q(f|_P)=O(\log^2\Hi(\RC_f)),\qquad
    \R(f|_P)=\Omega\left(\frac{\Hi(\RC_f)^{1/3}}
            {\log^{5/3} N}\right).\]
There is also a promise $P^\prime\subseteq\B^N$
such that
\[\R_0(f|_{P^\prime})=O(1),\qquad
    \D(f|_{P^\prime})=\Omega\left(\frac{\Hi(\RC_f)}
            {\log^2 N}\right).\]
\end{corollary}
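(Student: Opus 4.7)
The plan is to combine Theorem~\ref{thm:sculpting_extension} with the extended-query separations built in the preceding subsections. In both parts I set $A = \Hi(\RC_f)/(4\log N)$, matching the parameter in Theorem~\ref{thm:sculpting_extension}, and choose a set $S$ of partial functions on $\B^A$ with an appropriate complexity profile, then read off the promise directly from Theorem~\ref{thm:sculpting_extension}.

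For the first pair of bounds, I take $S$ to be the Vector-in-Subspace marginal set produced by Theorem~\ref{thm:Qextension} at parameter $A$; with $G = N$, that theorem gives $\Q(S) = O(\log^2 A)$ and $\R^N(S) = \Omega(A^{1/3}/(\log^{1/3} A \cdot \log N))$. Plugging into Theorem~\ref{thm:sculpting_extension} yields a promise $P$ with $\Q(f|_P) = O(\Q(S))$ and $\R(f|_P) = \Omega(\R^N(S))$. The rest is algebra: since $A \leq \Hi(\RC_f)$, the quantum bound collapses to $\Q(f|_P) = O(\log^2 \Hi(\RC_f))$, and substituting $A = \Hi(\RC_f)/(4\log N)$ into the randomized bound, while absorbing $\log^{1/3} A \leq \log^{1/3} N$ into the denominator, produces $\R(f|_P) = \Omega(\Hi(\RC_f)^{1/3}/\log^{5/3} N)$.

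For the second pair of bounds, I invoke the $\D$ vs.\ $\R_0$ analog of Theorem~\ref{thm:sculpting_extension} (promised in its statement) with a set $S$ satisfying $\R_0(S) = O(1)$ and $\D^N(S) = \Omega(A)$. Such an $S$ can be produced via the $\D^{CC}$ vs.\ $\R_0^{CC}$ analog of Theorem~\ref{thm:communication_extension}, seeded by a communication problem whose marginals admit a constant-cost zero-error query algorithm while its deterministic communication complexity is linear in $A$ -- a partial equality with a large-Hamming-gap promise is the natural candidate, since sampling finds a disagreeing coordinate with certainty after $O(1)$ queries on the far side of the promise. Translating through the reduction costs only one $\log N$ factor, and substituting $A = \Hi(\RC_f)/(4\log N)$ matches the claimed $\Omega(\Hi(\RC_f)/\log^2 N)$.

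The main obstacle I expect is not in the first part, which is essentially bookkeeping around Theorem~\ref{thm:Qextension}, but in justifying the second part: zero-error with worst-case expected queries is a stringent condition, and one must verify that the chosen communication primitive's marginals really admit a certificate-based algorithm of cost $O(1)$, rather than merely bounded-error cost $O(1)$. Once a suitable primitive is in hand, the rest of the argument mirrors the first part verbatim.
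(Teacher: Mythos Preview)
Your approach is exactly the paper's: the corollary is obtained by plugging the extended-query separations of Theorem~\ref{thm:Qextension} and Theorem~\ref{thm:R0extension} into Theorem~\ref{thm:sculpting_extension} with $A=\Hi(\RC_f)/(4\log N)$ and $G=N$. Your bookkeeping for the first part is correct, and the second part is handled in the paper by the already-proven Theorem~\ref{thm:R0extension}, so there is nothing further to justify.

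One remark on your caveat about the second part. Your instinct that a single partial equality with a Hamming-gap promise is not enough is right: on the ``equal'' side there is no short certificate, so $\R_0$ of the marginal is not $O(1)$. The paper's Theorem~\ref{thm:R0extension} sidesteps this with a \emph{double} equality: Alice and Bob each hold two strings, with the promise that exactly one of the two pairs is equal; Bob's strings are further encoded by a code of linear distance. Then for each marginal, sampling $O(1)$ coordinates from each of Bob's two codewords finds a disagreement on the unequal pair with certainty in expected $O(1)$ queries, and that disagreement \emph{certifies} which pair is the equal one. This yields $\R_0(S)=O(1)$ while the deterministic communication lower bound for equality still gives $\D^{\CC}=\Omega(A)$, hence $\D^N(S)=\Omega(A/\log N)$ and, after substituting $A$, the claimed $\Omega(\Hi(\RC_f)/\log^2 N)$.
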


\begin{proof}
This follows from \thm{sculpting_extension} together with
\thm{Qextension} and \thm{R0extension}.
\end{proof}

To complete the proof of \thm{main}, all that remains
is relating $\Hi(\RC_f)$ to $\Hi(\C_f)$.

\section{Relating $\Hi(\C_f)$, $\Hi(\RC_f)$, and $\Hi(\bs_f)$}
\label{sec:Hprops}

In this section, we relate $\Hi(\C_f)$ to $\Hi(\RC_f)$,
completing the characterization of sculpting.
Actually, we will prove a relationship between
$\Hi(\C_f)$ and $\Hi(\bs_f)$, which implies the
desired relationship since $\Hi(\bs_f)\leq\Hi(\RC_f)$.
The proof is somewhat involved, but splits naturally
into three parts. In \lem{CvsRC}, we show a relationship
between $\C_f(x)$ and $\RC_f(x)$ in terms of the number
of $0$- and $1$-inputs of $f$.
In \thm{Hbal}, we show that $\Hi(\C_f)=O(\Bal(f)\log N)$.
Finally, \thm{Hbs} gives the desired relationship
between $\Hi(\C_f)$ and $\Hi(\bs_f)$.

\begin{lemma}\label{lem:CvsRC}
    Let $f:\B^N\to\{0,1\}$ be a partial function,
    and let $x\in\Dom(f)$. If $f(x)=0$, then
    \[\C_f(x)\leq\RC_f(x)(1+\log|f^{-1}(1)|)\]
    and if $f(x)=1$, then
    \[\C_f(x)\leq\RC_f(x)(1+\log|f^{-1}(0)|).\]
\end{lemma}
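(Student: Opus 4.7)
Assume $f(x) = 0$; the case $f(x) = 1$ is symmetric. Set $m := |f^{-1}(1)|$ and $R := \RC_f(x)$. The edge case $m = 0$ is trivial since then $\C_f(x) = \RC_f(x) = 0$, so assume $m \ge 1$, which in particular forces $R \ge 1$.

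The plan is to recognize $\C_f(x)$ and $\RC_f(x)$ as the integer and fractional optima of a single set-cover instance, and then apply the standard greedy logarithmic approximation. A $0$-certificate for $x$ is exactly a set of indices $A \subseteq \{1,\dots,N\}$ with the property that for every $y \in f^{-1}(1)$ some $i \in A$ witnesses $y_i \ne x_i$ (otherwise that $y$ would be a $1$-input consistent with the certificate). So $\C_f(x)$ equals the minimum cover of the universe $f^{-1}(1)$ by the sets $S_i := \{y \in f^{-1}(1) : y_i \ne x_i\}$. The fractional-hitting-number characterization of $\RC_f(x)$ recalled in the preliminaries supplies weights $w_1,\dots,w_N \ge 0$ with $\sum_i w_i = R$ and $\sum_{i \,:\, y_i \ne x_i} w_i \ge 1$ for every $y \in f^{-1}(1)$; this is exactly a fractional cover of the same instance.

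Given this set-cover viewpoint, I would run the standard greedy procedure. Let $U_0 := f^{-1}(1)$, and at step $t$ pick an index $i^*$ that hits the most uncovered inputs. The averaging identity
\[
\sum_i w_i\,|S_i \cap U_t| \;=\; \sum_{y \in U_t}\sum_{i \,:\, y_i \ne x_i} w_i \;\ge\; |U_t|
\]
produces some $i^*$ with $|S_{i^*} \cap U_t| \ge |U_t|/R$, hence $|U_{t+1}| \le |U_t|(1 - 1/R) \le |U_t|\,e^{-1/R}$. Iterating, $|U_t| \le m\,e^{-t/R}$, and the cover is complete once $t > R \ln m$; taking $t = \lfloor R \ln m \rfloor + 1$ yields $\C_f(x) \le R \ln m + 1$.

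The last step is an arithmetic comparison: using $R \ge 1$ and $m \ge 1$,
\[
R \ln m + 1 \;\le\; R(1 + \log_2 m) \quad\Longleftrightarrow\quad R\bigl((1 - \ln 2)\log_2 m + 1\bigr) \;\ge\; 1,
\]
and the right-hand inequality holds because its left factor is at least $R \ge 1$. The only delicate point I anticipate is pinning down the constants so that the bound lands exactly at $(1 + \log|f^{-1}(1)|)$ rather than a loose $O(\cdot)$ form; a randomized-rounding variant (sample each index independently with probability $\min(1,\alpha w_i)$ for $\alpha$ slightly larger than $\ln m$, then union-bound the uncovered inputs) gives the same asymptotics and would be an equally clean substitute if the greedy ceiling accounting proves inconvenient.
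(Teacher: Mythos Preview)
Your proposal is correct and follows essentially the same approach as the paper: both recognize $\C_f(x)$ as a set-cover optimum over $f^{-1}(1-f(x))$ with sets $S_i=\{y:y_i\ne x_i\}$, run the greedy cover, and use the fractional characterization of $\RC_f(x)$ to show each greedy step removes at least a $1/\RC_f(x)$ fraction of the remaining inputs. The only cosmetic difference is that the paper tracks the worst-step density $\min_i p_i$ and afterward argues $\RC_f(x)\ge 1/\min_i p_i$, whereas you plug the fractional weights into the averaging bound at every step; these are two standard phrasings of the same greedy set-cover analysis.
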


\begin{proof}
For $x\in f^{-1}(1)$, we wish to upper-bound $\C_f(x)$
in terms of $\RC_f(x)$, assuming $f^{-1}(0)$ is small.
A certificate for $x$ consists of a partial assignment
of $x$ that contradicts
all the elements of $f^{-1}(0)$.

Consider the greedy strategy for certifying $x$,
which works by repeatedly choosing the bit of $x$
that contradicts as many of the $0$-inputs as possible,
and adding it to the certificate.
By definition, this strategy produces a certificate for
$x$ of size at least $\C_f(x)$.

Let $p_i$ be the fraction of the remaining inputs
which are contradicted by the $i$-th bit of the greedy
algorithm. The number of remaining inputs
during the run of the greedy algorithm is then
\[|f^{-1}(0)|,\;|f^{-1}(0)|(1-p_1),\;
|f^{-1}(0)|(1-p_1)(1-p_2),\dots\]

The number of remaining inputs in the greedy
algorithm will be upper-bounded by a geometric sequence
that starts at $|f^{-1}(0)|$ and has ratio
$1-\min_i p_i$.
Such a sequence decreases to $1$ after at most
\[\frac{-1}{\log(1-\min_i p_i))}(1+\log|f^{-1}(0)|)
    \leq\frac{1+\log|f^{-1}(0)|}{\min_i p_i}\]
steps. It follows that
\[\C_f(x)\leq\frac{1+\log|f^{-1}(0)|}{\min_i p_i}.\]

It remains to show that $\RC_f(x)=\Omega(1/\min_i p_i)$.
Let $j$ be the step of the greedy algorithm that
achieves this minimum, i.e. $p_j=\min_i p_i$.
Then before the $j$\textsuperscript{th}
step of the algorithm,
there is a non-empty set $S$ of $0$-inputs for $f$
such that for any bit of the input, at most a $p_j$
fraction of the elements of $S$
disagree with $x$ on that bit. In other words,
$x$ is entry-wise very close to the ``average''
of the elements of $S$. If we give each
element of $S$ weight $1/(p_j|S|)$, we would
get a feasible set of fractional blocks with
total weight $1/p_j$. Thus $\RC_f(x)\geq 1/p_j$,
so $\C_f(x)\leq \RC_f(x)(\log|f^{-1}(0)|+1)$.
An analogous argument works when $x$ is a $0$-input to $f$.
\end{proof}

\begin{theorem} \label{thm:Hbal}
    Let $N\geq 2$, and let $f:\B^N\to\{0,1\}$
    be a total function. Then
    \[\Hi(\C_f)\leq 10\Bal(f)\log N.\]
\end{theorem}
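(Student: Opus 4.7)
The plan is to combine \lem{CvsRC} with \lem{sauer} and the convex-hull characterization of $\RC_f$, finishing with a Hamming-ball covering bound. Assume WLOG $|f^{-1}(0)| \leq |f^{-1}(1)|$, so $\Bal(f) = 1 + \log|f^{-1}(0)|$, and set $h := \Hi(\C_f)$. Among the $\geq 2^h$ inputs with $\C_f(x) \geq h$, at most $2^{\Bal(f)-1}$ can be $0$-inputs; so as soon as $h \geq \Bal(f)$ (the case $h < \Bal(f)$ is already trivial) at least $2^{h-1}$ of them are $1$-inputs, and \lem{CvsRC} forces $\RC_f(x) \geq h/\Bal(f) =: r$ on each. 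Call this set of one-inputs $T$, so $|T| \geq 2^{h-1}$ and $\RC_f \geq r$ pointwise on $T$.

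By \lem{sauer}, $T$ shatters some $B \subseteq [N]$ with $|B| \geq (h-1)/\log(N+1)$. For each pattern $v \in \B^{|B|}$, pick $x_v \in T$ whose restriction to $B$ equals $v$, and let $\mu_{x_v}$ be the distribution on $f^{-1}(0)$ furnished by $\RC_f(x_v) \geq r$, satisfying $\Pr_{y \sim \mu_{x_v}}[y_i \neq (x_v)_i] \leq 1/r$ for every $i$. The expected number of disagreements between $y|_B$ and $v$ is then at most $|B|/r$, so Markov's inequality produces a concrete $y_v \in f^{-1}(0)$ whose restriction $y_v|_B$ lies within Hamming distance $d := 2|B|/r$ of $v$. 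Hence every $v \in \B^{|B|}$ is within Hamming distance $d$ of some element of $\{y|_B : y \in f^{-1}(0)\}$, a set of size at most $2^{\Bal(f)-1}$. The standard entropy upper bound on Hamming-ball volumes (valid for $r \geq 4$) then forces $|B|(1 - \phi(2/r)) \leq \Bal(f) - 1$, where $\phi$ is the binary entropy; combined with $|B| \geq (h-1)/\log(N+1)$, this gives $h \leq 1 + \Bal(f)\log(N+1)/(1 - \phi(2/r))$.

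The main obstacle the plan must circumvent is the regime where the shattered set $B$ is larger than $r$: a naive argument demanding \emph{exact} agreement on $B$ can only exploit a subset of size $< r$ and so yields only the much weaker bound $h \leq \Bal(f)^2$, which exceeds $10\,\Bal(f)\log N$ whenever $\Bal(f) > 10\log N$. Trading exact agreement for approximate agreement via Markov, and then invoking the ball-covering estimate, is exactly what restores the linear dependence on $\Bal(f)$. To finish, split on $r$: if $r \leq 10$ then $h = r\,\Bal(f) \leq 10\,\Bal(f) \leq 10\,\Bal(f)\log N$ (using $N \geq 2$), while if $r > 10$ then $\phi(2/r) \leq \phi(1/5) < 0.8$, so $1/(1-\phi(2/r))$ is a small constant and the covering estimate delivers $h \leq 10\,\Bal(f)\log N$ with room to spare. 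Tuning the threshold in this two-regime argument recovers the precise constant in the statement.
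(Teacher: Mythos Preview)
Your proposal is correct and follows essentially the same route as the paper's own proof: both combine \lem{CvsRC}, \lem{sauer}, the convex-hull description of $\RC_f$, and a Hamming-ball covering bound in exactly the way you describe. The only organizational difference is that the paper fixes the certificate threshold at $5\Bal(f)$ up front (so that $\RC_f(x)>5$ automatically and the disagreement fraction is below $1/5$), whereas you work directly with $h=\Hi(\C_f)$ and handle the small-$r$ regime by a separate case split; the resulting constants match.
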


\noindent
\begin{proof}
Without loss of generality, suppose
$|f^{-1}(0)|\leq|f^{-1}(1)|$.
The number of $0$-inputs with large certificates is
at most $|f^{-1}(0)|\leq 2^{\Bal(f)}$.
Let $S$ be the set of $1$-inputs with certificates
of size greater than $5\Bal(f)$.
We wish to show that $S$ is small.
\lem{sauer} implies there is a set
$B=\{i_1,i_2,\ldots,i_{|B|}\}$ of indices
of the input of size at least
$\log|S|/(2\log N)$ that is shattered by $S$.
Therefore, to show that $S$ is small, it suffices
to show that $B$ is small.

From \lem{CvsRC}, we have
$\C_f(x)\leq\RC_f(x)\Bal(f)$
for any $1$-input $x$, so for all $x\in S$, we have
$\RC_f(x)\geq \C_f(x)/\Bal(f)>5$.
This means for all $x\in S$, there is
a distribution $\mu_x$ over $0$-inputs such that
for each $i$, the probability that $y_i\neq x_i$
when $y$ is sampled from $\mu_x$ is less than $1/5$.

Let $\mu_B$ be the uniform distribution over $B$.
Let $\delta(b,c)=1$ if $b\neq c$ and $0$ otherwise.
We then write
\[\frac{1}{5}
    >\mathop{\mathbb{E}}_{i\sim\mu_B}\left(
\mathop{\mathbb{E}}_{y\sim\mu_x}\delta(x_i,y_i)\right)
=\mathop{\mathbb{E}}_{y\sim\mu_x}\left(
\mathop{\mathbb{E}}_{i\sim\mu_B}\delta(x_i,y_i)\right).\]
We can conclude that for any $x\in S$, there exists
a $0$-input $y_x$ that differs from $x$
in less than one fifth of the bits of $B$.
In other words, the distance between $x|_B$ and $y_x|_B$ is
less than $|B|/5$. The idea is now
to upper-bound $|B|$ by using the fact
that for every string in $\B^{|B|}$ there is a $0$ input $y$
such that $y|_B$ is close to that string,
and there are few $0$-inputs overall.
Indeed, the number of strings in
$\B^{|B|}$ is $2^{|B|}$, and each $0$-input
can only be of distance less than $|B|/5$
from $2^{H(1/5)|B|}$ of them (where $H(1/5)$
denotes the entropy of $1/5$). Therefore,
to cover all the strings in $\B^{|B|}$,
there must be more than $2^{(1-H(1/5))|B|}$
$0$-inputs. Then
\[\Bal(f)\geq\log |f^{-1}(0)|> (1-H(1/5))|B|
\geq (1-H(1/5))\frac{\log |S|}{2\log N},\]
so
\[\log |S|< \frac{2\Bal(f)\log N}{1-H(1/5)}
\leq 8\Bal(f)\log N.\]
This means there are less than
$2^{8\Bal(f)\log N}$
$1$-inputs with certificate size at least $5\Bal(f)$.
There are also at most $2^{\Bal(f)}$ $0$-inputs
with certificate size at least $5\Bal(f)$
(because there are at most that many $0$-inputs in total).
Thus the log of the total number of inputs with certificates
larger than $5\Bal(f)$ is at most
$10\Bal(f)\log N$.
It follows that $\Hi(\C_f)\leq 10\Bal(f)\log N$.
\end{proof}

\begin{theorem} \label{thm:Hbs}
    Let $f:\B^N\to\B$ be a total function. Then
    \[\Hi(\C_f)=O(\Hi(\bs_f)^2\log N).\]
\end{theorem}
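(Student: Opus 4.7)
The plan is to adapt the classical proof $\C(f) \leq \bs_f(x) \cdot \bs_{1-f(x)}(f) \leq \bs(f)^2$ to the H-index setting, by emulating the disjoint-minimal-sensitive-block decomposition while tracking counts via Sauer--Shelah. Write $h = \Hi(\C_f)$ and $b = \Hi(\bs_f)$; it suffices to show $b = \Omega(\sqrt{h/\log N})$. Take $T$, a set of $\geq 2^h$ inputs with $\C_f \geq h$, and assume without loss of generality that $T_0 := T \cap f^{-1}(0)$ has $|T_0| \geq 2^{h-1}$.

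For each $x \in T_0$, the classical greedy procedure yields disjoint minimal sensitive blocks $B_1(x), \ldots, B_{k(x)}(x)$ with $k(x) \leq \bs_f(x)$, $\sum_i |B_i(x)| \geq \C_f(x) \geq h$, and $|B_i(x)| \leq \bs_f(x^{B_i(x)})$. Hence for each $x$ either (Case A) $\bs_f(x) \geq \sqrt{h}$, or (Case B) there is a $1$-neighbor $y(x) := x^{B_{i^{*}}(x)}$ with $\bs_f(y(x)) \geq \sqrt{h}$. If at least half of $T_0$ is in Case A, we are immediately done: we have $2^{h-2}$ inputs with $\bs_f \geq \sqrt{h}$, yielding $b \geq \sqrt{h}$ (even without the $\log N$ slack).

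The main obstacle is Case B, since the map $x \mapsto y(x)$ may be highly non-injective. If the image has size $\geq 2^{h-2}$ we are done. Otherwise, assuming $\sqrt{h} > b$ (else $h \leq b^2$ trivially), the image lies inside $\{y : \bs_f(y) \geq \sqrt{h}\}$, which by definition of $\Hi(\bs_f)$ and monotonicity of $|\{x : \bs_f(x) \geq t\}|$ in $t$ has size $< 2^{b+1}$. So by pigeonhole some $y^{\ast}$ receives $\geq 2^{h-b-3}$ preimages in $T_0$, each differing from $y^{\ast}$ by a minimal sensitive block $B_x$ of size $\geq \sqrt{h}$.

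The final step is the combinatorial crux, and the place I expect most trouble. Applying \lem{sauer} to the preimages of $y^{\ast}$ produces a shattered coordinate set $A \subseteq [N]$ of size $\Omega((h-b)/\log N)$, so the associated blocks $\{B_x\}$ realize all $2^{|A|}$ subsets of $A$ as their intersections with $A$. One then must extract from this shattered family either many pairwise disjoint sensitive blocks of $y^{\ast}$ (lower-bounding $\bs_f(y^{\ast})$) or many different ``nearby'' inputs with forced large block sensitivity; iterating the splitting on the newly-discovered high-$\bs$ input propagates the conclusion until $2^{\Omega(\sqrt{h/\log N})}$ distinct inputs are exhibited with $\bs_f \geq \Omega(\sqrt{h/\log N})$, giving $b = \Omega(\sqrt{h/\log N})$. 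The delicate point will be preventing the accumulation of extra $\log N$ factors across the Sauer--Shelah extraction and the iteration, which requires a careful choice of the splitting threshold $\sqrt{h}$ and of where to invoke the H-index bound on the image of $y(\cdot)$.
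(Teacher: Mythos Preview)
Your dichotomy (Cases A and B) and the pigeonhole step are correct and match the paper's starting observation. The gap is in the ``final step.'' After Sauer--Shelah you have a single $y^*$ together with many sensitive blocks $\{B_x\}$ whose restrictions to $A$ realize all subsets of $A$; but this does not produce many \emph{distinct inputs} with large block sensitivity, which is what a lower bound on $\Hi(\bs_f)$ requires. Shattering on $A$ gives no disjointness of the $B_x$ outside $A$, so you cannot extract $|A|$ disjoint sensitive blocks of $y^*$; and even if you could, that would lower-bound $\bs_f$ at one point only. The ``iteration'' you propose has no clear content: the preimages of $y^*$ are themselves Case-B inputs, and rerunning the argument on them simply returns you to $y^*$. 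I do not see a way to close this along the lines you sketch.

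The paper uses a reduction you do not attempt. Set $b:=\sqrt{h/2}$, let $B_1$ be the set of $1$-inputs with $\bs_f>b$, and define an auxiliary total function $g$ by $g(z)=1$ iff $z\in B_1$. The key lemma (which reuses exactly your block decomposition) is that every $0$-input $x$ with $\C_f(x)>h$ and $\bs_f(x)\leq b$ satisfies $\C_g(x)>h/2$: any $0$-certificate $p$ for $g$ at $x$ forces every minimal $f$-sensitive block of $x$ disjoint from $p$ to land, after flipping, on a $1$-input consistent with $p$, hence outside $B_1$, hence with block sensitivity $\leq b$, hence of size $\leq b$; together with $\bs_f(x)\leq b$ this yields an $f$-certificate of size at most $|p|+b^2$, so $|p|>h-b^2=h/2$. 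Now apply \thm{Hbal} to $g$: since $\Bal(g)\leq 1+\log|B_1|$, one gets $\Hi(\C_g)=O((1+\log|B_1|)\log N)$, and combining with the lemma forces $2^{\Omega(h/\log N)}$ inputs to have $\bs_f>b$. The missing idea is precisely this detour through the indicator function of the high-$\bs$ set and the balance bound of \thm{Hbal}, rather than a direct combinatorial extraction from the shattered family.
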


\begin{proof}
Let $A$ be the set of inputs that have certificate size
more than $\Hi(\C_f)$.
Let $A_0$ be the set of $0$-inputs in $A$,
and let $A_1$ be the set of $1$-inputs in $A$.
Let $B$ be the set of inputs that have block sensitivity
more than $b$, with $b=\sqrt{\Hi(\C_f)/2}$.
Let $B_0$ be the set of $0$-inputs in $B$,
and let $B_1$ be the set of $1$-inputs in $B$.
Without loss of generality, assume $|A_0|\geq |A_1|$.
Since $|A|\geq 2^{\Hi(\C_f)}$, we have $|A_0|\geq 2^{\Hi(\C_f)-1}$.

Now, let $g:\B^N\to\{0,1\}$ be the total function
defined by $g(x)=1$ if and only if $x\in B_1$.
Suppose $x$ is an element of $A_0\backslash B_0$.
Consider certifying that $x$ is a $0$-input for $g$;
let $p$ be the smallest such certificate.
Then $p$ is consistent with $x$ but inconsistent
with all the strings in $B_1$. We claim that this
certificate must be large: its size must be greater
than $\Hi(\C_f)-b^2=\Hi(\C_f)/2$. To show this, we show
that we can turn $p$ into a certificate for $x$ with
respect to $f$ (instead of with respect to $g$)
by adding only $b^2$ bits to it.

Let $q$ be a minimal sensitive block of $x$
(with respect to $f$)
that is disjoint from $p$. Since $x$
is a $0$-input for $f$, $x^q$ is a $1$-input
for $f$. Since $q$ is disjoint from $p$,
$x^q$ is consistent with $p$, so $x^q\notin B_1$.
Thus the block sensitivity of $x^q$ is at most $b$.
However, since $q$ was a minimal sensitive block,
the sensitivity of $x^q$ is at least $|q|$;
thus $|q|\leq b$. It follows that
all minimal sensitive blocks of $x$
that are disjoint from $p$ must have size at most $b$.

In addition, since $x\in A_0\backslash B_0$, the block
sensitivity of $x$ is at most $b$.
We can now construct a certificate for $x$
by taking a maximal set of minimal disjoint sensitive
blocks for $x$, all of which are disjoint from $p$.
There will be at most $b$ such blocks, and each
will have size at most $b$.
Therefore, this certificate for $x$ has size at most
$|p|+b^2$. Since $x\in A_0$, we must have
$|p|+b^2>\Hi(\C_f)$, or $|p|>\Hi(\C_f)-b^2=\Hi(\C_f)/2$.
We have shown that the elements of $A_0\backslash B_0$
all have certificate size greater than $\Hi(\C_f)/2$
even with respect to $g$.

Now, by \thm{Hbal},
the number of inputs $x$ that have certificate
size more than $10(1+\log|B_1|)\log N$ with respect to $g$
is at most $2^{10(1+\log|B_1|)\log N}$.
It follows that either
$\Hi(\C_f)/2\leq 10(1+\log|B_1|)\log N$
(so that the theorem doesn't apply), or else
$|A_0\backslash B_0|\leq 2^{10(1+\log|B_1|)\log N}$.

In the former case, we have
\[\log |B|\geq\log |B_1|
\geq\frac{\Hi(\C_f)}{20\log N}-1.\]
In the latter case, we have 
\[2^{\Hi(\C_f)-1}\leq |A_0|\leq |B_0|+
2^{10(1+\log|B_1|)\log N}
=|B_0|+(2|B_1|)^{10\log N}
\leq (2|B|)^{10\log N},\]
so in that case,
\[\log |B| \geq
    \frac{\Hi(\C_f)-1}{10\log N}-1.\]
Thus, in both cases,
\[\log |B|\geq\frac{\Hi(\C_f)-1}{20\log N}-1
=\Omega\left(\frac{\Hi(\C_f)}{\log N}\right).\]

This means there are $2^{\Omega(\Hi(\C_f)/\log N)}$
inputs with block sensitivity more than
$\sqrt{\Hi(\C_f)/2}$. We thus have
\[\Hi(\bs_f)=\Omega\left(\min\left\{
\frac{\Hi(\C_f)}{\log N},\sqrt{\Hi(\C_f)}\right\}\right)
=\Omega\left(\sqrt{\frac{\Hi(\C_f)}{\log N}}\right).\]
\end{proof}

\thm{main} now follows from \thm{RvsQ}
(the non-sculptability theorem in terms of $\Hi(\C_f^2)$),
\cor{sculpting_upper}
(the sculptability result in terms of $\Hi(\RC_f)$),
and \thm{Hbs}
(relating $\Hi(\bs_f)$ to $\Hi(\C_f)$), together
with the properties that $\Hi(\C_f^2)\leq\Hi(\C_f)^2$
and that $\Hi(\bs_f)\leq\Hi(\RC_f)$. We restate
\thm{main} here for convenience.

\main*

\thm{main0} follows as a corollary. This completes
the proof of our main result.

\section{Sculpting Randomized Speedups}

Now that we've characterized sculpting quantum
query complexity, we turn our attention to sculpting
other measures. Recall that
\[\Q(f)\leq\R(f)\leq\R_0(f)\leq\D(f).\]
We showed that sculpting $\R(f)$ vs.\ $\Q(f)$
is possible if and
only if $f$ has a large number of large certificates.
We now show that the exact same condition characterizes
sculpting $\D(f)$ vs.\ $\R_0(f)$.
On the other hand, we show that $\R_0(f)$ vs.\ $\R(f)$
behaves differently: it's \emph{always} possible
to sculpt a function $f$ to a promise $P$ such that
$\R(f|_P)$ is constant and $\R_0(f|_P)$ is almost
as large as $\R_0(f)$.

We start by characterizing sculpting for $\D$ vs.\ $\R_0$.

\subsection{Sculpting $\D$ vs.\ $\R_0$}

The proof of this characterization will follow that
of \thm{main}. For the non-sculptability direction,
we need an analogue of \thm{RvsQ}, relating
deterministic and zero-error randomized query complexities
in terms of $\Bal(f)$. We prove the following theorem.

\begin{theorem}\label{thm:DvsR0}
    Let $f:\B^N\to\{0,1\}$ be a partial function. Then
    \[D(f)\leq 2R_0(f)\Bal(f).\]
\end{theorem}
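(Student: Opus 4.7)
The plan is to mirror the proof of \thm{RvsQ} but to exploit the certifying nature of zero-error algorithms in place of the majority-string and case-analysis machinery used there. Without loss of generality, assume $|f^{-1}(0)| \le |f^{-1}(1)|$, so that $\Bal(f) = 1 + \log|f^{-1}(0)|$. I would build a deterministic algorithm that maintains a shrinking set $Z \subseteq f^{-1}(0)$ of still-consistent $0$-candidates, initially $f^{-1}(0)$, and that in each iteration either halts with the correct answer or halves $|Z|$ using at most $2\R_0(f)$ queries. Since $Z$ starts with size $|f^{-1}(0)|$, this terminates in at most $\lfloor \log|f^{-1}(0)|\rfloor + 1 \le \Bal(f)$ iterations, immediately yielding $\D(f) \le 2\R_0(f)\Bal(f)$.

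For a single iteration, let $R$ be a zero-error randomized algorithm for $f$ with worst-case expected query complexity $\R_0(f)$. Every $y \in Z$ lies in $\Dom(f)$, so $\E_R[\text{queries}(R,y)] \le \R_0(f)$; averaging over $y$ drawn uniformly from $Z$ and swapping expectations produces a fixed random string $r$ for which the deterministic tree $T := R(\cdot, r)$ has expected depth at most $\R_0(f)$ on $Z$. Markov's inequality then gives a subset $S \subseteq Z$ with $|S| \ge |Z|/2$ on which $T$ terminates within $2\R_0(f)$ queries. Let $T'$ be $T$ truncated at depth $2\R_0(f)$, outputting either the value $T$ would have output (if $T$ reaches a leaf within $2\R_0(f)$ queries) or the special symbol \emph{timeout}.

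Now run $T'$ on the input $x$. Because $R$ is zero-error, for any fixed $r$ each leaf of $T$ is labeled with $f(y)$ for every $y \in \Dom(f)$ consistent with its root-to-leaf queries, so if $T'$ outputs a bit $b$ we can certifiably conclude $f(x) = b$ and halt. If instead $T'$ times out, then $x \notin S$: either $x \in f^{-1}(1)$ (never a candidate) or $x \in Z \setminus S$, so the update $Z \gets Z \setminus S$ is safe. After at most $\Bal(f)$ iterations $Z$ becomes empty and we output $1$, correctly since $x \in \Dom(f) \setminus f^{-1}(0) = f^{-1}(1)$.

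The subtle step is justifying the truncation. A Las Vegas algorithm has no worst-case depth bound, so one needs that an early-terminating branch of $T$ still yields a certified answer on any input reaching it; this is immediate from zero-error correctness but would fail in the bounded-error setting. That is exactly why \thm{RvsQ} needed \lem{RvsQlemma} together with the majority-string construction, both of which I replace here with a simpler averaging-over-$Z$ argument.
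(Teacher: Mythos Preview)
Your proof is correct and follows essentially the same approach as the paper's: both truncate the zero-error algorithm at $2\R_0(f)$ queries, use an averaging argument over the current candidate set $Z\subseteq f^{-1}(0)$ to extract a single deterministic tree that certifies at least half of $Z$, and iterate to halve $Z$ until it is empty. The only cosmetic difference is the order of operations---the paper applies Markov per input first and then averages over $Z$, whereas you average over $Z$ first to fix the random string and then apply Markov over $Z$---but the resulting deterministic tree and the halving argument are identical.
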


\begin{proof}
Consider the zero-error
randomized algorithm that takes $\R_0(f)$
expected queries to evaluate $f$.
By Markov's inequality, if we let this algorithm
make $2\R_0(f)$ queries on input $x$, it will succeed
in computing $f(x)$ (and provide a certificate for $x$)
with probability at least $1/2$.
This gives us a probability distribution $\mu$ over
deterministic algorithms, each of which makes
$2\R_0(f)$ queries, such that for each input $x$
the probability that an
algorithm sampled from $\mu$ finds a certificate when run on
$x$ is at least $1/2$.

For a deterministic algorithm $D$ and an input $x$,
let $c(D,x)=1$ if $D$ finds a certificate for $x$,
and $c(D,x)=0$ otherwise. Let $Z\subseteq\B^N$. Then
\[\E_{D\sim\mu}\left[\sum_{x\in Z}c(D,x)\right]
=\sum_{x\in Z}\E_{D\sim\mu}[c(D,x)]
\geq\sum_{x\in Z}(1/2)
=\frac{|Z|}{2}.\]
It follows that there is a deterministic
algorithm $D_Z$ that makes $2\R_0(f)$ queries
and finds a certificate when run on half the inputs in $Z$.

Suppose without loss of generality
that $|f^{-1}(0)|\leq |f^{-1}(1)|$.
Now, on input $x$, set $Z=f^{-1}(0)$,
and run $D_Z$ on $x$.
If it fails to find a certificate,
then we have eliminated half of $Z$
as possibilities for the input. Repeating this
$\lfloor\log |f^{-1}(0)|\rfloor+1\leq\Bal(f)$
times suffices to eliminate all of $f^{-1}(0)$
as possibilities for $x$,
and hence to determine the value of $f(x)$.
The total number of queries used is at most
$2\R_0(f)\Bal(f)$.
\end{proof}

Note that \thm{DvsR0} and \thm{RvsQ}
together complete the proof of \thm{bal}.

Next, we turn \thm{DvsR0} into a non-sculptability
theorem in terms of $\Hi(\C_f)$. The argument in
\thm{noSculpting} follows verbatim,
and we get the following
sculpting lower bound.

\begin{corollary}\label{cor:R0H}
Let $f:\B^N\to\B$ be a total function. For any promise
$P\subseteq\B^N$, we have
\[\D(f|_P)=O(\R_0(f|_P)\Hi(\C_f)^2).\]
\end{corollary}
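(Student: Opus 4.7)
The plan is to mirror the proof of \thm{noSculpting} essentially verbatim, substituting \thm{DvsR0} in place of \thm{RvsQ} for the final step. The architecture has two pieces: a deterministic ``reduce to unbalanced'' step that is entirely independent of the complexity measure used for the residual problem, and a ``balanced case'' step that is where we invoke the relevant measure relationship. Since \thm{noSculpting} proves the reduction for $\R$ vs.\ $\Q$ and we now want it for $\D$ vs.\ $\R_0$, we only need to swap out the second ingredient.

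First I would run the identical greedy subroutine from the proof of \thm{noSculpting}: repeatedly locate a $0$-certificate of $f$ of size at most $\sqrt{\Hi(\C_f^2)}$ that is consistent with what has been revealed so far, and query all of its non-$*$ bits; stop after $\sqrt{\Hi(\C_f^2)}$ iterations or when no consistent small $0$-certificate remains. The same counting argument used there shows that the resulting set $S$ of consistent strings satisfies $\Bal(f|_S)\leq\Hi(\C_f^2)+1$: in the ``ran out of certificates'' case all surviving $0$-inputs have $\C_f(x)>\sqrt{\Hi(\C_f^2)}$, and in the ``completed all iterations'' case the same is true of the surviving $1$-inputs, and in either case there are at most $2^{\Hi(\C_f^2)}$ such inputs by the definition of the $\Hi$-index. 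This step is fully deterministic and costs at most $\Hi(\C_f^2)$ queries.

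Next I would apply \thm{DvsR0} to the residual problem $f|_{S\cap P}$, obtaining a deterministic algorithm that uses
\[\D(f|_{S\cap P})\;\leq\;2\R_0(f|_{S\cap P})\,\Bal(f|_{S\cap P})\;=\;O\bigl(\R_0(f|_P)\,\Hi(\C_f^2)\bigr),\]
where we used $\R_0(f|_{S\cap P})\leq\R_0(f|_P)$ and $\Bal(f|_{S\cap P})\leq\Bal(f|_S)$. Concatenating the two phases gives a deterministic algorithm for $f|_P$ of query cost $O(\Hi(\C_f^2)+\R_0(f|_P)\Hi(\C_f^2))=O(\R_0(f|_P)\Hi(\C_f^2))$. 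Finally, invoking the property $\Hi(\C_f^2)\leq\Hi(\C_f)^2$ (already recorded in the preliminaries) converts this to the stated bound $\D(f|_P)=O(\R_0(f|_P)\Hi(\C_f)^2)$.

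There is really no obstacle: the reduction phase never appealed to the quantum/randomized distinction, and \thm{DvsR0} was stated in exactly the form (partial functions, $\Bal$-dependent) needed to slot in. The only thing to verify carefully is that the hypothesis ``$f$ is total'' used in \thm{noSculpting} is still needed here -- it is, because the certificate structure of $f$ (not $f|_P$) is what gets exploited by the greedy subroutine, and only totality guarantees the existence of consistent $0$- or $1$-certificates regardless of $P$.
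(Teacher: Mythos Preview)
Your proposal is correct and is exactly the approach the paper takes: the paper simply states that ``the argument in \thm{noSculpting} follows through verbatim'' with \thm{DvsR0} replacing \thm{RvsQ}, which is precisely what you have spelled out. The monotonicity facts $\R_0(f|_{S\cap P})\leq\R_0(f|_P)$ and $\Bal(f|_{S\cap P})\leq\Bal(f|_S)$ and the final use of $\Hi(\C_f^2)\leq\Hi(\C_f)^2$ are all as in the paper.
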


We now prove the other direction: we show
that sculpting is possible when $\Hi(\RC_f)$
is large. Using the arguments from \sec{sculpt},
it suffices to solve the extended queries problem
for $\D$ vs.\ $\R_0$. We do this using the reduction
to communication complexity in
\thm{communication_extension}.

\begin{theorem}\label{thm:R0extension}
For all $N\in\mathbb{N}$, there is a set of partial
functions $S$ from $\B^N$ to $\B$ such that
for all $G\geq N$,
\[\R_0(S)=O(1),\qquad
\D^G(S)=\Omega\left(\frac{N}{\log G}\right).\]
\end{theorem}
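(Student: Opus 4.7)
The plan is to mirror the strategy already used in \thm{Qextension}, but now for the pair $(R_0, D)$. By \thm{communication_extension}, if I can find a two-party communication problem $f:\B^{N_1}\times\B^N\to\B$ with $\D^{\CC}(f)=\Omega(N)$ and such that every marginal $f_a$ has $\R_0(f_a)=O(1)$, then setting $S=\Mar(f)$ immediately gives $\R_0(S)=O(1)$ by construction and $\D^G(S)=\Omega(\D^{\CC}(f)/\log G)=\Omega(N/\log G)$ by the theorem. So the whole task reduces to constructing the right communication problem.

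For the candidate $f$, the key structural constraint is that every marginal admits a \emph{zero-error} algorithm with $O(1)$ expected queries: this rules out standard ``approximate counting'' promises (like gap-Hamming), because zero-error forces both output values to have constant-size certificates of constant density. A natural candidate is a Hidden-Matching-flavored promise problem: Alice's input $\sigma$ encodes a perfect matching $\{(a_i,b_i)\}_{i=1}^{N/2}$ on $[N]$, Bob's input is $x\in\B^N$, and the promise guarantees that among the $N/2$ matching-pairs $(x_{a_i},x_{b_i})$, either at least $\varepsilon N$ are $(1,1)$ with no $(0,0)$ appearing at all (output $1$), or at least $\varepsilon N$ are $(0,0)$ with no $(1,1)$ appearing (output $0$). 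Then $\R_0(f_\sigma)=O(1)$: sample a uniformly random matching-pair, query both bits, and with constant probability see a $(1,1)$ or $(0,0)$ that certifies the answer; exclusivity of the two cases makes the certificate decisive. So the marginals are easy for essentially the same reason the ``11-vs-00'' toy example is.

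The main obstacle will be establishing $\D^{\CC}(f)=\Omega(N)$ for such an $f$. The plan is to produce a fooling set on Bob's side of size $2^{\Omega(N)}$: for each ``all-ones-on-the-matching'' input $x_\sigma$ (which outputs $1$ under $\sigma$), show that swapping in another matching $\sigma'$ either takes the pair $(\sigma',x_\sigma)$ out of the promise or flips the output, so no monochromatic rectangle can contain two such pairs. If the fooling-set approach for this particular $f$ turns out to be awkward (e.g., because promise violations are hard to control across matchings), a backup is to replace Hidden Matching by a known promise communication problem with $\D^{\CC}=\Omega(N)$ and augment Bob's input with a dense ``certificate layer'' (in the spirit of the partial-sums trick in \thm{Qextension}) so that every marginal becomes certifiable by $O(1)$ random queries while the deterministic communication lower bound is preserved. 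Either way, the payoff of \thm{sculpting_extension} combined with the resulting $S$ is the $\R_0(f|_{P'})=O(1)$, $\D(f|_{P'})=\Omega(\Hi(\RC_f)/\log^2 N)$ half of \cor{sculpting_upper}.
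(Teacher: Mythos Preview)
Your overall framework is correct and matches the paper: exhibit a communication problem $f$ with $\D^{\CC}(f)=\Omega(N)$ and $\R_0(\Mar(f))=O(1)$, then invoke \thm{communication_extension}. The issue is with your primary candidate.

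Your Hidden-Matching-flavored problem has $\D^{\CC}(f)=O(1)$, not $\Omega(N)$. Under your promise, in the $1$-case every matched pair is $(1,1)$, $(0,1)$, or $(1,0)$, with at least $\varepsilon N$ of them $(1,1)$; if $k$ pairs are $(1,1)$ then $\wt(x)=2k+(N/2-k)=N/2+k\ge N/2+\varepsilon N$. Symmetrically, in the $0$-case $\wt(x)\le N/2-\varepsilon N$. So Bob simply sends one bit indicating whether $\wt(x)>N/2$, and Alice outputs accordingly, independent of $\sigma$. This also explains why the fooling-set plan cannot succeed: the entire set $\{\text{all }\sigma\}\times\{x:\wt(x)\ge N/2+\varepsilon N\}$ is a single monochromatic $1$-rectangle. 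More generally, any promise that forces ``many $(1,1)$'s and no $(0,0)$'s'' versus the reverse will be decided by Hamming weight alone, because the matching is a perfect matching on all of $[N]$.

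Your backup plan is essentially what the paper does, and is the cleanest fix. The paper starts from \textsc{Equality}, which already has $\D^{\CC}=\Omega(n)$, and turns it into a promise problem so that $\R_0$ of the marginals is $O(1)$: give Alice and Bob two strings each with the promise that exactly one of the two pairs is equal, and encode Bob's two strings by a fixed code of length $3n$ with pairwise distance $\Omega(n)$. For a fixed $a=(a_1,a_2)$, sampling $O(1)$ random positions from each of Bob's codewords reveals, with constant probability, a certified disagreement with the encoding of $a_1$ or of $a_2$, and the promise makes this a zero-error certificate. The deterministic lower bound survives because the code is a fixed injection, so the underlying \textsc{Equality} instance is recoverable. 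This is exactly the ``known $\D^{\CC}=\Omega(N)$ problem plus a dense certificate layer on Bob's side'' you describe in your backup.
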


\begin{proof}
We start with \textsc{Equality},
in which Alice and Bob are each given an $n$-bit
string and wish to know if their strings are equal.
This problem has deterministic query complexity
$\Omega(n)$, but small randomized query complexity.
To make the zero-error randomized query complexity
small as well, we give Alice and Bob two strings each,
with the promise that either their first strings
are equal and the second strings are not, or vice versa.
The goal will be to determine which is the case.
It is not hard to see that the deterministic communication
complexity of this problem is still $\Omega(n)$.

We need to get the zero-error
randomized query complexity of the marginal
functions to be small. To do this, we introduce another
modification: we encode each of Bob's strings using a fixed
random code of length $3n$. This code will have the property
that the distance between any pair of codewords is
$\Omega(n)$. To compute a marginal function $f_{a_1,a_2}$
indexed by Alice's strings,
we can simply randomly sample from each of Bob's strings;
after $O(1)$ samples, we will discover which of
his strings do not match the codeword corresponding
to $a_1$ and $a_2$.

This construction gives us a function
$f:\B^{2n}\times\B^{6n}\to\B$
such that $\D^{\CC}(f)=\Omega(n)$ and
$\R_0(\Mar(f))=O(1)$.
Setting $N=6n$ and using \thm{communication_extension}
finishes the proof.
\end{proof}

Putting this together, we get the following
sculpting theorem which, together with
\cor{R0H}, is analogous to \thm{main}.

\begin{theorem}\label{thm:R0main}
For all total functions $f:\B^N\to\B$,
there is a promise $P\subseteq\B^N$ such that
\[\D(f|_P)=\Omega\left(
\frac{\sqrt{\Hi(\C_f)}}{\log^{5/2}N}\right)
\qquad\mbox{and}\qquad\R_0(f|_P)=O(1).\]
\end{theorem}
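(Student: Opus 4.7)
The plan is to combine the tools already in hand: the sculpting-via-communication-complexity machinery proved for $\D$ vs.\ $\R_0$ in the second half of \cor{sculpting_upper}, together with the relationship between $\Hi(\C_f)$, $\Hi(\RC_f)$, and $\Hi(\bs_f)$ from \thm{Hbs}. No new construction is needed; the theorem is essentially a corollary obtained by chaining these results and simplifying.

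First, I would invoke the second half of \cor{sculpting_upper}, which already gives a promise $P \subseteq \B^N$ with
\[
\R_0(f|_P) = O(1), \qquad \D(f|_P) = \Omega\!\left( \frac{\Hi(\RC_f)}{\log^2 N}\right).
\]
This is exactly the $\D$ vs.\ $\R_0$ sculpting obtained by plugging the set of functions from \thm{R0extension} into \thm{sculpting_extension}, so nothing more is required for the $\R_0(f|_P) = O(1)$ half of the statement.

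All that remains is to translate the lower bound on $\D(f|_P)$ from $\Hi(\RC_f)$ to $\Hi(\C_f)$. From the discussion in \sec{prelim} we have the trivial inequality $\Hi(\bs_f) \leq \Hi(\RC_f)$, and \thm{Hbs} gives
\[
\Hi(\C_f) = O(\Hi(\bs_f)^2 \log N),
\]
so rearranging yields $\Hi(\RC_f) = \Omega\!\left(\sqrt{\Hi(\C_f)/\log N}\right)$. Substituting into the $\D(f|_P)$ bound above produces
\[
\D(f|_P) = \Omega\!\left( \frac{\sqrt{\Hi(\C_f)/\log N}}{\log^2 N}\right) = \Omega\!\left(\frac{\sqrt{\Hi(\C_f)}}{\log^{5/2} N}\right),
\]
which is the desired estimate.

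There is no serious obstacle here; all of the hard work has already been done in \sec{sculpt} and \sec{Hprops}. The only thing to be careful about is the bookkeeping of logarithmic factors: the $\log^{5/2} N$ in the statement is exactly $\log^2 N$ (from the $\D^N(S)$ bound of \thm{R0extension} together with the $A = \Hi(\RC_f)/(4\log N)$ scaling inside \thm{sculpting_extension}) multiplied by the extra $\sqrt{\log N}$ paid when passing from $\Hi(\RC_f)$ down to $\sqrt{\Hi(\C_f)}$ via \thm{Hbs}. One should verify that the constants in the shattering argument of \thm{Hbs} work in the intended direction (lower bound on $\Hi(\bs_f)$ given $\Hi(\C_f)$), but this is exactly the form in which that theorem is stated.
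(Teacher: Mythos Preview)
Your proposal is correct and takes essentially the same approach as the paper: the paper's proof simply cites \thm{R0extension}, \thm{sculpting_extension}, and \thm{Hbs}, and your argument unwinds exactly this chain (via the second half of \cor{sculpting_upper}, which is itself just \thm{R0extension} plugged into \thm{sculpting_extension}) together with the same passage from $\Hi(\RC_f)$ to $\sqrt{\Hi(\C_f)}$ through \thm{Hbs}. The log-factor bookkeeping you sketch is also right.
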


\begin{proof}
This follows from \thm{R0extension} together with
\thm{sculpting_extension} and \thm{Hbs}.
\end{proof}

We also get the following corollary, analogous to
\thm{main0}.

\begin{corollary}\label{R0basic}
Let $f:\B^N\to\B$ be a total function. Then there is
a promise $P\subseteq\B^N$ such that
$\D(f|_P)=N^{\Omega(1)}$ and $\R_0(f|_P)=N^{o(1)}$,
if and only if $\Hi(\C_f)=N^{\Omega(1)}$.
Futhermore, in this case we also have
$\R_0(f|_P)=O(1)$.
\end{corollary}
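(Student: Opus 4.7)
The plan is to derive \cor{R0basic} as an immediate corollary of the two preceding results in this section: \thm{R0main} supplies the ``if'' direction (together with the ``furthermore'' clause), while \cor{R0H} supplies the ``only if'' direction. No new ideas are required; the argument is exactly parallel to how \thm{main0} follows from \thm{main}, except that the quantum/randomized complexities $(\Q,\R)$ are replaced by $(\R_0,\D)$.

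For the ``if'' direction, I would assume $\Hi(\C_f)\geq N^{\epsilon}$ for some constant $\epsilon>0$ and apply \thm{R0main} to obtain a promise $P$ with
\[\D(f|_P)=\Omega\left(\frac{\sqrt{\Hi(\C_f)}}{\log^{5/2}N}\right)\geq\Omega\left(\frac{N^{\epsilon/2}}{\log^{5/2}N}\right)=N^{\Omega(1)},\]
along with $\R_0(f|_P)=O(1)$. The latter simultaneously establishes that $\R_0(f|_P)=N^{o(1)}$ and also delivers the ``furthermore'' clause, so the same promise witnesses everything needed.

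For the ``only if'' direction, I would assume a promise $P$ exists with $\D(f|_P)=N^{\Omega(1)}$ and $\R_0(f|_P)=N^{o(1)}$. Invoking the relation $\D(f|_P)=O(\R_0(f|_P)\Hi(\C_f)^2)$ from \cor{R0H} and rearranging yields
\[\Hi(\C_f)^2=\Omega\left(\frac{\D(f|_P)}{\R_0(f|_P)}\right)=\Omega\left(\frac{N^{\Omega(1)}}{N^{o(1)}}\right)=N^{\Omega(1)},\]
so $\Hi(\C_f)=N^{\Omega(1)}$ as required.

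There is no real obstacle here: the $\sqrt{\cdot}$ in the sculptability bound of \thm{R0main} is matched by the squaring in the non-sculptability bound of \cor{R0H}, and all $\polylog(N)$ factors are absorbed by the gap between $N^{\Omega(1)}$ and $N^{o(1)}$. The only thing worth pausing on is that the forward direction already delivers the stronger conclusion $\R_0(f|_P)=O(1)$, which is precisely what justifies the ``furthermore'' clause in the statement.
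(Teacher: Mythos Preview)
Your proposal is correct and matches the paper's approach: the corollary is stated immediately after \thm{R0main} with no separate proof, precisely because it follows from \thm{R0main} and \cor{R0H} in the way you describe, parallel to how \thm{main0} follows from \thm{main}.
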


\subsection{Sculpting $\R_0$ vs.\ $\R$}

While it is possible to use the above argument
to get a sculptability result for $\R_0$ vs.\ $\R$,
we can get a stronger result by a direct argument.
In fact, unlike $\R$ vs.\ $\Q$ or $\D$ vs.\ $\R_0$,
sculpting $\R_0$ vs.\ $\R$ is \emph{always} possible
(there is no dependence on any H-index).

\begin{theorem}\label{Rsculpting}
Let $f:\B^N\to\B$ be a non-constant total function.
Then there is a promise $P\subseteq\B^N$
such that
\[\R(f|_P)=1,\qquad\R_0(f|_P)\geq\frac{\R_0(f)^{1/3}}{6}.\]
\end{theorem}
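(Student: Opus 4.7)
The plan is to sculpt around an input of large block sensitivity. Since $f$ is total, the excerpt's inequality $\R_0(f)\leq\D(f)=O(\bs(f)^3)$ gives $\bs(f)=\Omega(\R_0(f)^{1/3})$; pick $x^*\in\B^N$ with $b := \bs_f(x^*) = \Omega(\R_0(f)^{1/3})$. Let $B_1,\ldots,B_b$ be disjoint minimal sensitive blocks of $x^*$; WLOG $f(x^*)=0$, so $f(x^{*B_i})=1$ for each $i$. For $S\subseteq[b]$, let $y_S\in\B^N$ denote $x^*$ with the bits in $\bigcup_{i\in S}B_i$ flipped, so $y_\emptyset=x^*$ and $y_{\{i\}}=x^{*B_i}$.

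A first attempt would be $P = \{x^*\}\cup\{x^{*B_i}\}_{i=1}^{b}$. Because the $B_i$'s are pairwise disjoint, certifying $x^*$ on this promise requires one bit from each $B_i$, and the fractional packing gives $\R_0(f|_P)\geq b$. Unfortunately this does \emph{not} give $\R(f|_P)=1$: any $1$-query strategy of the form ``query $i\sim\mu$, compare to $x^*_i$'' has success probability only $\mu(B_i)$ on $x^{*B_i}$, so $\mu$ would need mass $\geq 2/3$ on every single $B_i$ simultaneously, which is impossible once $b>1$.

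To fix the $\R$ side, the plan is to enlarge $P$ on the opposite-value side using multi-block flips. Take
\[
P := \{x^*\} \cup \bigl\{y_S : f(y_S)=1\text{ and }|S|\geq 2b/3\bigr\}.
\]
The $1$-query algorithm is then: pick $i\in[b]$ uniformly, query any fixed bit $j_i\in B_i$, and output $1$ iff $x_{j_i}\neq x^*_{j_i}$. On $x^*$ this always outputs $0$ (correct), and on any $y_S\in P$ it outputs $1$ with probability $|S|/b\geq 2/3$, giving $\R(f|_P)=1$. For the matching $\R_0$ lower bound: any certificate for $x^*$ on $P$ must, for each $y_S\in P$, contain at least one bit of $\bigcup_{i\in S}B_i$, so the set $T\subseteq[b]$ of blocks touched by the certificate must satisfy $[b]\setminus T$ containing no such $S$. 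When the ``block-level'' function $S\mapsto f(y_S)$ is dense at weights $\geq 2b/3$, this forces $|T|>b/3$ and hence $\R_0(f|_P)=\Omega(b)=\Omega(\R_0(f)^{1/3})$, with the explicit constant $1/6$ absorbing the $b/3$ factor together with constants in $\bs^3\geq\R_0$.

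The main obstacle will be the density assumption on $S\mapsto f(y_S)$: in the ``OR-like'' case where this function is $1$ only near the $e_i$'s, the set $P$ above collapses to $\{x^*\}$. I expect the full proof handles this by exploiting freedom in the choice of $x^*$: whenever $x^*$ is bad in this sense, each neighbor $y_{\{i\}}$ is itself a high-block-sensitivity $1$-input (flipping any single sensitive block of $x^*$ through $y_{\{i\}}$ lands on a $0$-input), and reconstructing the argument around such a $y_{\{i\}}$ (with the roles of $0$ and $1$ swapped) produces a promise where the needed density holds. A careful choice of $x^*$ --- either via iteration or a direct combinatorial argument --- therefore always yields a workable promise.
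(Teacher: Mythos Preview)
Your setup and the first construction are sound, and you correctly identify the obstacle: nothing guarantees that $f(y_S)=1$ for many large $S$. The gap is in your proposed fix. The claim that ``each neighbor $y_{\{i\}}$ is itself a high-block-sensitivity $1$-input'' is not justified: from $y_{\{i\}}$, flipping $B_j$ lands on $y_{\{i,j\}}$, and you have no information about $f(y_{\{i,j\}})$. Worse, even when the neighbor \emph{does} have high block sensitivity, there is no reason the density property holds there either. Concretely, take $f(x)=1$ iff $|x|=1$. Then $x^*=0^N$ has $\bs_f(x^*)=N$ with singleton blocks, but $f(y_S)=1$ only when $|S|=1$, so your $P$ collapses to $\{x^*\}$. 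Switching to $e_1$ does not help: $e_1$ again has block sensitivity $N$ with singleton blocks, yet among its block-flips $y'_S$ the $1$-inputs are only $S=\emptyset$ and $S=\{1,j\}$, so density at $|S|\geq 2N/3$ fails again. By symmetry, no choice of base point rescues the argument for this $f$. Your iteration idea therefore does not close the gap.

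The paper's proof avoids this by not restricting to block-flips at all. It looks at the set $S_1$ of \emph{all} $1$-inputs at Hamming distance $\geq (2/3)N$ from $x$. If every small partial assignment consistent with $x$ extends to something in $S_1$, take $P=\{x\}\cup S_1$: then $\C_{f|_P}(x)\geq \bs(f)/6$, while the random-bit test gives $\R(f|_P)=1$. Otherwise some small partial assignment $p$ kills all of $S_1$; but then \emph{every} input consistent with $p$ at distance $\geq(2/3)N$ from $x$ is a $0$-input, so one can take $P$ to be those far $0$-inputs together with the near $1$-inputs $x^{B_i}$ (for blocks $B_i$ of size $\leq N/3$), and now certifying a $1$-input against the far $0$-inputs costs $\geq N/3-|p|$ bits. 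The key move you are missing is this dichotomy over \emph{all} far inputs, which in the bad case manufactures an abundance of $0$-inputs rather than trying to find $1$-inputs among multi-block flips.
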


\begin{proof}
We actually prove a stronger result,
finding a promise $P$ such that
$\R(f|_P)=1$ and $\R_0(f|_P)\geq\bs(f)/6$.
We then use the known relationship $\R_0(f)\leq\bs(f)^3$
for total functions to get the desired result.
Note that finding $P$ with 
$\R(f|_P)=1$ and $\R_0(f|_P)\geq \bs(f)/6$
is trivial when $\bs(f)\leq 6$; thus we assume $\bs(f)>6$.

Let $x\in\B^N$ be such that $\bs_f(x)=\bs(f)$.
Assume without loss of generality that $f(x)=0$.
Let $S_1$ be the set of all $1$-inputs with Hamming
distance at least $(2/3)N$ from $x$.
For any partial assignment $p$ consistent with $x$,
let $S_1^p\subseteq S_1$
be the set of all inputs $y$ in $S_1$
that are consistent with $p$.

There are two cases. If $S_1^p$ is non-empty for
all partial assignments $p$ consistent with $x$
of size less than $\bs_f(x)/6$, then we can pick
the promise to be $P=\{x\}\cup S_1$.
It then follows that certifying that $f|_P$ is $0$
on input $x$ takes at least $\bs_f(x)/6$ queries,
whence $\R_0(f|_P)\geq \bs_f(x)/6$.
On the other hand, a randomized algorithm
can make $1$ query to check if the input differs from
$x$. Thus $\R(f|_P)=1$.

The other case is that there is some partial assignment
$p$ of size less than $\bs_f(x)/6$ such that $S_1^p$
is empty. We restrict our attention to inputs
consistent with $p$. Since $x$ has $\bs_f(x)$
disjoint sensitive blocks,
it has at least $(5/6)\bs_f(x)$ disjoint sensitive
blocks that do not overlap with $p$.
We exclude blocks of size larger than $N/3$. Since there
are at most $2$ such blocks, this leaves at
least $(5/6)\bs_f(x)-2$.
Let $B$ be the set of inputs we get by flipping one
of these blocks of $x$. Then $B$ contains only $1$-inputs
to $f$ that are consistent with $p$, all of which
have Hamming distance at most $N/3$ from $x$.
Since $\bs_f(x)=\bs(f)>6$, we have $B\neq\emptyset$.

Let $S$ be the set of inputs consistent with $p$
that have Hamming distance at least $(2/3)N$ from $x$.
Since $S_1^p$ is empty, $S$ contains only $0$-inputs to $f$.
Let $P=B\cup S$. Now, consider certifying that an
input $y$ to $f|_P$ is a $1$-input. Since all inputs
of Hamming distance at least $(2/3)N$ from $x$ that
are consistent with $p$ are $0$-inputs, this requires
showing at least $N/3-|p|$ bits of $y$. Since
$|p|<\bs_f(x)/6\leq N/6$, this is at least $N/6$.
Thus $\R_0(f|_P)\geq N/6\geq\bs(f)/6$.

On the other hand, a
bounded-error randomized algorithm can simply
query a bit of the input at random, and check for agreement
with $x$. If the bit agrees, the algorithm can output $1$,
and if the bit disagrees, the algorithm can output $0$.
This works because $0$-inputs have distance at least
$(2/3)N$ from $x$, while all $1$-inputs have distance at
most $N/3$ from $x$ (since the sensitive blocks used
to construct $B$ were of size at most $N/3$).
Thus $\R(f|_P)=1$.
\end{proof}

\section{Other Query Complexity Results}
\label{sec:other}

We can use some of the tools introduced in the previous
sections to prove some new relations in query
complexity. In \sec{small_dom}, we prove a quadratic
relationship between $\D(f)$ and $\Q(f)$
for partial functions $f$ that have small domain.
In \sec{tot}, we prove a quadratic relationship between
$\D(f)$ and $\Q(f)$ for total functions $f$ for which
$\Hi(\C_f)$ is small.

\subsection{Query Complexity on Small
Promises}\label{sec:small_dom}

We prove \thm{small}, which we restate for convenience.

\small*

\begin{proof}
We follow the proof of \thm{RvsQ}. The randomized algorithm
used in that proof relies only on the existence
of a randomized algorithm distinguishing a string
$a\in\B^N$ from either $f^{-1}(0)$ or $f^{-1}(1)$,
which is in turn guaranteed by \lem{RvsQlemma}.
To make that algorithm deterministic, we only need to turn
this distinguishing algorithm into a deterministic one.
By \lem{CvsRC}, we have $\C_f(x)=O(\RC_f(x)\log|\Dom(f)|)$.
On the task of distinguishing a single input from
a set of inputs, certificate complexity equals deterministic
query complexity. Using this observation,
we can modify the proof of \thm{RvsQ} to get the
result
\[\D(f)=O(\Q(f)^2\Bal(f)\log|\Dom(f)|)
=O(\Q(f)^2\log^2|\Dom(f)|),\]
from which the desired result follows.
\end{proof}


\subsection{Relationship for Total Functions}
\label{sec:tot}

We can use H-indices to improve some of the relationships
between complexity measures on total functions,
proving \thm{tot}.
Recall that for total functions, we have
$\D(f)\leq \C(f)\bs(f)$ and $\bs(f)=O(\Q(f)^2)$,
from which we have $\D(f)=O(\Q(f)^2\C(f))$.
We strengthen this result to
$\D(f)=O(\Q(f)^2\Hi(\sqrt{\C_f})^2)$ for total
Boolean functions. Since
$\Hi(\sqrt{\C_f})\leq\sqrt{\C(f)}$,
this result is always stronger.
In addition, since $\C(\OR)=n$ and $\Hi(\C_{\OR})=1$,
this improvement is sometimes very strong.

We restate \thm{tot} for convenience.

\tot*

\begin{proof}
The proof follows the proof that $\D(f)\leq\C(f)\bs(f)$
\cite{BBC+01}.
We start by reviewing this proof.
The deterministic algorithm repeatedly picks possible
$0$-certificates that are consistent with the
input observed so far, and queries the entries
of these certificates. If the queried entries match
the $0$-certificate, the algorithm is done (the
value of $f(x)$ is known to be $0$).
If ever there are no additional $0$-certificates
consistent with the observed part of the input,
the value of the function is known to be $1$.

The key insight is that if this process repeats $k$
times, then the block sensitivity of the function
is at least $k$. Indeed, let $p$ be the partial
assignment revealed after $k$ iterations.
Pick a $1$-input $y$ for $f$ that is consistent with $p$.
Let $B_i$ be the set of entries queried in the $i$-th
iteration of the algorithm. Then
for each $i$, there is a way
to change only the variables in $B_i$ to form
a $0$-certificate for $f$. It follows that each
$B_i$ contains a sensitive block for $y$. Since the
$B_i$ sets are disjoint, we get $\bs_f(y)\geq k$,
so $\bs(f)\geq k$.

We modify the algorithm as follows. In each step,
we only allow the algorithm to pick $0$-certificates
that are of size at most $\Hi(\sqrt{\C_f})^2$.
Thus the algorithm uses at most
$\bs(f)\Hi(\sqrt{\C_f})^2$ queries
before it gets stuck. When it gets stuck, either
the value of $f$ on the input is determined,
or else there are no more $0$-certificates that are small
enough.

Next, we repeat the same process with $1$-certificates
instead of $0$-certificates. If the value of $f$
is not yet determined, it means that the input
is not consistent with any small enough certificates,
so the certificate
complexity of the input $x$ is greater than
$\Hi(\sqrt{\C_f})^2$.
This gives $\sqrt{\C_f(x)}>\Hi(\sqrt{\C_f})$.

By definition of the H-index, there are now at most
$2^{\Hi(\sqrt{\C_f})}$ possibilities for the input.
We've therefore restricted $f$ to a small domain $P$.
We now use \thm{small} to evaluate $f$
using
\[O(\Q(f)^2\log^2 |\Dom(f|_P)|)
=O(\Q(f)^2\Hi(\sqrt{\C_f})^2)\]
deterministic queries.
This is added to the
$\bs(f)\Hi(\sqrt{\C_f})^2$
queries from before. Using
$\bs(f)=O(\Q(f)^2)$, we get
\[\D(f)=O(\Q(f)^2\Hi(\sqrt{\C_f})^2).\]
\end{proof}


\section{Sculpting in the Computational Complexity Model}
\label{sec:TM}

In this section, we examine sculpting in the computational
complexity model. We start with some notation.
Given a language $L\subseteq\B^{\ast}$, we
let $L(x)\in\B$ be its characteristic function.
Also, given a language $L$ together with a promise
$P\subseteq\B^{\ast}$,
we let $L|_{P}$ be the promise problem of
distinguishing the set $P\cap L$ from the set $P\setminus L$.

Now, we call the language $L$ \emph{sculptable}
if there exists a promise
$P$, such that the promise problem $L|_{P}$ is in
$\mathsf{P{}romiseBQP}$ but not in $\mathsf{P{}romiseBPP}$.
We will use the following definition.

\begin{definition}[\cite{BH77}]\label{def:pad}
A language $L$ is called
\emph{paddable}, if there exists
a polynomial-time function $f(x,y)$ such that

\begin{enumerate}
\item[(1)] $f$ is polynomial-time invertible, and
\item[(2)] for all $x,y$, we have
    $x\in L\iff f(x,y) \in L$.
\end{enumerate}
\end{definition}

In other words, $L$ is paddable if it is possible to
\textquotedblleft pad out\textquotedblright
any input $x$ with irrelevant information $y$, in an
invertible way, without affecting membership in $L$.

The paddable languages were introduced by Berman and Hartmanis
\cite{BH77}, as part of their exploration of whether all
$\mathsf{NP}$-complete languages are
polynomial-time isomorphic: they showed
that the answer was `yes' for all \emph{paddable}
$\mathsf{NP}$-complete languages.
Under strong cryptographic assumptions, we now know that there
exist $\mathsf{NP}$-complete languages
that are neither paddable nor
isomorphic to each other \cite{KMR95}.
Nevertheless, it remains the case that
almost all the languages that
\textquotedblleft naturally arise in complexity
theory\textquotedblright\ are paddable.

Next, let us say that \emph{$\mathsf{P{}romiseBQP}$
is hard on average for
$\mathsf{P/poly}$} if there exists a promise problem
$H|_{S}\in\mathsf{P{}romiseBQP}$,
as well as a family of distributions
$\left\{  \mathcal{D}_n\right\}_n$
with support on the promise set $S$, such that

\begin{enumerate}
\item[(1)] $\mathcal{D}_n$ is samplable in classical
    $\operatorname*{poly}(n)$ time, and

\item[(2)] there is no family of classical circuits
    $\left\{C_{n}\right\}_n$,
    of size $\operatorname*{poly}(n)$,
    such that for all $n$,
\[\Pr_{y\sim\mathcal{D}_n}\left[ C_n(y)=H(y) \right] 
    \geq\frac{3}{4}.\]
\end{enumerate}

So for example, because of Shor's algorithm
\cite{Sho97}, combined with the
worst-case/average-case equivalence of the
discrete log problem, we can say
that \emph{if discrete log is not in
$\mathsf{P/poly}$\textit, then
$\mathsf{P{}romiseBQP}$ is hard on average for
$\mathsf{P/poly}$}.

We now prove \thm{pad}, which
we restate here for convenience.

\pad*

\begin{proof}
Let $L$ be a paddable language, and let
$f$\ be the padding function for $L$.
Also, let $H|_{S}$ be any problem
in $\mathsf{P{}romiseBQP}$ that is hard on average for
$\mathsf{P/poly}$, and
let $\left\{  \mathcal{D}_n\right\}_n$
be the associated family of hard
distributions.
Then we need to construct a promise,
$P\subseteq\B^{\ast}$,
such that the promise problem $L|_P$ is in
$\mathsf{P{}romiseBQP}$ but not in $\mathsf{P{}romiseBPP}$.

Our promise $P$ will simply consist of all inputs of the form
$f(x,y,a)$ such that $y\in S$ and
\[L(x) \equiv H(y) +a\left( \operatorname{mod} 2\right).\]
Here $a\in\B$ is a single bit, which we think of as
concatenated onto the end of $y$.

Clearly, $L|_P$ is in $\mathsf{P{}romiseBQP}$:
just invert $f$ to extract
the \textquotedblleft comment\textquotedblright
$(y,a)$, then compute
$H(y)  +a\left(\operatorname{mod}2\right)$.

We need to show that $L|_P$ is not in $\mathsf{P{}romiseBPP}$.
Suppose by contradiction that it was,
and let $\mathcal{A}$ be the algorithm such that
$\mathcal{A}(x) =L(x)$ for all $x\in P$. Then
we'll show how to either

\begin{enumerate}
\item[(1)] decide $L$ in $\mathsf{BPP}$ (with no promise), or

\item[(2)] decide $H$ in $\mathsf{P/poly}$,
    with high probability over $\mathcal{D}_{n}$.
\end{enumerate}

Given an arbitrary input $x\in\B^n$, imagine we do the
following: first sample $y\sim\mathcal{D}_n$, then run
$\mathcal{A}$ on the inputs $f(x,y,0)$ and
$f(x,y,1)$. There are two cases: first suppose
\[\mathcal{A}\left( f(x,y,0) \right)
    =\mathcal{A}\left(f(x,y,1) \right).\]
Now, \emph{one} of the two inputs $f(x,y,0)$ and
$f(x,y,1)$ must belong to $P$. If $f(x,y,0)\in P$, then
$\mathcal{A}\left( f(x,y,0) \right) =L(x)$,
while if $f(x,y,1) \in P$, then
$\mathcal{A}\left(  f(x,y,1) \right) =L(x)$.
Either way, then, we have learned whether $x\in L$,
and we know we have learned this.

Second, suppose
\[\mathcal{A}\left(  f(x,y,0) \right)
    \neq\mathcal{A}\left(f(x,y,1) \right).\]
Then assuming $y\in S$:
\begin{align*}
x\in L,y\in H  & \Longrightarrow\mathcal{A}
    \left( f(x,y,0)\right) =1
    \Longrightarrow\mathcal{A}\left( f(x,y,1) \right)=0,\\
x\in L,y\notin H & \Longrightarrow\mathcal{A}
    \left(f(x,y,1)\right) =1
    \Longrightarrow\mathcal{A}\left( f(x,y,0) \right)=0,\\
x\notin L,y\in H  & \Longrightarrow\mathcal{A}
    \left( f(x,y,1)\right)=0
    \Longrightarrow\mathcal{A}\left( f(x,y,0) \right)=1,\\
x\notin L,y\notin H  & \Longrightarrow\mathcal{A}
    \left(  f(x,y,0) \right)=0
    \Longrightarrow\mathcal{A}\left( f(x,y,1) \right) =1.
\end{align*}
Thus, regardless of whether $x\in L$,
we have learned whether $y\in H$, and
again we know we have learned this.

Now suppose there were an input $x\in\B^n$, such that
running $\mathcal{A}$ as above told us whether
$y\in H$ with probability more
than (say) $1/2$ over the choice of $y\sim\mathcal{D}_n$.
Then let $C_n$ be a polynomial-size circuit
that hardwires $x$, and that given an input $y\in S$:

\begin{itemize}
\item Simulates both $\mathcal{A}\left( f(x,y,0)  \right)$
    and $\mathcal{A}\left( f(x,y,1) \right)$.
\item Outputs $H(y)$ whenever it successfully
    learns the value of $H(y)$.
\item Guesses a hardwired value for $H(y)$
    (whichever of $\B$ is more probable)
    whenever it does not.
\end{itemize}

Then
\[\Pr_{y\sim\mathcal{D}_n}\left[C_n(y) =H(y)\right] 
    \geq\frac{3}{4},\]
violating the assumption that no such circuit exists.

So we conclude that for every $x\in\B^n$, we must
instead learn whether $x\in L$
with probability at least $1/2$ over the choice
of $y\sim\mathcal{D}_n$.
This, in turn, means that by simply generating
$y$'s randomly until we succeed, we can decide $L$ in
$\mathsf{P{}romiseBPP}$.
\end{proof}

Next, given a language $H\subseteq\B^{\ast}$, we say $H$
is $\mathsf{BPP}$\emph{-bi-immune} if
neither $H$ nor its complement
$\overline{H}$ has any infinite subset in $\mathsf{BPP}$.
The notion of immunity was introduced by \cite{FS74}.
Here is a useful alternative characterization:

\begin{lemma}
\label{lem:biimmune}A language $H$ is
$\mathsf{BPP}$-bi-immune if and only if
there is no infinite set $S\in\mathsf{BPP}$,
such that the promise problem
$H|_S$ is solvable in $\mathsf{PromiseBPP}$.
\end{lemma}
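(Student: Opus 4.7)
The plan is to prove both directions by contrapositive, exploiting the correspondence between infinite $\mathsf{BPP}$ subsets of $H$ (or $\overline{H}$) and infinite $\mathsf{BPP}$ sets $S$ on which $H|_S$ is tractable.

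For the ``only if'' direction, I would suppose $H$ is not $\mathsf{BPP}$-bi-immune, so by definition some infinite $T \in \mathsf{BPP}$ is contained in either $H$ or $\overline{H}$. Take $S := T$. Then on its promise, $H|_S$ is the constant function $1$ or the constant function $0$, which is trivially in $\mathsf{PromiseBPP}$. So an infinite witness $S$ exists.

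For the ``if'' direction, suppose there is an infinite $S \in \mathsf{BPP}$ such that $H|_S$ has a $\mathsf{PromiseBPP}$ algorithm $\mathcal{A}$. Since $S$ is infinite, at least one of $S \cap H$ and $S \cap \overline{H}$ is infinite; assume $S \cap H$ is (the other case is symmetric). I would show $S \cap H \in \mathsf{BPP}$, which contradicts bi-immunity of $H$. The $\mathsf{BPP}$ algorithm for $S \cap H$ on input $x$ runs the $\mathsf{BPP}$ decider for $S$ and the algorithm $\mathcal{A}$ (each amplified to have error below, say, $1/10$), and accepts iff both accept. If $x \notin S$, the $S$-test rejects with high probability, so we correctly output $0$ regardless of what $\mathcal{A}$ does. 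If $x \in S$, the $S$-test accepts with high probability; moreover $x$ satisfies the promise of $H|_S$, so $\mathcal{A}$ correctly decides whether $x \in H$. A union bound over the two error events shows the composed procedure decides $S \cap H$ with error below $1/3$.

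The only subtlety is that $\mathcal{A}$ is only guaranteed to be correct on the promise set $S$, so we cannot invoke it blindly; the $\mathsf{BPP}$ membership test for $S$ shields us from its undefined behavior off the promise. Standard amplification handles the accumulated error, so there is no substantial obstacle.
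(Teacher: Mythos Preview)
Your proof is correct and follows essentially the same approach as the paper: both directions are argued by contrapositive, taking $S$ to be the infinite $\mathsf{BPP}$ subset of $H$ or $\overline{H}$ in one direction, and showing $S\cap H$ and $S\cap\overline{H}$ lie in $\mathsf{BPP}$ in the other (the paper dismisses your amplification and union-bound details with a ``clearly''). One minor quibble: your labels are swapped---assuming $H$ is not bi-immune and exhibiting an $S$ is the contrapositive of the \emph{if} direction, not the \emph{only if} direction.
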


\begin{proof}
First, suppose $H$ is not $\mathsf{BPP}$-bi-immune,
so that either $H$ or $\overline{H}$
has an infinite subset $S\in\mathsf{BPP}$.
Then clearly, $S$ itself is an infinite set in
$\mathsf{BPP}$ such that the promise problem
$H|_S$ is trivial
(the answer is either always $0$ or always $1$).

Conversely, suppose there exists an infinite set
$S\in\mathsf{BPP}$ such that
$H|_S$ is solvable in polynomial time.
Then clearly $S\cap H$ and
$S\cap\overline{H}$ are both in $\mathsf{BPP}$,
and at least one of the two must be infinite.
So $H$ is not $\mathsf{BPP}$-bi-immune.
\end{proof}

We now suggest what, as far as we know,
is a new conjecture in quantum complexity theory.

\begin{conjecture}
\label{conj:bqpimmune}There exists a
$\mathsf{BPP}$-bi-immune language in $\mathsf{BQP}$.
\end{conjecture}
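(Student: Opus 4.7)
The plan is to attempt a diagonalization against every $\mathsf{BPP}$ machine while keeping the resulting language inside $\mathsf{BQP}$. The starting point would be a candidate exponentially hard-on-average $\mathsf{BQP}$ language $L$---for instance, a hardcore bit of discrete logarithm, which lies in $\mathsf{BQP}$ by Shor's algorithm and is conjectured hard for $\mathsf{BPP}$ with inverse-quasi-polynomial advantage. Given such an $L$, the goal becomes to build $H\in\mathsf{BQP}$ so that for every $\mathsf{BPP}$ machine $M$, the set $\{x : M(x) = H(x)\}$ is finite; by \lem{biimmune}, this is exactly what it means for $H$ to be $\mathsf{BPP}$-bi-immune.

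I would organize the construction in stages. Enumerate $\mathsf{BPP}$ machines with bounded running times, $M_1,M_2,\ldots$, and partition $\B^{\ast}$ into long, disjoint intervals $R_1,R_2,\ldots$, where $R_i$ collects inputs in some fast-growing length range. On $R_i$, define $H(x)$ by a quantum procedure that first evaluates $L$ on a hash of $x$ and then XORs in a correction bit designed to force disagreement with each of $M_1,\ldots,M_i$ on almost all of $R_i$. Since simulating the earlier $M_j$ at polynomial overhead is in $\mathsf{BPP}\subseteq\mathsf{BQP}$, and $L$ itself is in $\mathsf{BQP}$, the entire construction remains inside $\mathsf{BQP}$. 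This already gives a language that is hard for each fixed $M_i$ on \emph{infinitely} many inputs.

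The main obstacle---and what keeps the statement a conjecture rather than a theorem---is upgrading standard Ladner-style lazy diagonalization, which only delivers disagreement on infinitely many inputs of each block, to the \emph{cofinite} disagreement that bi-immunity demands. When two requirements collide on the same $R_i$ (for example, $M_j$ and $M_k$ happen to coincide on a large subset), the correction bit can only cancel one of them, leaving a potentially infinite $\mathsf{BPP}$-recognizable set of inputs on which $H$ agrees with some $M_j$---precisely the kind of set bi-immunity forbids. Resolving this conflict seems to require a hardness assumption strong enough to guarantee that no $\mathsf{BPP}$ machine can agree with $L$ on any infinite $\mathsf{BPP}$-recognizable set of inputs, which is essentially a pseudorandomness-style statement substantially stronger than plain average-case hardness. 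Natural candidate constructions from standard cryptographic assumptions appear to yield $\mathsf{BPP}$-immunity on one side only, falling short of the two-sided bi-immunity required; closing that gap---either by strengthening the hardness assumption, or by a cleverer priority argument that reassigns failed requirements to later blocks---is where the real work lies.
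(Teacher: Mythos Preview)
The statement you are addressing is a \emph{conjecture}, not a theorem: the paper explicitly introduces it as ``a new conjecture in quantum complexity theory'' and makes no attempt to prove it. There is therefore no proof in the paper to compare your proposal against. The paper's only contribution on this point is to offer a heuristic candidate language---apply a strong pseudorandom generator $g$ to the input and ask whether $g(x)$ has an odd number of distinct prime factors---without any argument that this candidate actually works.

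Your proposal is thus not a proof but an outline of why the conjecture is plausible and where the difficulty lies, and you yourself identify the gap: Ladner-style lazy diagonalization gives disagreement with each $\mathsf{BPP}$ machine on infinitely many inputs, but bi-immunity requires \emph{cofinite} disagreement, and you do not show how to achieve that. This is an honest assessment of the obstacle, and it is consistent with the paper's stance that the statement remains open. If the assignment was to reproduce the paper's proof, there is nothing to reproduce; if it was to attempt an independent proof, you have correctly located the sticking point but not resolved it.
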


\conj{bqpimmune} is extremely strong.
Note, in particular, that
none of the \textquotedblleft standard\textquotedblright
$\mathsf{BQP}$ languages,
such as languages based on factoring or discrete log,
will be $\mathsf{BPP}$-bi-immune,
because they all have infinite special cases that
are classically recognizable and easy
(for example, the powers of $2$, in the
case of factoring).
Nevertheless, we believe \conj{bqpimmune} is plausible.
As a concrete candidate for a
$\mathsf{BPP}$-bi-immune language in $\mathsf{BQP}$,
let $g:\B^{\ast}\rightarrow\B^{\ast}$
be some strong pseudorandom generator.
Then consider the language
\[L=\left\{ x:g(x)\text{, interpreted as a positive integer,
has an odd number of distinct prime factors}\right\}. \]
We now prove \thm{immune}, restated here for convenience.


{
\renewcommand{\thetheorem}{\ref*{thm:immune}}
\begin{theorem}
Assume \conj{bqpimmune}. Then every language outside of
$\mathsf{BPP}$ is sculptable.
\end{theorem}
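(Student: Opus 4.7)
The plan is to take $P = \{x \in \B^{\ast} : L(x) = H(x)\}$, so that the promise problem $L|_{P}$ coincides (as a promise problem) with $H|_{P}$: its YES-set is $P \cap H$ and its NO-set is $P \cap \overline{H}$. This makes the $\mathsf{PromiseBQP}$ upper bound immediate: given $x \in P$, run the $\mathsf{BQP}$ algorithm for $H$ and output its answer; correctness holds because $L = H$ throughout $P$.

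For the $\mathsf{PromiseBPP}$ lower bound I would argue by contradiction. Suppose a $\mathsf{BPP}$ algorithm $A$ decided $L|_{P}$; after amplification, let $L_{A} = \{x : \Pr[A(x) = 1] \geq 1/2\}$, which is a $\mathsf{BPP}$ language satisfying $L_{A} = L = H$ on $P$. Since $L_{A}$ agrees with $H$ on $P$, the symmetric difference $L_{A} \triangle H$ is contained in $\overline{P} = \{L \neq H\}$, and I would case-analyze this symmetric difference together with its complement inside $\overline{P}$.

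First, if $L_{A} \triangle H$ is finite then $L_{A}$ decides $H$ up to finitely many exceptions; hard-coding those exceptions yields a $\mathsf{BPP}$ algorithm for $H$, contradicting $H \notin \mathsf{BPP}$ (a consequence of bi-immunity). Second, on every $x \in L_{A} \triangle H$ one has $L_{A} \neq H$ and $L \neq H$, whence $L_{A} = 1 - H = L$; combined with $L_{A} = L$ on $P$, this yields $L_{A} = L$ on $P \cup (L_{A} \triangle H)$, and if this union is cofinite in $\B^{\ast}$ then $L \in \mathsf{BPP}$ up to finite modifications, contradicting the hypothesis on $L$.

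The remaining case is when neither side of the dichotomy is finite: $L_{A}$ then correctly computes $H$ on the infinite set $\{L_{A} = H\} = P \cup R$, where $R = \{x : L(x) \neq H(x),\ L_{A}(x) = H(x)\}$ is also infinite. Closing this case via \lem{biimmune} requires extracting an infinite $\mathsf{BPP}$ subset $S \subseteq \{L_{A} = H\}$, since then $L_{A}$ restricted to $S$ would witness $H|_{S} \in \mathsf{PromiseBPP}$; this extraction is where I expect the main obstacle to lie. I would attack it by strengthening the construction: instead of the raw $P = \{L = H\}$, use the pairing $P = \{\langle x, y \rangle : L(\langle x, y \rangle) = H(y)\}$ and work with the derived $\mathsf{BPP}$ function $\tilde{A}(y)$ defined as the majority of $A(\langle x, y \rangle)$ over random $x$ of appropriate length. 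A direct calculation shows $\tilde{A}(y) = H(y)$ whenever at least half the random $x$'s yield $\langle x, y \rangle \in P$, and this ``good'' set of $y$'s would play the role of the infinite $\mathsf{BPP}$ witness needed by \lem{biimmune}, with the non-$\mathsf{BPP}$-ness of $L$ preventing $\tilde{A}$ from agreeing with $H$ only on $\mathsf{BPP}$-negligible sets.
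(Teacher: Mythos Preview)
Your opening moves coincide with the paper's: you take $P=\{x:L(x)=H(x)\}$, note that $L|_P=H|_P\in\mathsf{PromiseBQP}$, and assume a $\mathsf{BPP}$ algorithm $A$ solves $L|_P$. But from here your argument stalls precisely where you suspect it does, and your proposed pairing fix does not close the gap.

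The problem is structural. You are trying to prove that this \emph{particular} $P$ witnesses sculptability, which is stronger than needed and may simply be false. Your ``remaining case'' is that $\{L_A=H\}$ and $\{L_A\neq H\}$ are both infinite; to invoke \lem{biimmune} you need an infinite set in $\mathsf{BPP}$ on which $H$ is easy, but $\{L_A=H\}$ is defined in terms of $H$ itself and there is no reason for it (or any infinite subset of it) to lie in $\mathsf{BPP}$. Your pairing variant has the same defect: the ``good'' set of $y$'s on which $\tilde A(y)=H(y)$ is $\{y:\text{a majority of }x\text{ satisfy }L(\langle x,y\rangle)=H(y)\}$, which again references both $L$ and $H$ and is not a $\mathsf{BPP}$ set.

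The paper avoids this by \emph{not} committing to a single promise. It argues by contradiction that $L$ is non-sculptable, which yields $\mathsf{PromiseBPP}$ algorithms $\mathcal{A}_S$ for $L|_S$ \emph{and} $\mathcal{A}_{\overline S}$ for $L|_{\overline S}$ (where $S=\{L=H\}$), since $L|_{\overline S}=\overline{H}|_{\overline S}$ is also in $\mathsf{PromiseBQP}$. The key point is that the set $Q=\{x:\mathcal{A}_S(x)=\mathcal{A}_{\overline S}(x)\}$ \emph{is} in $\mathsf{BPP}$, because it is defined purely in terms of two $\mathsf{BPP}$ algorithms. On $Q$ one of the two algorithms is correct about $H$, and they agree, so we learn $H(x)$; on $\overline Q$ one is correct about $L$ and they disagree, so we learn $L(x)$. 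Then either $Q$ is finite (forcing $L\in\mathsf{BPP}$) or $Q$ is infinite (contradicting bi-immunity via \lem{biimmune}). The second algorithm is exactly the missing ingredient that makes the relevant set $\mathsf{BPP}$-computable.
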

\addtocounter{theorem}{-1}
}

\begin{proof}
Assume by way of contradiction that $L\notin\mathsf{BPP}$
is non-sculptable.
Also, let $H$\ be a $\mathsf{BPP}$-bi-immune
language in $\mathsf{BQP}$. Then consider the set
\[ S:=\left\{  x:L(x)  =H(x)\right\}.\]
By our assumption, $S$ is a promise on which no
superpolynomial quantum
speedup is possible for $L$, and $\overline{S}$
is another such promise.
Hence, there must be a $\mathsf{BPP}$ algorithm,
call it $\mathcal{A}_S$,
that solves the promise problem $H|_S$,
which (by the definition of $S$) is
equivalent to solving $L|_S$.
And there must be another polynomial-time
classical algorithm, call it $\mathcal{A}_{\overline{S}}$,
that solves $H|_{\overline{S}}$,
which (again by the definition of $S$) is equivalent to
solving $\overline{L}|_{\overline{S}}$.

Now, given an input $x$, suppose we run both
$\mathcal{A}_S$ and $\mathcal{A}_{\overline{S}}$.
Then as in the proof of 
\thm{pad}, there are two possibilities. 
If $\mathcal{A}_S(x) = \mathcal{A}_{\overline{S}}(x)$,
then $x\in S$ implies $H(x) = \mathcal{A}_S(x)$
while $x\notin S$ implies
$H(x) =\mathcal{A}_{\overline{S}}(x)$,
so either way we have learned $H(x)$ (and we know
that we have learned it).
On the other hand, if
$\mathcal{A}_S(x) \neq\mathcal{A}_{\overline{S}}(x)$,
then $x\in S$ implies $L(x) =\mathcal{A}_S(x)$
while $x\notin S$\ implies
$L(x) =1-\mathcal{A}_{\overline{S}}(x)$.
So, merely by seeing that $\mathcal{A}_S(x)$
and $\mathcal{A}_{\overline{S}}(x)$ are different, we have
learned $L(x)$ (and we know that we have learned it).

In summary, there is a $\mathsf{BPP}$ algorithm
$\mathcal{B}$ that, for every input $x\in\B^{\ast}$,
correctly outputs either $H(x)$ or $L(x)$,
and that moreover tells us which one it output.

Now let $Q$ be the set of all $x$ such that $\mathcal{B}(x)$
outputs $H(x)$. Then there are two possibilities: if $Q$ is
finite, then $\mathcal{B}$
decides $L$ on all but finitely many inputs.
Hence $L\in\mathsf{BPP}$, contrary to assumption.
If, on the other hand, $Q$ is infinite,
then $H|_Q$ is an infinite promise problem in
$\mathsf{P{}romiseBPP}$.
So $H$ was not $\mathsf{BPP}$-bi-immune, again
contrary to assumption.
\end{proof}

In \thm{pad} and \thm{immune},
there is almost nothing
specific to the complexity classes $\mathsf{BQP}$ and
$\mathsf{BPP}$, apart from some simple closure properties.
Thus, one can prove analogous sculpting
theorems for many other pairs of complexity classes.
In some cases, we do
not even need an unproved conjecture. For example, we have:

\begin{theorem}
\label{expthm}For every language $L\notin\mathsf{P}$,
there exists a promise $S$ such that $L|_S$
is solvable in exponential time, but is not solvable
in polynomial time.
\end{theorem}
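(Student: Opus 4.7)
The plan is to mimic the proof of \thm{immune} almost verbatim, replacing $\mathsf{BQP}$ by $\mathsf{EXP}$ and $\mathsf{BPP}$ by $\mathsf{P}$. The crucial point is that the analogue of \conj{bqpimmune} becomes a well-known unconditional theorem: by a standard diagonalization against the enumeration of polynomial-time machines (ensuring at two different inputs, for each such machine, that it does not compute $H$ on an infinite subset of $H$ or of $\overline{H}$), one constructs a language $H \in \mathsf{EXP}$ that is $\mathsf{P}$-bi-immune. No unproven assumption is needed. I would cite this as a known result and move on.

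With $H \in \mathsf{EXP}$ in hand, I assume for contradiction that $L \notin \mathsf{P}$ is non-sculptable, meaning that for every promise $S$ with $L|_S$ solvable in exponential time, $L|_S$ is already solvable in polynomial time. Following the template, define
\[
S := \{x : L(x) = H(x)\}.
\]
Since $H \in \mathsf{EXP}$, both $L|_S$ (which agrees with $H$ on its promise) and $L|_{\overline{S}}$ (which equals $1 - H$ on its promise) lie in $\mathsf{EXP}$. By non-sculptability, each is in $\mathsf{P}$, giving polynomial-time algorithms $\mathcal{A}_S$ and $\mathcal{A}_{\overline{S}}$.

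The rest of the argument is the exact case analysis from \thm{immune}: on any input $x$, run both algorithms. If $\mathcal{A}_S(x) = \mathcal{A}_{\overline{S}}(x)$, then whichever side of $S$ the input lies on, the common output equals $L(x)$, and we know it. If the outputs differ, the four-way case analysis shows that the outputs always determine $H(x)$, and again we know it. So there is a deterministic polynomial-time procedure $\mathcal{B}$ that for every $x$ outputs either $L(x)$ or $H(x)$, together with a flag saying which. Let $Q := \{x : \mathcal{B} \text{ outputs } H(x)\}$. Then $Q \in \mathsf{P}$ (the flag is just ``did the two algorithms disagree?''). If $Q$ is finite, $\mathcal{B}$ decides $L$ on all but finitely many inputs, whence $L \in \mathsf{P}$, contradicting the assumption on $L$. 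If $Q$ is infinite, then $H|_Q$ is an infinite subproblem of $H$ decidable in polynomial time, contradicting $\mathsf{P}$-bi-immunity of $H$. Either way we reach a contradiction, so $L$ must be sculptable in the $\mathsf{EXP}$ versus $\mathsf{P}$ sense.

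The only real obstacle is philosophical rather than technical: one must point out that because the time hierarchy gives unconditional $\mathsf{P}$-bi-immune languages in $\mathsf{EXP}$, the scheme of \thm{immune} becomes a theorem outright. Everything else is a transcription of that proof with the two complexity classes renamed, which I would carry out with minimal elaboration.
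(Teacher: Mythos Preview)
Your proposal is correct and is essentially the paper's own proof: the paper simply states that the argument of \thm{immune} goes through with $\mathsf{P}$ and $\mathsf{EXP}$ in place of $\mathsf{BPP}$ and $\mathsf{BQP}$, citing \cite{BH77} for the existence of a $\mathsf{P}$-bi-immune language in $\mathsf{EXP}$. Your only cosmetic difference is that you let $\mathcal{A}_{\overline{S}}$ compute $L|_{\overline{S}}$ rather than $H|_{\overline{S}}$, which swaps the roles of the ``agree'' and ``disagree'' cases relative to the paper's write-up of \thm{immune}, but the logic is identical.
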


\begin{proof}
The proof of \thm{immune} follows through for $\mathsf{P}$
and $\mathsf{EXP}$ instead of $\mathsf{BPP}$ and
$\mathsf{BQP}$. In addition, it is known that there is a
$\mathsf{P}$-bi-immune language in $\mathsf{EXP}$
\cite{BH77}. The desired result follows.
\end{proof}


\section{Concluding Remarks and Open Problems}

In this work, we gave a full characterization
of the class of Boolean functions $f$
that can be sculpted into a promise problem
with an exponential quantum speedup in query complexity.
We similarly characterized sculptability for
$\R_0$ vs.\ $\R$ and $\D$ vs.\ $\R_0$.
Along the way, we showed that $\Q$ is
polynomially related (indeed, \emph{quadratically} related)
to $\D$ and $\R$ for a much wider set of promise problems
than was previously known.
Finally, we studied sculpting in
\emph{computational} complexity, giving a
strong conjecture under which every language outside
$\mathsf{BPP}$ is sculptable into a
superpolynomial quantum speedup,
and a weaker conjecture under which every \emph{paddable}
language outside $\mathsf{BPP}$ is sculptable.

One might object that many of our
sculpted promise problems are somewhat artificial.
This is particularly clear in the case of paddable languages,
where (in essence) one uses the paddability to append to each
instance $x$, as a ``comment,'' an instance of a hard
$\mathsf{BQP}$ problem (such as factoring)
that is promised to have the same answer as $x$.
Even in the query complexity setting, however, one can
observe by direct analogy that
\emph{the property of being sculptable is not
closed under the removal of dummy variables}.
So for example, we saw before that the $N$-bit
$\OR$ function is not sculptable.
By contrast, observe that the function
\[ f(x_1,\ldots,x_{2N}):=\OR(x_1,\ldots,x_N)\]
\emph{is} sculptable.
This follows as an immediate consequence of \thm{main0}:
just by adding dummy variables to the $\OR$ function,
we have vastly increased the number of inputs $x$
that have large certificate complexity,
from $1$ to $2^{N}$.
However, an even simpler way to see why $f$ is sculptable,
is that we can embed (say) Simon's problem
into the variables $x_{N+1},\ldots,x_{2N}$,
and then impose the promise that
\[\OR(x_1,\ldots,x_N) =\Simon(x_{N+1},\ldots,x_{2N})\]
(in addition to the Simon promise itself).

Of course, most Boolean functions do \emph{not}
contain such dummy variables,
so the problems of sculpting them,
and deciding whether they are sculptable at all,
are much more complicated, as we saw in this paper.

Now, it might feel like ``cheating''
to sculpt a promise problem with a large
quantum/classical gap by using dummy
variables to encode a different, unrelated problem.
If so, however, that points to an interesting
direction for future research: namely, can we somehow
formalize what we mean by a ``natural''
special case of a problem,
and can we then understand which problems are
``naturally'' sculptable?

Here are some more specific open problems.

\begin{itemize}
\item Some of our inequalities
could be off by polynomial factors;
it would be nice to tighten them (or prove separations).
For example, it may be possible to improve
\thm{small} to $\Q(f)=\Omega(\sqrt{\D(f)/\log|\Dom(f)|})$,
quadratically improving the $\log|\Dom(f)|$ factor.

\item Can our results -- and specifically, \thm{tot}
-- be used to improve the relation
$\D(f)=O(\Q(f)^6)$ due to Beals et al.\ \cite{BBC+01}?

\item Can we give a characterization of the
sculptable Boolean functions in
\emph{communication} complexity -- analogous
to this paper's characterization
of sculptability in query complexity?

\item Is there any natural pair of complexity classes
$\mathcal{C}\subseteq\mathcal{D}$,
for which $\mathcal{C}$ is known or believed to be
strictly contained in $\mathcal{D}$,
and yet it is plausible that no
languages in $\mathcal{D}$ are
$\mathcal{C}$-bi-immune, and (related to that)
there exist languages $L\notin\mathcal{C}$ that
\emph{cannot} be sculpted
into a promise problem in $\mathcal{D}\setminus\mathcal{C}$?

\item One can, of course, consider sculpting for
many other pairs of computational models,
besides $\R$ vs.\ $\Q$ or $\R_0$ vs.\ $\R$
or $\D$ vs.\ $\R_0$.
One interesting case is sculpting versus certificate
complexity -- for example, $\D$ vs.\ $\C$.
What is the correct characterization there?
\end{itemize}

We make some observations on the last problem.
It's easy to see that $\D(\OR|_P)=\C(\OR|_P)$ for
any promise $P$, so sculpting $\D$ vs.\ $\C$ is not always
possible. On the other hand,
sculpting $\D$ vs.\ $\C$
is sometimes possible even when $\Hi(\C_f)$ is small.
To see this, consider
the function $f$ with
$f(x)=1$ if and only if the Hamming weight of $x$ is $1$,
and the single `1' bit occurs on the left half of the
input string. This function
can be sculpted to $\D(f|_P)=N/2$ and $\C(f|_P)=1$ by
setting $P$ to the set of inputs with Hamming weight
$1$. However, $\Hi(\C_f)=O(\log N)$ for this
function, since all inputs with Hamming weight
at least $2$ have small certificates (just display two `1'
bits).

This means something
qualitatively different happens
with $\D$ vs.\ $\C$ than what was found in this paper.

\section*{Acknowledgements}

We thank Robin Kothari for many helpful discussions.

\noindent
Supported by an NSF Waterman Award, under grant no.\ 1249349.


\bibliographystyle{alpha}

\newcommand{\etalchar}[1]{$^{#1}$}

\appendix

\section{Properties of H Indices}
\label{app:H}
\begin{lemma}\label{Hproperties}
    Let $g:\B^n\to[0,\infty)$. Define
    \[\Hi(g):=\inf\left\{ h\in[0,\infty)
        :\left| \{x\in \B^n:g(x)> h\}\right|
            \leq 2^h\right\} .\]
    Then
    \begin{enumerate}
    \item $\Hi(g)\in[0,n]$
    \item $\Hi(g)\leq \max_x g(x)$
    \item The number of $x\in\B^n$ for which
        $g(x)>\Hi(g)$ is at most $2^{\Hi(g)}$
        (equivalently, the infimum in the definition
        of $\Hi(g)$ is actually a minimum)
    \item If $g^\prime:\B^n\to[0,\infty)$ is
        such that $g(x)\leq g^\prime(x)$
        for all $x\in\B^n$,
        then $\Hi(g)\leq\Hi(g^\prime)$
    \item If $\alpha:[0,\infty)\to[0,\infty)$ is
        an increasing function, then
        $\Hi(\alpha\circ g)\leq
            \max\{\Hi(g),\alpha(\Hi(g))\}$.
    \item There are at least $2^{\Hi(g)}$ inputs $x\in\B^n$
        with $g(x)\geq\Hi(g).$
    \end{enumerate}
\end{lemma}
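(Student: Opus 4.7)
The plan is to establish the six items in the given order, with the crucial technical step being two symmetric limit arguments built on the monotonicity of the non-increasing integer-valued function $F(h) := |\{x \in \B^n : g(x) > h\}|$.

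Items 1 and 2 are immediate from the definition: $h = n$ satisfies $F(n) \leq 2^n$ trivially, so $\Hi(g) \leq n$, and $h = \max_x g(x)$ makes $F(h) = 0$, so $\Hi(g) \leq \max_x g(x)$; the lower bound $\Hi(g) \geq 0$ is built into the infimum by restricting to $h \in [0,\infty)$.

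For item 3, I would show that the infimum is attained. Let $h^* = \Hi(g)$ and choose $h_n \downarrow h^*$ with $F(h_n) \leq 2^{h_n}$. A short argument shows $\bigcup_n \{x : g(x) > h_n\} = \{x : g(x) > h^*\}$: the forward inclusion is immediate, and the reverse uses that $g(x) > h^*$ forces $h_n < g(x)$ for sufficiently large $n$. Since $\B^n$ is finite, this nested union stabilizes, so $F(h^*) \leq 2^{h_n}$ for large $n$, and letting $n \to \infty$ gives $F(h^*) \leq 2^{h^*}$. Item 4 is then immediate: $g \leq g'$ pointwise implies $\{x : g(x) > h\} \subseteq \{x : g'(x) > h\}$ for every $h$, so every $h$ that works for $g'$ works for $g$. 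For item 5, set $M = \max\{h^*, \alpha(h^*)\}$; since $\alpha$ is increasing, $\alpha(a) > \alpha(b)$ forces $a > b$, so $\{x : \alpha(g(x)) > M\} \subseteq \{x : g(x) > h^*\}$, and this last set has size at most $2^{h^*} \leq 2^M$ by item 3.

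Item 6 is the mirror-image limit argument. For every $h < h^*$, $h$ is excluded from the defining set of the infimum, so $F(h) > 2^h$ strictly. Choosing $h_n \uparrow h^*$, one checks that $\bigcap_n \{x : g(x) > h_n\} = \{x : g(x) \geq h^*\}$, and finiteness again makes this intersection stabilize, giving $|\{x : g(x) \geq h^*\}| = F(h_n) > 2^{h_n}$ for large $n$; taking $n \to \infty$ yields $|\{x : g(x) \geq h^*\}| \geq 2^{h^*}$. The only subtle point throughout is the side-continuity bookkeeping: approaching $h^*$ from above lands on $\{x : g(x) > h^*\}$ (used for item 3), while approaching from below lands on $\{x : g(x) \geq h^*\}$ (used for item 6). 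Once the correct limiting set is identified in each case, the finiteness of $\B^n$ eliminates any analytical subtlety.
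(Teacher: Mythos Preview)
Your proposal is correct and follows essentially the same approach as the paper: items 1--5 match the paper's arguments almost verbatim, and for item 6 you give a direct ``mirror'' limit argument approaching $h^*$ from below, whereas the paper phrases the same idea as a short proof by contradiction (pick $\epsilon>0$ small enough that $h^*-\epsilon$ would land in the defining set). One small point to tidy: in your item 6, the case $h^*=0$ should be handled separately, since you cannot approach from below within $[0,\infty)$; but the claim is then trivial because every input satisfies $g(x)\geq 0$.
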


\begin{proof}
Let $S_g(h)=\{x\in\B^n:g(x)>h\}$
and let $H_g=\{h\in[0,\infty):|S_g(h)|\leq 2^h\}$.
Then $\Hi(g)=\inf H_g$.
Part $1$ follows from noticing that for all $h$,
$S_g(h)\subseteq \B^n$, so $|S_g(h)|\leq 2^n$,
whence $n\in H_g$. Part $2$ follows from
noticing that $S_g(\max_x g(x))$ is empty,
so $\max_x g(x)\in H_g$.

To show $3$, we show that $H_g$ contains its infimum.
Consider an infinite
decreasing sequence $h_1,h_2,\ldots\in H_g$
that converges to $\Hi(g)$.
Then the sequence
$|S_g(h_1)|,|S_g(h_2)|,\ldots$ is a non-decreasing
sequence of integers which is bounded above by $2^n$.
In addition, $S_g(h_i)\subseteq S_g(h_{i+1})$ for all $i$.
It follows that there is some $\ell$ such that
$S_g(h_i)=S_g(h_{\ell})$ for all $i\geq\ell$.
For each $x\in S_g(h_{\ell})$,
we have $g(x)>h_\ell>\Hi(g)$,
and for each $x\notin S_g(h_{\ell})$,
we have $g(x)\leq h_i$ for all $i$.
It follows that $g(x)\leq\Hi(g)$
for each $x\notin S_g(h_{\ell})$, so
$S_g(\Hi(g))=\{x\in\B^n:g(x)>\Hi(g)\}=S_g(h_i)$
for all $i\geq \ell$.
Finally, since $h_i\in H_g$ for all $i$, we have
$|S_g(\Hi(g))|=|S_g(h_i)|\leq 2^{h_i}$
for all $i\geq\ell$. From this it follows that
$|S_g(\Hi(g))|\leq \lim_{i\to\infty} 2^{h_i}=2^{\Hi(g)}$,
so $\Hi(g)\in H_g$.

We now show $4$.
If $g^\prime$ is point-wise greater or equal to $g$, then
$S_g(\Hi(g^\prime))\subseteq S_{g^\prime}(\Hi(g^\prime))$.
Since $\Hi(g^\prime)\subseteq H_g$, we have
$|S_{g^\prime}(\Hi(g^\prime))|\leq 2^{\Hi(g^\prime)}$,
so $|S_g(\Hi(g^\prime))|\leq 2^{\Hi(g^\prime)}$.
Thus $\Hi(g^\prime)\in H_g$, so
$\Hi(g)=\inf H_g\leq \Hi(g^\prime)$.

We prove $5$.
Let $\alpha$ be an increasing function.
We have
\[S_g(\Hi(g))=\{x\in \B^n:g(x)>\Hi(g)\}
    =\{x\in \B^n:\alpha\circ g(x)>\alpha(\Hi(g))\}
    =S_{\alpha\circ g}(\alpha(\Hi(g)).\]
Thus
\[|S_{\alpha\circ g}(\max\{\Hi(g),\alpha(\Hi(g))\})|
    \leq|S_{\alpha\circ g}(\alpha(\Hi(g))|
    =|S_g(\Hi(g))|\leq 2^{\Hi(g)}
    \leq 2^{\max\{\Hi(g),\alpha(\Hi(g))\}}\]
so $\max\{\Hi(g),\alpha(\Hi(g))\}\in H_{\alpha\circ g}$.
Hence
$\Hi(\alpha\circ g)\leq\max\{\Hi(g),\alpha(\Hi(g))\}$.

Finally, we show $6$. If it was false,
there would be less than $2^{\Hi(g)}$ inputs
with $g(x)\geq\Hi(g)$. Thus there is some $\epsilon>0$
such that there are less than $2^{\Hi(g)-\epsilon}$ inputs
with $g(x)\geq\Hi(g)>\Hi(g)-\epsilon$. But this implies
$\Hi(g)-\epsilon\geq\Hi(g)$, a contradiction.

\end{proof}

\section{Proof of \lem{sauer}}
\label{app:sauer}

\sauer*

\begin{proof}
Let $d$ be the size of the largest set
that is shattered by $S$. Then the
Sauer-Shelah lemma \cite{Sau72}
states
\[|S|\leq\sum_{i=0}^d{N\choose i}.\]
A well-known bound states
\[\sum_{i=0}^d{N\choose i}\leq
2^{\mathcal{H}(d/N)N},\]
where $\mathcal{H}(d/N)$ is the
binary entropy of $d/N$. Then
\[\log_2 |S|\leq \mathcal{H}(d/N)N
=d\log_2(N/d)+(N-d)\log_2(1+d/(N-d))\]
\[\leq d\log_2(N/d)+d\log_2 e
=d\log_2 N-d\log_2 (d/e)
\leq d\log_2 N\]
(if $d\geq e$). Thus
\[d\geq \frac{\log |S|}{\log N}\]
unless $d\leq 2$.

The Sauer-Shelah lemma implies
$|S|\leq 1$ when $d=0$
and $|S|\leq N^2$ when $d=2$
(assuming $N\geq 2$).
The only problematic case is $d=1$ and $|S|=N+1$.
Thus, in all cases, we have $d\geq\log|S|/\log(N+1)$,
as desired.
\end{proof}

\end{document}